\numberwithin{equation}{section}
\newcommand*\Laplace{\mathop{}\!\mathbin\bigtriangleup}
\newtheorem{lemma}{Lemma}
\journal{Elsevier}
\begin{document}

\begin{frontmatter}

\title{UGKS and UGKWP Methods for Multiscale Simulation of Electrostatic Plasma in Quasineutral and Hydrodynamic Limits}

\author[a]{Zhigang PU}
\ead{zpuac@connect.ust.hk}
\author[a,b]{Kun XU\corref{cor1}}
\ead{makxu@ust.hk}
\cortext[cor1]{Cooresponding author}

\affiliation[a]{organization={Department of Mathematics, Hong Kong University of Science and Technology},
            addressline={Clear Water Bay, Kowloon},
            city={Hong Kong},
            country={China}}
\affiliation[b]{organization={Shenzhen Research Institute, Hong Kong University of Science and Technology},
            city={Shenzhen},
            country={China}}

\begin{abstract}

This study extends the Unified Gas-Kinetic Scheme (UGKS) and the Unified Gas-Kinetic Wave-Particle (UGKWP) method for electrostatic plasma modeling, ensuring the correct asymptotic limits with respect to both the Debye length and the mean free path. By coupling collision and transport processes within the numerical flux, the proposed approach effectively removes the hydrodynamic-limit constraint associated with the mean free path. In addition, a reformulated Poisson equation, coupled with the macroscopic moment equations, is introduced to overcome the inefficiency of the standard Poisson formulation in the quasineutral regime.
The accuracy and asymptotic consistency of the proposed schemes are verified through several benchmark tests, including linear and nonlinear Landau damping and the bump-on-tail instability. The results demonstrate that the methods robustly capture plasma dynamics across hydrodynamic and quasineutral regimes, without resolution constraints imposed by either the Debye length or the mean free path.

\end{abstract}

\begin{keyword}
unified gas-kinetic wave-particle method \sep unified gas-kinetic scheme \sep unified preserving \sep asymptotic preserving \sep reformulated Poisson equation
\end{keyword}

\end{frontmatter}


\section{Introduction}
\label{introduction}

Plasma physics is fundamental to the development of advanced technologies such as hypersonic flight, nuclear fusion, and semiconductor processing~\cite{chen1984introduction}. A major challenge in its computational modeling lies in the inherently multiscale nature of plasma behavior, which is governed by several key dimensionless parameters. The Knudsen number, defined as $Kn = l_{mfp} / L$, where $l_{mfp}$ denotes the mean free path and $L$ is the characteristic flow length, quantifies the degree of deviation of a flow from local thermodynamic equilibrium. A typical example arises in hypersonic flows, where a dense, continuum shock region coexists with a highly rarefied wake. Another important scale is the Debye length, $\lambda = \sqrt{\epsilon_0 k_B T_e / n_e e^2}$, where $\epsilon_0$ is the permittivity of free space, $k_B$ is the Boltzmann constant, $T_e$ is the electron temperature, $n_e$ is the electron number density, and $e$ is the elementary charge. In sheath problems—common in low-temperature plasmas and hypersonic applications—$\lambda$ is comparable to the sheath thickness inside the sheath region but becomes negligibly small relative to the system scale outside it. Furthermore, the Larmor radius introduces an additional magnetohydrodynamic scale. For instance, in magnetic reconnection, the Larmor radius is much smaller than the system scale outside the current sheet but comparable to the sheet’s width inside it.
Therefore, the development of efficient computational frameworks capable of seamlessly bridging these disparate physical scales—without being restricted by parameters such as the Knudsen number, Debye length, or Larmor radius—is essential for accurate and robust plasma modeling.

This study aims to develop numerical schemes for electrostatic plasma that are asymptotic-preserving (AP) with respect to both the mean free path and the Debye length.
We begin with a concise review of the relevant literature. Research on collisionless electrostatic plasma has established the theoretical foundation 
for AP methods concerning the Debye length. Early work by Mason \cite{mason1981implicit} and later by Degond \cite{degond2010asymptotic} 
introduced the key concept of analytically coupling the Poisson equation with the fluid moment equations, 
under the assumption that the macroscopic numerical flux derived from the moment equations remains consistent 
with the kinetic description within a single time step. 
In this formulation, the Poisson equation is reconstructed by integrating macroscopic moment quantities, yielding what is commonly known 
as the Reformulated Poisson Equation (RPE). 
The RPE approach was first implemented within Particle-in-Cell (PIC) methods and subsequently extended to deterministic solvers 
by Crouseilles et al. \cite{belaouar2009asymptotically,crouseilles2016multiscale}. More recently, Liu et al.\cite{liu2020discrete} 
incorporated this methodology into the Discrete Unified Gas-Kinetic Scheme (DUGKS), thereby simultaneously addressing the numerical stiffness arising in the small mean-free-path regime.

Significant progress has been achieved in developing asymptotic-preserving (AP) schemes for handling the mean free path in rarefied gas dynamics. A major milestone in this area is the Unified Gas-Kinetic Scheme (UGKS) proposed by Xu~\cite{xu2010unified}, which enables seamless simulations across the rarefied–continuum flow regimes. Its core innovation lies in the coupled collision–transport numerical flux, derived from the integral solution of the Bhatnagar–Gross–Krook (BGK) equation at cell interfaces, ensuring consistent macroscopic flux evaluation.
The versatility of UGKS has been widely demonstrated through its successful extensions to various physical domains, including radiation transfer\cite{sun2020multiscale}, granular flows\cite{liu2019unified}, and plasma dynamics\cite{liu_unified_2016}. Recently, Guo et al.\cite{guo2023unified} further established that UGKS possesses a broader property—termed Unified Preserving (UP)—which guarantees accurate recovery not only of the Euler equations but also of the Navier–Stokes limit. This UP property provides a substantial advantage over conventional AP schemes.
Building on the UGKS framework, the Unified Gas-Kinetic Wave–Particle (UGKWP) method~\cite{liu2021unified} was developed to improve computational efficiency in problems requiring large velocity-space resolution, such as hypersonic flows\cite{long2024nonequilibrium, CAO2026106896}. The UGKWP method has also demonstrated remarkable flexibility across a wide range of transport phenomena, including radiative transfer, phonon transport, plasma modeling, photon transport, and turbulent flows\cite{pu2025unified, hz9s-5qbm, yang2025wave}.
In addition, other effective multiscale methods—such as the General Synthetic Iterative Scheme (GSIS)\cite{su2020fast} and the Unified Stochastic Particle (USP) method\cite{fei2020unified}—have also exhibited strong potential for rarefied gas simulation and related multiscale transport problems.

In this work, we integrate the UGKWP method with the RPE approach to model electrostatic plasmas, and validate the proposed scheme against standard benchmark problems. This represents the first stochastic particle method capable of accurately simulating plasma dynamics without numerical constraints imposed by either the Debye length or the mean free path.
To establish a robust foundation, a UGKS–RPE scheme is first developed, upon which the UGKWP–RPE method is subsequently constructed for enhanced efficiency and multiscale capability. The integration of the RPE formulation endows the UGKWP method with the ability to effectively handle vanishingly small Debye lengths, a particularly valuable feature given UGKWP’s proven efficiency in hypersonic flow simulations.
Overall, the extended UGKWP–RPE framework offers a powerful and efficient tool for modeling multiscale electrostatic plasma flows, with great potential for application to hypersonic plasma environments.

The paper is organized as follows. Section \ref{sec:model} introduces the Vlasov-Poisson-BGK model and its reformulated version. Section \ref{sec:numerical} details the proposed numerical schemes. Section \ref{sec:numericalstudies} presents numerical validations. Finally, Section \ref{sec:conclusion} provides concluding remarks.

\section{Kinetic Model Equations}
\label{sec:model}

\subsection{Vlasov-Poisson-BGK model}
\label{sec:vlasovmodel}

For an electrostatic collisional plasma with a fixed ion background, the kinetic equation describing electron dynamics can be expressed as:

$$
\frac{\partial f}{\partial t} + \boldsymbol{u} \cdot \nabla_{\boldsymbol{x}} f + \frac{q\boldsymbol{E}}{m} \cdot \nabla_{\boldsymbol{u}} f = \frac{g - f}{\tau},
$$
where $f(t, \boldsymbol{x}, \boldsymbol{u})$ represents the electron distribution function at time $t$, spatial position $\boldsymbol{x}$, and microscopic velocity $\boldsymbol{u}$. The electron mass is denoted by $m$, and $g$ is a Maxwellian distribution.
$$
g = \frac{\rho}{(2\pi RT)^{D/2}} e^{-\frac{(\boldsymbol{u}-\boldsymbol{U})^2}{2RT}},
$$
where $\rho$ is the mass density, $R=k_B/m$ is the gas constant, $k_B$ is the Boltzmann constant, $T$ is the temperature, $D$ is the dimension, and $\boldsymbol{U}$ is the macroscopic velocity.
The electric field $\boldsymbol{E}$ is derived from the electric potential $\phi$ via $\boldsymbol{E} = -\nabla_{\boldsymbol{x}} \phi$, where $\phi$ is determined by the Poisson equation:

$$
\nabla_{\boldsymbol{x}}^2 \phi = -\frac{\rho_c}{\epsilon_0},
$$
where $\epsilon_0$ is the vacuum permittivity. The charge density $\rho_c$ is given by $\rho_c = q(n_0 - n_e)$, where $q = -e$ is an electron's charge, $n_0$ is number density of the background ion and $n_e = \frac{1}{m}\int f(\boldsymbol{x}, \boldsymbol{u}, t) d^3u$ is the electron number density.

Next, we derive the dimensionless system. First, define reference quantities as follows:
\begin{align*}
u_0=\sqrt{\frac{k_BT_0}{m_0}}, f_0=\frac{m_0n_0}{u_0^3}, \phi_0=\frac{k_B T_0}{e}, q_0=e,
\end{align*}
where $u_0$ is reference velocity, $T_0$ is reference temperature, $m_0$ is the reference mass, $f_0$ is reference distribution function, $n_0$ is reference number density, $\phi_0$ is reference electrical potential, $q_0$ is reference charge. Then the dimensionless quantities are given as
\begin{align*}
\hat{\lambda}=\lambda/l_0,\quad \lambda=\sqrt{\frac{\epsilon_0 k_B T_0}{e^2 n_0}}=\sqrt{\frac{\epsilon_0 m_0}{n_0 e^2}} \cdot \sqrt{\frac{k_B T_0}{m_e}}=\frac{u_{th}}{\omega_{p e}},\quad \omega_{pe}=\sqrt{\frac{n_{e}e^2}{m_e \epsilon_0}},
\end{align*}
where $l_0$ is the reference length, $\hat{\lambda}$ is the dimensionless Debye length, $\omega_{pe}$ is plasma frequency.
Then we get the dimensionless Vlasov-Poisson-BGK model, denoted as $\mathbf{P}^{\lambda,\tau}$:
\begin{align}
&\frac{\partial \hat{f}}{\partial \hat{t}} + \boldsymbol{\hat{u}}\cdot \nabla_{\hat{x}} \hat{f}+{\nabla_{\hat{x}}} \hat{\phi} \nabla_{\hat{u}} \hat{f}=\frac{\hat{g}^M - \hat{f}}{\hat{\tau}},\label{eq:electron_kinetic}\\
&\hat{\lambda}^2 \nabla_{\hat{x}}^2 \hat{\phi}=\hat{n}_e - 1\label{eq:poisson},
\end{align}
For simplicity, the hat symbol is omitted from the subsequent discussion.

This system exhibits two distinct multiscale behaviors. The first concerns the mean collision time, $\tau$. When $\tau \gg \Delta t$, the collision source term is non-stiff, placing the system in the rarefied regime. Conversely, when $\tau \ll \Delta t$, the collision source term becomes very stiff, and the system resides in the near-continuum regime. To model these two regimes continuously without requiring a timestep strictly smaller than the mean collision time, the UGKS and UGKWP are adopted in this work \cite{xu2010unified,guo2023unified}.

The second multiscale behavior arises from the Debye length, $\lambda$. In the quasineutral regime, where the mesh size $\Delta x$ is significantly larger than the Debye length ($\Delta x \gg \lambda$), charge separation effects decay at a rate substantially faster than the characteristic length scale of the mesh. Directly resolving the Debye length in this regime is computationally infeasible. However, a direct substitution of $\lambda = 0$ into the Poisson equation \eqref{eq:poisson} results in a singular form, reducing it to the quasineutrality constraint ${n}_i = {n}_e$ and rendering the electrostatic potential $\phi$ indeterminate. To circumvent this limitation, the Poisson equation is reformulated to capture large-scale electrostatic effects. This work employs a reformulated Poisson equation to determine $\phi$ within the quasineutral regime \cite{degond2010asymptotic}.

\subsection{Reformulated Vlasov-Poisson-BGK model}
\label{sec:rpemodel}

By taking the zeroth and first moments of electron kinetic equations \eqref{eq:electron_kinetic}, we obtain the density and momentum equations:
\begin{align}
& \frac{\partial n}{\partial t} + \nabla \cdot (n \boldsymbol{U}) = 0, \label{eq:electron_continuity} \\
& \frac{\partial }{\partial t}(n \boldsymbol{U}) + \nabla \cdot \mathbb{S} = n \nabla \phi, \label{eq:electron_momentum}
\end{align}
where the momentum densities $(n \boldsymbol{U})$ and momentum flux tensors $\mathbb{S}$ are defined as
\begin{equation*}
n \boldsymbol{U} = \frac{1}{m}\int \boldsymbol{u} f \, d\boldsymbol{u}, \quad \mathbb{S} = \frac{1}{m} \int \boldsymbol{u} \otimes \boldsymbol{u} f \, d\boldsymbol{u},
\end{equation*}
with $\otimes$ denoting the tensor product.
Differentiating \eqref{eq:electron_continuity} with respect to time and taking the divergence of \eqref{eq:electron_momentum}, we combine the results to eliminate $\nabla\cdot(n \boldsymbol{U})$:
\begin{equation}
-\partial_t^2 n + \nabla^2 : \mathbb{S} = \nabla \cdot \left(n\nabla \phi\right), \label{eq:rpe_pre}
\end{equation}
where $\nabla^2$ denotes the Hessian tensor and $:$ represents the tensor contracted product. Substituting the original Poisson equation \eqref{eq:poisson} into \eqref{eq:rpe_pre} produces
\begin{equation}
-\lambda^2 \partial_t^2 \Delta \phi + \nabla^2 : \mathbb{S} = \nabla \cdot \left( n\nabla \phi\right).
\label{eq:rpe0}
\end{equation}
Moving $\phi$ to the left-hand side yields the reformulated Poisson equation:
\begin{equation}
\nabla \cdot \left[\left(n + \lambda^2 \partial_t^2\right) \nabla \phi\right] = \nabla^2 : \mathbb{S}. \label{eq:rpe}
\end{equation}
Combined with \eqref{eq:electron_kinetic}, we get the reformulated Vlasov-Poisson-BGK model, denoted as $\mathbf{RP}^{\lambda,\tau}$,
\begin{align}
&\frac{\partial f}{\partial t} + \boldsymbol{u} \cdot \nabla_{\boldsymbol{x}} f - \nabla_{\boldsymbol{x}} \phi \cdot \nabla_{\boldsymbol{u}} f =  \frac{g-f}{\tau}, \label{eq:reform_vlasov} \\
&\nabla \cdot \left[\left(n + \lambda^2 \partial_t^2\right) \nabla \phi\right] = \nabla^2 : \mathbb{S}.\label{eq:rpe_poisson_final}
\end{align}

\begin{lemma}
The Vlasov-Poisson-BGK system $\mathbf{P}^{\lambda,\tau}$ is equivalent to the reformulated Vlasov-Poisson-BGK system $\mathbf{RP}^{\lambda,\tau}$ if and only if the initial potential $\phi_0$ and its time derivative $\phi^{'}_0$ satisfy the following initial conditions:
\begin{align*}
& \lambda^2 \Delta \phi_0 = (n - 1)_0  \\
& \lambda^2 \Delta \phi_0' = \left. \nabla \cdot  (n \boldsymbol{U}) \right|_{t=0}. \label{eq:rpe_ic2}
\end{align*}
\end{lemma}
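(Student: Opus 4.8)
The plan is to prove the two implications separately, treating the statement as a genuine ``if and only if'' with the backward direction $\mathbf{RP}^{\lambda,\tau}\Rightarrow\mathbf{P}^{\lambda,\tau}$ as the substantive one. The organizing idea is to introduce the \emph{Poisson defect}
\[
r(t,\boldsymbol{x}) := \lambda^2 \Delta \phi - (n-1),
\]
whose identical vanishing is precisely the original Poisson equation \eqref{eq:poisson}. I would show that along any solution of $\mathbf{RP}^{\lambda,\tau}$ the defect obeys the trivial evolution $\partial_t^2 r = 0$, so that $r$ is affine in time; the two prescribed initial conditions then pin the two ``constants of integration'' $r(0,\cdot)$ and $\partial_t r(0,\cdot)$ to zero, forcing $r\equiv 0$.

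For the forward direction I would start from a solution $(f,\phi)$ of $\mathbf{P}^{\lambda,\tau}$. Taking the zeroth and first velocity moments of the kinetic equation gives the continuity and momentum equations \eqref{eq:electron_continuity}--\eqref{eq:electron_momentum}, and the manipulation already carried out in \eqref{eq:rpe_pre}--\eqref{eq:rpe} — differentiating continuity in time, taking the divergence of momentum, and substituting \eqref{eq:poisson} — produces the reformulated Poisson equation \eqref{eq:rpe_poisson_final}. The two initial conditions are then obtained for free: the first is \eqref{eq:poisson} evaluated at $t=0$, and the second is its first time derivative at $t=0$, with $\partial_t n$ replaced using continuity \eqref{eq:electron_continuity}. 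Thus every solution of $\mathbf{P}^{\lambda,\tau}$ solves $\mathbf{RP}^{\lambda,\tau}$ and meets the stated data.

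The backward direction carries the real content. Here I would exploit that the kinetic (Vlasov--BGK) equation is common to both models, so taking moments of a solution of $\mathbf{RP}^{\lambda,\tau}$ reproduces \emph{the same} fluid hierarchy, and in particular the purely kinematic identity \eqref{eq:rpe_pre}, namely $\partial_t^2 n = \nabla^2 : \mathbb{S} - \nabla\cdot(n\nabla\phi)$, which never invoked the Poisson constraint. On the other hand, after expanding the divergence and commuting $\partial_t^2$ with the spatial derivatives, the reformulated equation \eqref{eq:rpe_poisson_final} reads exactly $\lambda^2\partial_t^2\Delta\phi = \nabla^2 : \mathbb{S} - \nabla\cdot(n\nabla\phi)$. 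Subtracting the two right-hand sides — which are literally the same expression, since $\mathbb{S}$ and $n$ are the common moments of the shared $f$ — yields $\partial_t^2 r = \lambda^2\partial_t^2\Delta\phi - \partial_t^2 n = 0$. Integrating twice gives $r(t,\boldsymbol{x}) = r(0,\boldsymbol{x}) + t\,\partial_t r(0,\boldsymbol{x})$; the first initial condition gives $r(0,\cdot)=0$, and the second, combined once more with continuity ($\partial_t n|_{t=0} = -\nabla\cdot(n\boldsymbol{U})|_{t=0}$), gives $\partial_t r(0,\cdot)=0$. Hence $r\equiv 0$, the original Poisson equation \eqref{eq:poisson} is restored at all times, and $(f,\phi)$ solves $\mathbf{P}^{\lambda,\tau}$.

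I expect the only delicate points to be bookkeeping rather than deep analysis: correctly tracking the sign in the second initial condition through the continuity equation, legitimately commuting $\partial_t^2$ with $\Delta$, and justifying the velocity-space integration by parts (requiring the usual decay of $f$ and, in the periodic benchmark geometry, invertibility of the Laplacian so that the RPE together with its two data determines $\phi$ uniquely). The conceptual heart — and the reason the equivalence is exact rather than merely asymptotic — is the clean cancellation $\partial_t^2 r=0$, which reflects that the RPE is nothing but the Poisson constraint differentiated twice in time, and is therefore faithful to it precisely when the two time-integration constants lost in the reformulation are restored by the prescribed initial conditions.
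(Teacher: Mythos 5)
Your proposal is correct and follows essentially the same route as the paper: the forward direction retraces the derivation of \eqref{eq:rpe_pre}--\eqref{eq:rpe}, and the backward direction subtracts the moment identity \eqref{eq:rpe_pre} (which holds for any solution of the shared kinetic equation) from the reformulated equation to obtain $\partial_t^2\bigl(\lambda^2\Delta\phi - n\bigr)=0$, after which the two initial conditions kill the two integration constants. Your ``Poisson defect'' $r$ and the explicit affine-in-time integration merely make the paper's terse final step more transparent; there is no substantive difference in method.
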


\begin{proof}
On the one hand, following the derivation from Eq.\eqref{eq:electron_continuity} to \eqref{eq:rpe}, solutions $(f, \phi)$ to the $\mathbf{P}^{\lambda,\tau}$ system satisfy the reformulated system $\mathbf{RP}^{\lambda,\tau}$.
On the other hand,  solutions $(f, \phi)$ to the reformulated system $\mathbf{RP}^{\lambda,\tau}$ must also satisfy the intermediate relations \eqref{eq:rpe_pre} and \eqref{eq:rpe0}. Subtracting these two relations yields:
\begin{equation*}
\partial_t^2 \left(\lambda^2 \Delta \phi - n\right) = 0.
\end{equation*}
This implies that if the initial potential $\phi_0$ and its time derivative $\phi_0'$ satisfy the following initial conditions:
\begin{align*}
& \lambda^2 \Delta \phi_0 = (n - n_0)_0, \\
& \lambda^2 \Delta \phi_0' = \left. \nabla \cdot (n \boldsymbol{U}) \right|_{t=0},
\end{align*}
then the Poisson equation \eqref{eq:poisson} is satisfied for all $t \geq 0$. Therefore, the system $\mathbf{P}^{\lambda,\tau}$ and $\mathbf{RP}^{\lambda,\tau}$ is equivalent.
\end{proof}

\subsection{Asymptotic limit}
\label{sec:asymlimit}

\subsubsection{Quasineutral limit}
\label{sec:qnlimit}

Taking the limit of Debye length $\lambda \rightarrow 0$ of system $\mathbf{RP}^{\lambda,\tau}$ produces the well-defined quasineutral system $\mathbf{RP}^{0,\tau}$:
\begin{align}
&\frac{\partial f}{\partial t} + \boldsymbol{u} \cdot \nabla_{\boldsymbol{x}} f + \nabla_{\boldsymbol{x}} \phi \cdot \nabla_{\boldsymbol{u}} f =  \frac{g-f}{\tau},  \\
&\nabla \cdot \left( n \nabla \phi\right) = \nabla^2 : \mathbb{S}\label{eq:rpe_qn},
\end{align}
with initial constraints
\begin{align*}
&(n_i - n_e)|_{t=0} = 0, \label{eq:quasi_ic1} \\
&\left. \nabla \cdot \left[(n \boldsymbol{U})_i - (n \boldsymbol{U})_e\right] \right|_{t=0} = 0.
\end{align*}

\subsubsection{Hydrodynamical and quasineutral limit}
\label{sec:hqlimit}

This first order Chapman-Enskog expansion of the distribution function $f$ gives the two fluid Navier-Stokes equation \cite{xu2001},
\begin{equation*}
    \begin{aligned}
        & \frac{\partial \rho}{\partial t} + \nabla \cdot (\rho \boldsymbol{U}) = 0,  \\
        & \frac{\partial }{\partial t}(\rho \boldsymbol{U}) + \nabla \cdot (\rho\boldsymbol{U}\boldsymbol{U} + p\mathbb{I}-\mu\mathbb{\sigma}) = \nabla \phi, \\
        & \frac{\partial }{\partial t}(\rho \mathscr{E}) + \nabla \cdot (((\rho\mathscr{E}) + p)\boldsymbol{U} - \mu \mathbb{\sigma} + \kappa\nabla T) =   \nabla \phi\cdot\boldsymbol{U},  \\
        &\nabla \cdot \left[\left(n + \lambda^2 \partial_t^2\right) \nabla \phi\right] = \nabla^2 : \mathbb{S},
    \end{aligned}
\end{equation*}
where $\mu, \kappa$ are the viscosity and thermal coefficient. $\mathbb{\sigma}$ is the strain tensor.

\section{UGKS and UGKWP method}
\label{sec:numerical}

\subsection{General Framework}
\label{sec:framework}

In this subsection, we introduce the numerical scheme $\mathbf{RP}^{\lambda,\tau}_h$ to solve the system $\mathbf{RP}^{\lambda,\tau}$, denoted as UGKS-RPE and UGKWP-RPE, depending on the fluid solver UGKS or UGKWP. $h$ here represents the discretized mesh size. In comparison, numerical scheme $\mathbf{P}^{\lambda,\tau}_h$ to solve the system $\mathbf{P}^{\lambda,\tau}$ is denoted as UGKS-PE and UGKWP-PE. The scheme is based on the finite volume method (FVM), where the cell averaged macroscopic conservative variables $\boldsymbol{W}_i = (\rho_i, (\rho\boldsymbol{U})_i, (\rho\mathscr{E})_i)$ on a physical cell $\Omega_i$, microscopic distribution function $f_i$ and electrical potential $\phi_i$ are defined as
$$
\boldsymbol{W}_i = \frac{1}{|\Omega_i|}\int_{\Omega_i}\boldsymbol{W}(\boldsymbol{x})\mathrm{d}\boldsymbol{x},\quad f_i = \frac{1}{|\Omega_i|}\int_{\Omega_i}f(\boldsymbol{x})\mathrm{d}\boldsymbol{x},\quad \phi_i = \frac{1}{|\Omega_i|}\int_{\Omega_i}\phi(\boldsymbol{x})\mathrm{d}\boldsymbol{x},
$$
where $|\Omega_i|$ is the volume of cell $\Omega_i$. A discretized time step is defined as $\Delta t=t^{n+1}-t^n$.  To evolve $(\boldsymbol{W}_i^n, f_i^n, \phi_i^n)$  to $(\boldsymbol{W}_i^{n+1}, f_i^{n+1}, \phi_i^{n+1})$, the operator splitting method is used,
\begin{enumerate}[\textbf{Step}1:]
\item Electric field evolution: $$\phi^{n+1}_i = \mathcal{L}_{es}(\phi^n_i, \boldsymbol{W}^n_i)$$
\item Transport and collision process: $$ \boldsymbol{W}^{*}_i, f^{*}_i = \mathcal{L}_{tr}(\boldsymbol{W}^{n}_i, f^{n}_i, \Delta t)$$
\item Acceleration: $$ \boldsymbol{W}^{n+1}_i, f^{n+1}_i = \mathcal{L}_{ac}(\boldsymbol{W}^{*}_i, f^{*}_i, \phi^{n+1}, \Delta t)$$
\end{enumerate}
In this formulation, $\mathcal{L}_{es}$ represents the schemes for the evolution of the electrostatic field as introduced in section \ref{sec:les}, $\mathcal{L}_{tr}$ represents the numerical schemes for the coupled transport and collision processes, implemented using the UGKS and UGKWP method, as introduced in section \ref{sec:ugks} and \ref{sec:ugkwp}. $\mathcal{L}_{ac}$ denotes the schemes governing particle acceleration due to the electric field as shown in section \ref{sec:ugks} and \ref{sec:ugkwp}.  Specifically, the above schemes separately solve the following sub-systems of $\mathbf{RP}^{\lambda, \tau}$,
\begin{align}
    &\mathcal{L}_{es}:\nabla \cdot \left[\left(n + \lambda^2 \partial_t^2\right) \nabla \phi\right] = \nabla^2 : \mathbb{S},\label{eq:les}\\
    &\mathcal{L}_{tr}: \frac{\partial \boldsymbol{W}}{\partial t} + \nabla\cdot \mathbb{F}_{\boldsymbol{W}}=0 ,\quad \frac{\partial f}{\partial t}+\boldsymbol{u} \cdot \nabla_{\boldsymbol{x}} f = \frac{g-f}{\tau},\label{eq:ltr}\\
    &\mathcal{L}_{ac}: \frac{\partial \boldsymbol{W}}{\partial t} = \boldsymbol{S},\quad \frac{\partial f}{\partial t}- \frac{q\nabla{\phi}}{m} \cdot \nabla_{\boldsymbol{u}} f = 0, \label{eq:lac}
\end{align}
where $\boldsymbol{S}$ is the source term due to electrostatic forces. Detailed descriptions of all these schemes are provided in the subsequent section.

\subsection{Electrical field evolution scheme}
\label{sec:les}

First, we focus on solving $\mathcal{L}_{es}$. Eq.\eqref{eq:les} can be discretized as
\begin{equation}
\lambda^2 \frac{\Laplace{\phi}^{n+1}-2\Laplace{\phi}^n+\Laplace{\phi}^{n-1}}{\Delta t^2} + \nabla\cdot(n\nabla\phi^{n+1}) = \nabla^2:\mathbb{S}^{n}.
\label{eq:rpe_dis}
\end{equation}
The discretization of the moment equations can be written as
\begin{align}
   &n^{n+1} = n^k - \Delta t \nabla \cdot (n\boldsymbol{u})^n ,\label{eq:mass_discretized}\\
   &(n\boldsymbol{u})^{n+1} = (n\boldsymbol{u})^{n} - \Delta t \nabla \cdot \mathbb{S}^{n} + \Delta t n^n \nabla \phi^{n+1}.
   \label{eq:moment_discretized}
\end{align}
Take divergence of Equation \ref{eq:moment_discretized},
\begin{align}
   &\nabla\cdot(n\boldsymbol{u})^{n+1} = \nabla\cdot(n\boldsymbol{u})^{n} - \Delta t \nabla^2 : \mathbf{S}^{n} + \Delta t\nabla( n^n \nabla \phi^{n+1}),
   \label{eq:moment_discretized_div}
\end{align}
and the combine Equation \ref{eq:moment_discretized_div} and \ref{eq:rpe_dis}, we have
\begin{equation}
    \lambda^2 (\Laplace{\phi}^{n+1}-2\Laplace{\phi}^n+\Laplace{\phi}^{n-1}) = \Delta t(\nabla\cdot(n\boldsymbol{u})^{n} - \nabla\cdot(n\boldsymbol{u})^{n+1}).
    \label{eq:rpe_dis2}
\end{equation}
Combine Equation \ref{eq:mass_discretized} and \ref{eq:rpe_dis2}, we have
\begin{equation}
    \lambda^2 (\Laplace{\phi}^{n+1}-2\Laplace{\phi}^n+\Laplace{\phi}^{n-1}) = n^{n+2} - 2 n^{n+1} + n^{n}.
\end{equation}
Note $\lambda^2\Laplace{\phi}^n = n^n-1$ and $\lambda^2\Laplace{\phi}^{n-1} = n^{n-1}-1$, the above system is simplified as
\begin{equation}
\lambda^2 \Laplace{\phi}^{n+1} = n^{n+2} - 1.
\label{eq:rpe3}
\end{equation}
Similarly, we have $\lambda^2 \Laplace{\phi}^{n} = n^{n+1} - 1$ and $\lambda^2 \Laplace{\phi}^{n-1} = n^{n} - 1$, substitute these two equations into Equation \ref{eq:rpe_dis} to eliminate $\phi^n$ and $\phi^{n+1}$, we have
 \begin{equation}
\lambda^2 \Laplace{\phi}^{n+1} + \Delta t^2\nabla\cdot(n\nabla\phi^{n+1}) = \Delta t^2\nabla^2:\mathbb{S}^{n} + n^{n+1} -\Delta t \nabla\cdot(n\boldsymbol{U})^n .
\label{eq:rpe4}
\end{equation}
Combine Equation \ref{eq:mass_discretized} and \ref{eq:rpe4}, we finally have
 \begin{equation}
\lambda^2 \Laplace{\phi}^{n+1} + \Delta t^2\nabla\cdot(n\nabla\phi^{n+1}) = \Delta t^2\nabla^2:\mathbb{S}^{n} + n^{n} -2\Delta t \nabla\cdot(n\boldsymbol{U})^n .
\label{eq:rpe4}
\end{equation}
The mass flux vector and stress tensor can be calculated by definition or approximated by the numerical flux provided by the flux solver.
$$
\nabla^2 : \mathbb{S}^n = \nabla\cdot\left[\frac{1}{|\Omega_i|}\sum_{s\in\partial\Omega_i}|l_s|(\mathscr{F}_{(n\boldsymbol{U})})_s^n\right],\quad \nabla\cdot (n\boldsymbol{U})^n = \left[\frac{1}{|\Omega_i|}\sum_{s\in\partial\Omega_i}|l_s|(\mathscr{F}_{n})_s^n\right].
$$
Now we can get $\phi^{n+1}$ from quantities at time step $t_n$.

\subsection{UGKS method}
\label{sec:ugks}

We next introduce the UGKS framework for $\mathcal{L}_{tr}$. The UGKS method has been extensively studied; refer to \cite{xu2010unified, liu_unified_2016} for more details. UGKS concurrently evolves the macroscopic conservation equations and the microscopic kinetic equation:
$$
\frac{\partial \boldsymbol{W}}{\partial t} + \nabla\cdot \mathbb{F}_{\boldsymbol{W}}=0 ,\qquad
\frac{\partial f}{\partial t}+\boldsymbol{u} \cdot \nabla_{\boldsymbol{x}} f = \frac{g-f}{\tau}.
$$

Macro- and micro-scale evolution are coupled by computing the macroscopic numerical flux at cell interfaces from the interface distribution function, while updating the distribution function itself using macroscopic quantities at the interface.

The finite-volume discretization of the macroscopic conservation law is
\begin{equation}
\boldsymbol{W}_i^{n+1} = \boldsymbol{W}_i^n - \frac{\Delta t}{|\Omega_i|}\sum_{s\in\partial\Omega_i}|l_s|\mathscr{F}_{\boldsymbol{W}_s},
\label{eq:FVM discretization}
\end{equation}
where $l_s$ denotes a cell interface belonging to $\partial\Omega_i$, with center $\boldsymbol{x}_s$, outward unit normal $\boldsymbol{n}_{s}$, and area $|l_s|$. The numerical flux $\mathscr{F}_{\boldsymbol{W}_s}$ is obtained from the distribution function at the interface:
\begin{equation}
\mathscr{F}_{\boldsymbol{W}_s} = \frac{1}{\Delta t}\int_{t^n}^{t^{n+1}} \int \boldsymbol{u}\cdot \boldsymbol{n}_s \, f_s \, \boldsymbol{\Psi} \, \mathrm{d}\boldsymbol{u} \, \mathrm{d}t,
\label{eq:FVM macroscopic flux}
\end{equation}
with $f_s \equiv f(\boldsymbol{x}_{s},\boldsymbol{u},t)$ and $\boldsymbol{\Psi}=(1,\boldsymbol{u},\frac{1}{2}|\boldsymbol{u}|^2)^\mathsf{T}$ the vector of collision invariants.
The microscopic evolution follows the discretized BGK equation:
\begin{equation}
f_{i}^{n+1} = f_{i}^{n} + \frac{1}{|\Omega_i|} \sum_{s\in\partial\Omega_i} |l_s| \mathscr{F}_{f_{i,s}}
+ \int_{t^n}^{t^{n+1}} \frac{g_{i} - f_{i}}{\tau} \, \mathrm{d}t,
\label{eq:discretized-BGK-UGKS}
\end{equation}
where $f_{i}^n$ is the cell-averaged distribution function. The distribution function flux across the interface $l_s$ is
\begin{equation}
\mathscr{F}_{f_{i,s}} = \frac{1}{\Delta t} \int_{t^n}^{t^{n+1}} \boldsymbol{u}\cdot \boldsymbol{n}_s \, f(\boldsymbol{x}_{s},\boldsymbol{u},t) \, \mathrm{d}t.
\label{eq:interface-flux-ugks}
\end{equation}

To evaluate $\mathscr{F}_{f_{i,s}}$, we employ the integral solution of the BGK equation \cite{xu2001}:

\begin{equation}
f(\boldsymbol{x}_0,\boldsymbol{u}, t) = \frac{1}{\tau} \int_{0}^{t} g(\boldsymbol{x}^{\prime}, \boldsymbol{u}, t^{\prime}) e^{-(t-t^{\prime})/\tau} \, \mathrm{d} t^{\prime}
+ e^{-t/\tau} f_{0}\!\left(\boldsymbol{x}_0 - \boldsymbol{u} t \right),
\label{eq:BGKsoln}
\end{equation}which expresses the distribution function at a point $\boldsymbol{x}_0$ and time $t$. Here $f_0$ is the initial distribution at $t=0$, and $g$ denotes the equilibrium state along the characteristic $\boldsymbol{x}^{\prime} = \boldsymbol{x}_0 - \boldsymbol{u} t^{\prime}$. For second-order accuracy, we approximate
$$
g(\boldsymbol{x}^{\prime},\boldsymbol{u},t^{\prime}) = g_0 + \boldsymbol{g}_{\boldsymbol{x}} \cdot (\boldsymbol{x}^{\prime} - \boldsymbol{x}_0) + g_t \, t^{\prime},
$$
with $g_0 \equiv g(\boldsymbol{x}_0,\boldsymbol{u},t_0)$, and
$$
f_{0}(\boldsymbol{x}_0 - \boldsymbol{u} t) = f_{0}(\boldsymbol{x}_0) - \boldsymbol{f}_{\boldsymbol{x}} \cdot (\boldsymbol{u}t).
$$
The time-dependent distribution function at the cell interface then takes the analytic form
\begin{equation}
f(\boldsymbol{x}_0,\boldsymbol{u}, t) = c_1 g_0 + c_2 \, \boldsymbol{g}_{\boldsymbol{x}} \cdot \boldsymbol{u}
+ c_3 g_t + c_4 f_0 + c_5 \, \boldsymbol{f}_{\boldsymbol{x}} \cdot \boldsymbol{u},
\label{eq:tddf-solution-ugks}
\end{equation}
with coefficients
$$
\begin{aligned}
c_1 &= 1 - e^{-t/\tau}, \\
c_2 &= t e^{-t/\tau} - \tau\left(1 - e^{-t/\tau}\right), \\
c_3 &= t - \tau\left(1 - e^{-t/\tau}\right), \\
c_4 &= e^{-t/\tau}, \\
c_5 &= -t e^{-t/\tau}.
\end{aligned}
$$
The equilibrium state $g_0$ follows from the compatibility condition.
$$
\boldsymbol{W}_0 = \int g_0 \, \boldsymbol{\Psi} \, \mathrm{d}\boldsymbol{u}
= \int f_0 \, \boldsymbol{\Psi} \, \mathrm{d}\boldsymbol{u},
$$
where its spatial and temporal derivatives satisfy
$$
\int \boldsymbol{g}_{\boldsymbol{x}} \, \boldsymbol{\Psi} \, \mathrm{d}\boldsymbol{\Xi} = \boldsymbol{W}_{\boldsymbol{x}},
\qquad
\int g_t \, \boldsymbol{\Psi} \, \mathrm{d}\boldsymbol{\Xi}
= - \int \boldsymbol{u} \cdot \boldsymbol{g}_{\boldsymbol{x}} \, \boldsymbol{\Psi} \, \mathrm{d}\boldsymbol{\Xi}.
$$
Substituting Eq.~\eqref{eq:tddf-solution-ugks} into Eq.~\eqref{eq:interface-flux-ugks} yields
$$
\mathscr{F}_{f_{i,s}} = \frac{1}{\Delta t} \, \boldsymbol{u} \cdot \boldsymbol{n}_s \,
\Bigl( q_1 g_0 + q_2 \, \boldsymbol{g}_{\boldsymbol{x}} \cdot \boldsymbol{u}
+ q_3 g_t + q_4 f_0 + q_5 \, \boldsymbol{f}_{\boldsymbol{x}} \cdot \boldsymbol{u} \Bigr),
$$
where
$$
\begin{aligned}
q_1 &= \Delta t - \tau\left(1 - e^{-\Delta t/\tau}\right), \\
q_2 &= 2\tau^2\left(1 - e^{-\Delta t/\tau}\right) - \tau\Delta t - \tau\Delta t e^{-\Delta t/\tau}, \\
q_3 &= \frac{\Delta t^2}{2} - \tau\Delta t + \tau^2\left(1 - e^{-\Delta t/\tau}\right), \\
q_4 &= \tau\left(1 - e^{-\Delta t/\tau}\right), \\
q_5 &= \tau\Delta t e^{-\Delta t/\tau} - \tau^2\left(1 - e^{-\Delta t/\tau}\right).
\end{aligned}
$$
Finally, the collision source term in Eq.~\eqref{eq:discretized-BGK-UGKS} is discretized using the Crank–Nicolson method \cite{xu2010unified}.

Then we introduce the acceleration scheme $\mathcal{L}_{acc}$ for UGKS. For the acceleration equation
$$
\frac{\partial f}{\partial t}- \frac{q\nabla{\phi}}{m} \cdot \nabla_{\boldsymbol{u}} f = 0,
$$
we use the second-order semi-Lagrangian method to discretize this equation. The algorithm employs a backward characteristic tracing approach with linear interpolation. Specifically, the precise solution along the characteristics is
$$
f^*_i(\boldsymbol{u}_j) = f^n_i(\boldsymbol{u}_j-\boldsymbol{a}\Delta t),
$$
where $\boldsymbol{u}_j$ represents velocity points in velocity space $\Omega_j$, $\boldsymbol{a}=-\frac{q\nabla\phi}{m}$ is the accleration term. The backward characteristic tracing is performed for each velocity grid point $\boldsymbol{u}_j$ as
$$
\boldsymbol{u}_{source} = \boldsymbol{u}_j-\boldsymbol{a}\Delta t,
$$
and the linear interpolation scheme is used
$$
f^n_i(\boldsymbol{u}_{source}) = (1-\alpha) f^n_i(\boldsymbol{u}_k) + \alpha f^n_i(\boldsymbol{u}_{k+1}),
$$
where $f_k$ and $f_{k+1}$ are distribution function values at adjacent grid points, $k$ is the index such that $v_k \leq v_{source} \leq v_{k+1}$. The interpolation weight $\alpha$ is calculated as
$$
\alpha = \frac{v_{source}-v_k}{v_{k+1}-v_k}.
$$
Higher-order interpolation can be used to improve accuracy. After updating the distribution functions, the macroscopic variables are recomputed.

\subsection{UGKWP method}
\label{sec:ugkwp}

In contrast to the UGKS, the UGKWP method employs a hybrid representation of the velocity distribution function. Specifically, the equilibrium flux is computed analytically, whereas stochastic particles represent the non-equilibrium (free-transport) flux. This hybrid approach significantly enhances computational efficiency, particularly in three-dimensional simulations.

According to Eq.\eqref{eq:BGKsoln}, the cumulative distribution function of the particle's free streaming time $t_{f}$ before the collision is given as
$$
F\left( t_{f} < t \right) = \exp\left( - t\text{/}\tau \right),
$$
from which the free stream time $t_{f}$ can be sampled as $t_{f} = - \tau\ln(\eta)$ with $\eta$ a random varible subject to the uniform distribution $\eta \sim U(0,1)$ . For a time step $\Delta t$, the particles with $t_{f} \geq \Delta t$ will undergo collisionless free streaming, and the particles with $t_{f} < \Delta t$ will experience collisional interactions. The procedure of updating particles in the UGKWP method is
\begin{enumerate}[\textbf{Step} 1:]

\item During the time step, stream each particle $P_k$ for a time period of $\min(\Delta t, t_{f,k})$. Then identify and retain the collisionless particles, while removing the collisional particles. Calculate the free-transport flux across cell interfaces contributed by the particles and accumulate the total conservative quantities of the particles $\boldsymbol{W}_i^p$;

\item Updating the macroscopic conservative variables based on the equilibrium and free transport numerical flux across the cell interfaces, then calculate the total conservative quantities of the collisional particles $\boldsymbol{W}_i^h$ from the updated conservative quantities $\boldsymbol{W}_i$ as $\boldsymbol{W}_i^h = \boldsymbol{W}_i - \boldsymbol{W}_i^p$;

\item At the end of the time step, resample the collisionless particles from the distribution $\boldsymbol{W}_i^h$ according to the updated conservative quantities, sample the free-streaming time $t_{f,k}$ for each particle $P_k$ from the cumulative distribution function $F(t_f < t) = \exp(-t/\tau)$.
\end{enumerate}

Refer to \cite{pu2025unified} for a more detailed description of the procedure. In the procedure outlined above, the algorithm for updating the macroscopic conservative variables is introduced in the following paragraph.

The discretized evolution equation of macroscopic variables is
\begin{equation}
    \boldsymbol{W}_{\boldsymbol{i}}^{*} = \boldsymbol{W}_{\boldsymbol{i}}^{\boldsymbol{n}} - \sum_{s}^{}{\frac{\Delta t}{\left| \Omega_{i} \right|}\left| l_{s} \right|(\mathscr{F}_{\boldsymbol{W}})_s^{g}} - \sum_{s}^{}{\frac{\Delta t}{\left| \Omega_{i} \right|}\left| l_{s} \right|(\mathscr{F}_{\boldsymbol{W}})_s^{f,w}} + \frac{1}{\left| \Omega_{i} \right|}(\mathscr{F}_{\boldsymbol{W}})_s^{f,p}  ,
\end{equation}
where $(\mathscr{F}_{\boldsymbol{W}})_s^g$ is the equilibrium flux, $(\mathscr{F}_{\boldsymbol{W}})_s^{f,w}$ and $(\mathscr{F}_{\boldsymbol{W}})_s^{f,p}$ are the free transport flux contributed by wave and particles.

The numerical flux of the macroscopic conservative variable can be decomposed into the equilibrium and free-streaming fluxes according to Eq.\eqref{eq:BGKsoln}. The equilibrium flux is
\begin{equation}
(\mathscr{F}_{\boldsymbol{W}})_s^g = \frac{1}{\Delta t}\int_{t^n}^{t^{n+1}}\boldsymbol{u}\cdot \boldsymbol{n}_s
\left[\frac{1}{\tau} \int_{0}^{t} g(\boldsymbol{x}^{\prime}, \boldsymbol{u}, \boldsymbol{\xi},t^{'}) e^{-\left(t-t^{\prime}\right) / \tau} \mathrm{d} t^{\prime}\right]
\boldsymbol{\Psi} \mathrm{d}\boldsymbol{\Xi}\mathrm{d}t,
\label{eq:macroscopic eq flux}
\end{equation}
and the free streaming flux is
\begin{equation}
(\mathscr{F}_{\boldsymbol{W}})_s^f = \frac{1}{\Delta t}\int_{t^n}^{t^{n+1}}\boldsymbol{u}\cdot \boldsymbol{n}_s
\left[e^{-t / \tau} f_{0}\left(\boldsymbol{x}-\boldsymbol{u} t \right)\right]
\boldsymbol{\Psi} \mathrm{d}\boldsymbol{\Xi}\mathrm{d}t.
\label{eq:macroscopic fr flux}
\end{equation}

The equilibrium flux can be calculated as
\begin{equation}
(\mathscr{F}_{\boldsymbol{W}})_s^g = \int_{}^{}\boldsymbol{u} \cdot \boldsymbol{n}_{s}\left( C_{1}g_{0} + C_{2}\boldsymbol{u} \cdot {\boldsymbol{g}}_{0x} + C_{3}g_{0t} \right)\Psi d\boldsymbol{\Xi},
\label{eq:eqflux-numerical}
\end{equation}
where the time integration coefficients are
\begin{align*}
&C_{1} = \Delta t - \tau\left( 1 - e^{- \Delta t\text{/}\tau} \right)
,\\
&C_{2} = 2\tau^2(1 - e^{- \Delta t\text{/}\tau}) - \tau\Delta t -\tau \Delta t  e^{- \Delta t\text{/}\tau},\\
&C_{3} = \frac{\Delta t^2}{2} - \tau\Delta t + \tau^2(1-e^{\Delta t/\tau}).
\end{align*}

The free stream flux $(\mathscr{F}_{\boldsymbol{W}})_s^f$ is given as
$$
(\mathscr{F}_{\boldsymbol{W}})_s^{f,w} = \int_{}^{}\boldsymbol{u} \cdot \boldsymbol{n}_s\left( C_{4}g_{0}^{+,c} + C_{5}\boldsymbol{u} \cdot {g}_{0x}^{+,c} \right)\boldsymbol{\Psi}d\boldsymbol{\Xi},
$$
where the time integration coefficients are
\begin{align*}
&C_{4} = \tau\left( 1 - e^{- \Delta t\text{/}\tau} \right) - \Delta te^{- \Delta t\text{/}\tau},
,\\
&C_{5} = \tau\Delta t e^{- \Delta t\text{/}\tau} - \tau^{2}\left( 1 - e^{- \Delta t\text{/}\tau} \right) + \frac{\Delta t^{2}}{2}e^{- \Delta t\text{/}\tau}.
\end{align*}

And the net particle flux $(\mathscr{F}_{\boldsymbol{W}})_s^{f,p}$ is calculated as
$$
(\mathscr{F}_{\boldsymbol{W}})_s^{f,p}=  \sum_{k \in P_{\partial\Omega_{i}^{+}}}^{}\boldsymbol{W}_{P_{k}} - \sum_{k \in P_{\partial\Omega_{i}^{-}}}^{}\boldsymbol{W}_{P_{k}},
$$
where $\boldsymbol{W}_{P_{k}} = \left( m_{k},m_{k}\boldsymbol{v}_{k},\frac{1}{2}m_{k}\boldsymbol{v}_{\boldsymbol{k,}}^{2} \right),P_{\partial\Omega_{i}^{-}}$
is the index set of the particles streaming out of cell $\Omega_{i}$
during a time step, and $P_{\partial\Omega_{i}^{+}}$ is the
index set of the particles streaming into cell $\Omega_{i}$.

Next, we introduce the acceleration operator $\mathcal{L}_{\text{acc}}$ within the UGKWP framework. Acceleration is applied distinctly to both components of the hybrid representation. For each discrete particle $P_k$, the velocity update follows Newton's second law:
$$
\boldsymbol{v}^{n+1}_k = \boldsymbol{v}^{n}_k + \boldsymbol{a} \Delta t,
$$
where $\boldsymbol{a}$ is the acceleration computed from the cell-averaged electric field. Concurrently, the macroscopic wave (or continuum) component $\boldsymbol{U}^h$ undergoes an analogous update:
$$
\boldsymbol{U}^{h,n+1} = \boldsymbol{U}^{h,n} + \boldsymbol{a} \Delta t.
$$
Following the synchronous updates of the particle and wave components, the macroscopic conserved quantities (density, momentum, energy) are recalculated to maintain a consistent hydrodynamic state.

\subsection{Analysis}
\label{sec:analysis}

\begin{lemma}
The scheme $\mathbf{RP}_h^{\lambda,\tau}$ to the system $\mathbf{RP}^{\lambda,\tau}$ with spatial and temporal discretizations $\Delta x$ and $\Delta t$, exhibits unified preserving property in the hydrodynamical limit and asymptotic preserving property in the quasineutral limit. Specifically, in the hydrodynamical limit ($\tau \rightarrow 0$), the scheme is unified-preserving, as the time step $\Delta t$ is independent of $\tau$, allowing for second-order accuracy in recovering the Navier-Stokes and Euler equations. In the quasineutral limit ($\lambda \rightarrow 0$), the scheme is asymptotic-preserving, with the spatial step $\Delta x$ and temporal step $\Delta t$ independent of $\lambda$ and $\omega_{pe}$.
\end{lemma}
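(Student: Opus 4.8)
The plan is to exploit the operator-splitting structure $\mathcal{L}_{es}\to\mathcal{L}_{tr}\to\mathcal{L}_{ac}$ and treat the two limits separately, since the hydrodynamic limit ($\tau\to0$) is controlled entirely by the transport--collision step $\mathcal{L}_{tr}$, whereas the quasineutral limit ($\lambda\to0$) is controlled entirely by the field step $\mathcal{L}_{es}$; the acceleration step $\mathcal{L}_{ac}$ is an explicit shift in velocity space that carries neither $\tau$ nor $\lambda$ and hence preserves both properties trivially. First I would record that the splitting is at most second order in $\Delta t$ and does not multiply the two small parameters together, so the limits may be analyzed independently and in either order.

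For the hydrodynamic limit I would analyze the $\tau$-dependence of the UGKS interface flux. Holding $\Delta t$ fixed and letting $\tau\to0$, all exponentials $e^{-\Delta t/\tau}\to0$, so the flux coefficients satisfy $q_1\to\Delta t$ and $q_4\to0$: the free-transport contribution $q_4 f_0$ vanishes and the interface distribution collapses onto the equilibrium together with its first-order Chapman--Enskog correction $-\tau(g_t+\boldsymbol{u}\cdot\boldsymbol{g}_{\boldsymbol{x}})$. Since the time-dependent flux is built from a linear-in-time expansion of $g$ and second-order slopes, the resulting macroscopic flux $\mathscr{F}_{\boldsymbol{W}_s}$ is a second-order gas-kinetic flux that reproduces the Navier--Stokes equations (with viscosity $O(\tau)$) and degenerates to the Euler flux as $\tau\to0$; this is precisely the unified-preserving mechanism established for UGKS in \cite{guo2023unified}, which I would reuse verbatim. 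Because these coefficients depend on $\tau$ only through the bounded groupings $\tau/\Delta t$ and $e^{-\Delta t/\tau}$, the CFL-limited step $\Delta t\sim\Delta x/u_{\max}$ never needs to resolve $\tau$. For the UGKWP variant I would additionally observe that the stochastic (particle) weight is proportional to $e^{-\Delta t/\tau}$, so as $\tau\to0$ the particle component disappears and the method reduces to the deterministic wave solver carrying the same coefficients $C_1,\dots,C_3$, yielding the identical hydrodynamic limit.

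For the quasineutral limit I would pass to $\lambda\to0$ in the discretized reformulated Poisson equation \eqref{eq:rpe4}. The decisive point is that the operator acting on $\phi^{n+1}$ is $\lambda^2\Laplace{\phi}^{n+1}+\Delta t^2\nabla\cdot(n\nabla\phi^{n+1})$, whose $\lambda$-independent part $\Delta t^2\nabla\cdot(n\nabla\phi^{n+1})$ is a variable-coefficient elliptic operator with strictly positive coefficient $n>0$. Hence, in contrast to the bare Poisson form $\lambda^2\Laplace\phi=n_e-1$, which degenerates to the indeterminate constraint $n_e=1$ as $\lambda\to0$, the reformulated discrete equation stays uniformly elliptic and uniquely solvable for $\phi^{n+1}$, producing a well-defined $O(1)$ limiting field equation consistent with the quasineutral system \eqref{eq:rpe_qn}. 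I would then check that its right-hand side is assembled from the same numerical fluxes $(\mathscr{F}_{n})_s$ and $(\mathscr{F}_{(n\boldsymbol{U})})_s$ used in $\mathcal{L}_{tr}$, so discrete consistency with $\mathbf{RP}^{0,\tau}$ holds, and that no coefficient in \eqref{eq:rpe4} scales like $1/\lambda$ or $\omega_{pe}\sim1/\lambda$; consequently neither $\Delta t$ nor $\Delta x$ is forced to resolve the Debye length or the plasma period.

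The hard part will be uniformity in $\lambda$ once the field is fed back through $\mathcal{L}_{ac}$: consistency of the limiting equation is immediate, but I must show that the $O(1)$ field $\phi^{n+1}$ does not let the scheme drift off the quasineutral manifold as it is iterated. I would address this by tracking the telescoping identities connecting \eqref{eq:mass_discretized}, \eqref{eq:rpe_dis2} and \eqref{eq:rpe3}, which collapse the evolution to $\lambda^2\Laplace\phi^{n+1}=n^{n+2}-1$ and thereby show that the discrete Gauss law is propagated exactly at every step; taking $\lambda\to0$ in this identity forces $n^{n+2}\to1$, i.e.\ quasineutrality is maintained \emph{uniformly} in $\lambda$. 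Establishing this stability, rather than mere consistency, is the delicate step, and it is what ultimately guarantees that the asymptotic-preserving property persists over long-time evolution.
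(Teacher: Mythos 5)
Your proposal is correct and follows essentially the same route as the paper's proof: the hydrodynamic limit is handled by letting $e^{-\Delta t/\tau}\to 0$ in the interface-flux coefficients so that the free-transport part vanishes and the equilibrium flux reduces to the Euler flux (with the $O(\tau)$ Chapman--Enskog correction giving Navier--Stokes), and the quasineutral limit is handled by setting $\lambda=0$ in the discretized reformulated Poisson equation, which remains a well-posed elliptic problem consistent with the quasineutral system. Your additional observation that the telescoping identity $\lambda^2\Laplace\phi^{n+1}=n^{n+2}-1$ propagates the discrete Gauss law and enforces quasineutrality uniformly in $\lambda$ is a welcome refinement that the paper leaves implicit in its Section~3.2 derivation, but it does not change the underlying argument.
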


\begin{proof}
In the hydrodynamic limit, when $\tau = 0$, the exponential term $e^{-\Delta t/\tau}$ vanishes, leading to the disappearance of the free transport flux as described by Eq. \eqref{eq:BGKsoln}. Consequently, only the equilibrium flux contributes to the numerical flux. As given by Eq. \eqref{eq:eqflux-numerical}, when $\tau = 0$, the numerical flux simplifies to
$$
(\mathscr{F}_{\boldsymbol{W}})_s^g = \Delta t\int \boldsymbol{u} \cdot \boldsymbol{n}_{s} \left( g_{0} + \frac{\Delta t}{2} g_{0t} \right)\Psi d\boldsymbol{\Xi},
$$
which is the numerical flux in the Euler limit. For small but finite relaxation times $\tau$, the free transport flux in Eq. \eqref{eq:BGKsoln} decays exponentially, and the numerical flux becomes dominated by the equilibrium flux. Within the equilibrium flux expression (Eq. \eqref{eq:eqflux-numerical}), the terms $C_1$, $C_2$, and $C_3$ are dominated by $\Delta t$, $-\tau \Delta t$, and $\frac{\Delta t^2}{2}-\tau\Delta t$, respectively. This results in the equilibrium numerical flux
$$
(\mathscr{F}_{\boldsymbol{W}})_s^g = \Delta t\int \boldsymbol{u} \cdot \boldsymbol{n}_{s} \left( g_{0} -\tau (\boldsymbol{u}\cdot\boldsymbol{g}_{0x}+g_{0t}) \frac{\Delta t}{2} g_{0t} \right)\Psi d\boldsymbol{\Xi},
$$
which is precisely the numerical flux in the Navier-Stokes limit \cite{xu2001, liu_unified_2021}. Due to the inclusion of coupled collision and free transport fluxes in the numerical flux, the scheme is not restricted by the collision time $\tau$.

In the quasi-neutral regime, as $\lambda \rightarrow 0$, Eq. \eqref{eq:rpe4} simplifies to:
\begin{equation*}
\begin{aligned}
\Delta t^2\nabla\cdot(n\nabla\phi^{n+1}) = \Delta t^2\nabla^2:\mathbb{S}^{n} + n^{n} -2\Delta t \nabla\cdot(n\boldsymbol{U})^n.
\end{aligned}
\end{equation*}
This equation is precisely the discretization of the reformulated Poisson equation in the quasineutral limit (Eq. \eqref{eq:rpe_qn}). Since the Debye length $\lambda$ disappears, the time step is not restricted by the Debye length $\lambda$.
\end{proof}

\section{Numerical studies}
\label{sec:numericalstudies}

\subsection{Nonlinear Landau damping}
\label{sec:nld}

This section examines nonlinear Landau damping. Here, the Debye length is fully resolved, whereas the mean free path is not. We investigate this regime across a range of Knudsen numbers, from $Kn=\infty$ to $Kn=1$ and $Kn=0.0001$, to assess the scheme's capacity to accurately capture physics spanning collisionless to collisional regimes without limitations imposed by the mean free path or mean collision time.

The initial electron distribution function is given by:
$$
f_{0}(x, u)=\frac{1}{\sqrt{2 \pi}}(1+\alpha \cos (k x)) \mathrm{e}^{-u^2/2}.
$$
The total energy of the system, $E_t$, is calculated as:
$
\boldsymbol{E}_t=\frac{\lambda^2}{2} \int\left|\nabla\phi\right|^2 d x+\frac{1}{2} \int f|v|^2 d x d v.
$
The electrical energy, $E_{p}$, is defined as:
$
\boldsymbol{E}_{p}=\frac{\lambda^2}{2} \int\left|\nabla\phi\right|^2 d x.
$
Simulations were performed using 128 spatial grid points over the domain $x\in [0, L]$, where $L = 2\pi/k$ and $k=0.5$. The perturbation amplitude, $\alpha$, was set to 0.5. For the UGKS-RPE, the velocity space was discretized into 256 points within the interval $[-5,5]$. The UGKWP-RPE method employed an initial condition of 1000 particles per cell. Note that the particle number in UGKWP-RPE in the runtime varies with the Knudsen number.

Figure \ref{fig:nonlinear-potential-compare} presents a comparison of the electric potential energy for different Knudsen numbers. The results demonstrate excellent agreement between the UGKWP-RPE and UGKS-RPE methods across all simulated Knudsen numbers.
The first Figure displays results for $Kn=\infty$, representing the collisionless regime. The temporal evolution of the electric energy shows an initial damping followed by a rise, characteristic of nonlinear Landau damping. The observed damping rates align well with theoretical predictions.
The second Figure corresponds to $Kn=1$. In this regime, the introduction of collisions suppresses nonlinear damping, leading to a monotonic decrease in electric energy. A deviation between UGKWP-RPE and UGKS-RPE is observed around $t=35$, attributed to particle noise inherent in the UGKWP method.
The third Figure presents results for $Kn=0.0001$, a near-continuum regime. Here, damping behavior is absent. This is because in the near-continuum limit, the plasma behaves more like a fluid, with macroscopic interactions dominating over kinetic wave-particle interactions, thereby eliminating Landau damping.
Collectively, these results demonstrate the capability of both schemes to accurately resolve the underlying physics across a wide range of Knudsen numbers, from the collisionless to the collisional regimes.

\begin{figure}
    \centering
    \begin{subfigure}[b]{0.48\textwidth}
    \centering
    \includegraphics[width=1.0\linewidth]{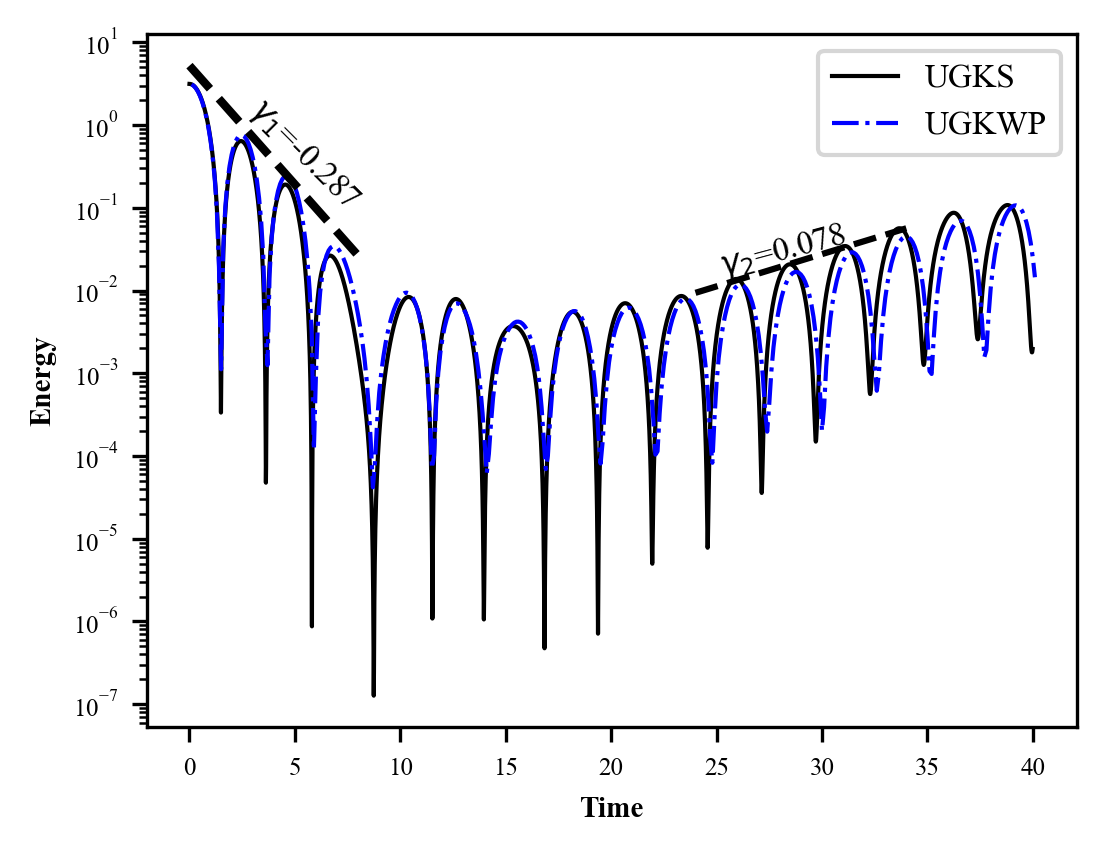}
    \end{subfigure}
    \hfill
    \begin{subfigure}[b]{0.48\textwidth}
    \centering
    \includegraphics[width=1.0\linewidth]{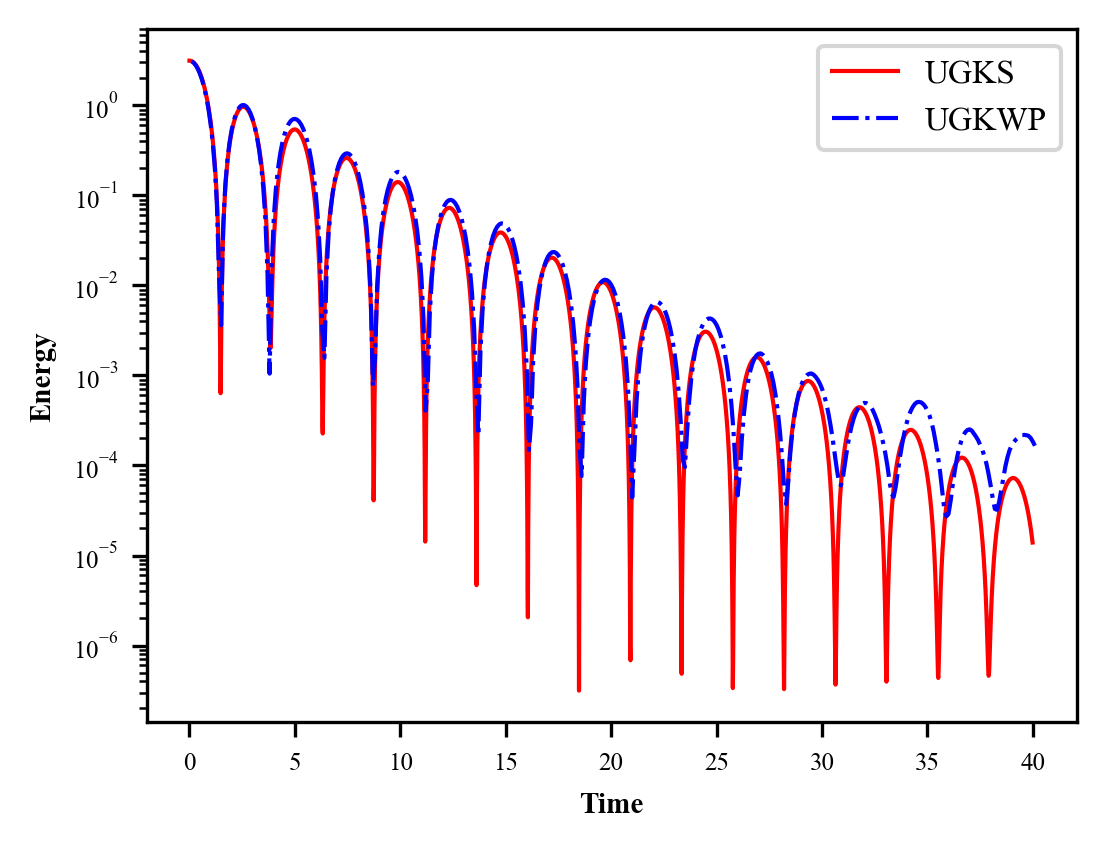}
    \end{subfigure}
    \hfill
    \begin{subfigure}[b]{0.48\textwidth}
    \centering
    \includegraphics[width=1.0\linewidth]{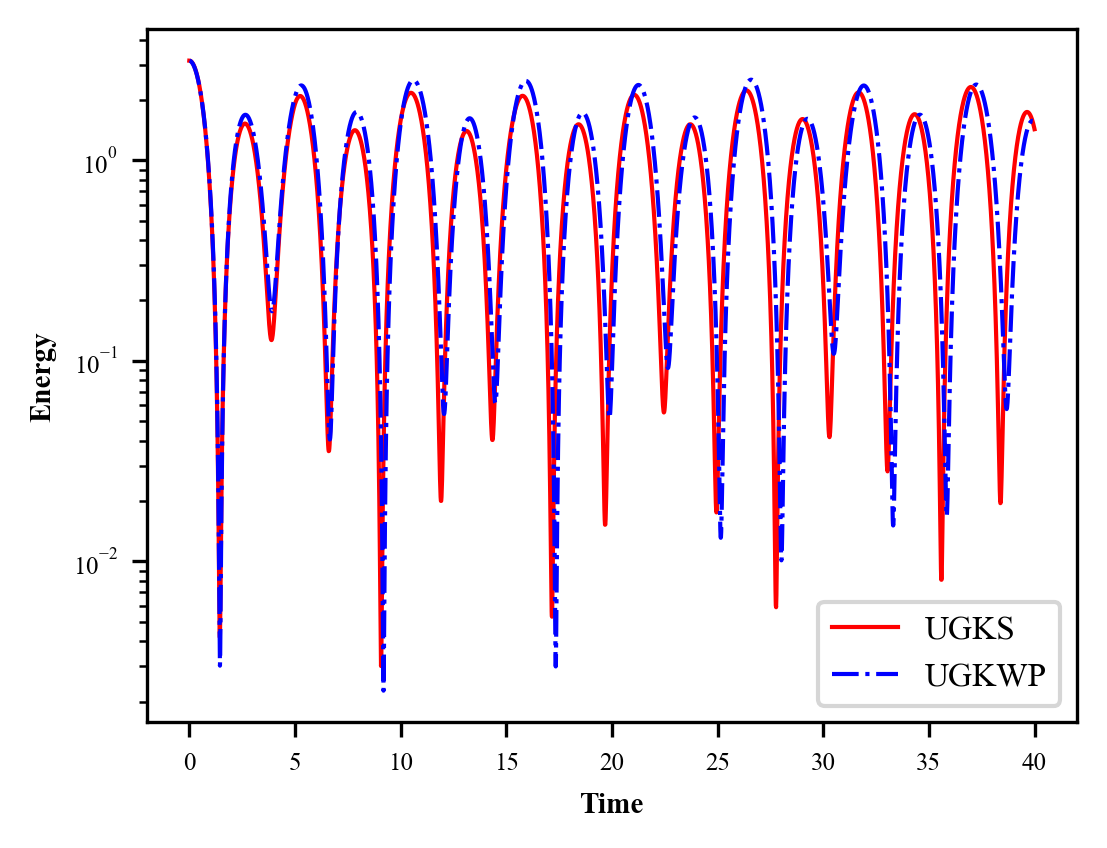}
    \end{subfigure}
    \caption{Nonlinear Landau damping with different Knudsen numbers by UGKS-RPE and UGKWP-RPE: $\lambda = 1, \Delta x = 0.1, \text{CFL}=0.9$. The Figure shows the time evolution of electrostatic energy.}
    \label{fig:nonlinear-potential-compare}
\end{figure}

Figure \ref{fig:nld_phase_diagram} presents the phase diagrams generated by the UGKS-RPE for various Knudsen numbers at simulation times $t=1, 4, 8, 15$. Each row in the Figure corresponds to a specific simulation time, while each column represents a different Knudsen number. In the collisionless regime ($Kn=\infty$), the filamentation process is clearly observed as time progresses, a characteristic signature of Landau damping. For $Kn=1$, the phase diagram at $t=1$ is similar to the collisionless case, but the phase space becomes noticeably smoother due to the presence of collisions. At $Kn=0.0001$, the phase space exhibits a significantly different structure compared to the other two regimes.
Figure \ref{fig:nld_phase_diagram-wp} illustrates the phase diagrams obtained using the UGKWP-RPE method. As UGKWP specifically tracks collisionless particles, only the non-equilibrium portion of the phase space is plotted. The collisionless regime ($Kn=\infty$) generated by UGKWP-RPE is identical to the UGKS-RPE results, displaying clear filaments. For $Kn=1$, the UGKWP-RPE phase diagram also shares the same structure with UGKS-RPE because a substantial number of particles remain in a non-equilibrium state at this Knudsen number, allowing for a discernible phase space representation. However, at $Kn=0.0001$, a much smaller proportion of particles are non-equilibrium. In this regime, the majority of particles are in equilibrium and are thus represented by the analytical wave formulation. The Figure reveals that the non-equilibrium particles are concentrated at the spatial extremes and the center of the domain, with sparse particle distribution elsewhere, indicating a local equilibrium state in those regions. This observation clearly demonstrates the adaptive nature of UGKWP in representing the phase space.

\begin{figure}
    \centering
    \begin{subfigure}[b]{0.325\textwidth}
    \centering
    \includegraphics[width=1.0\linewidth]{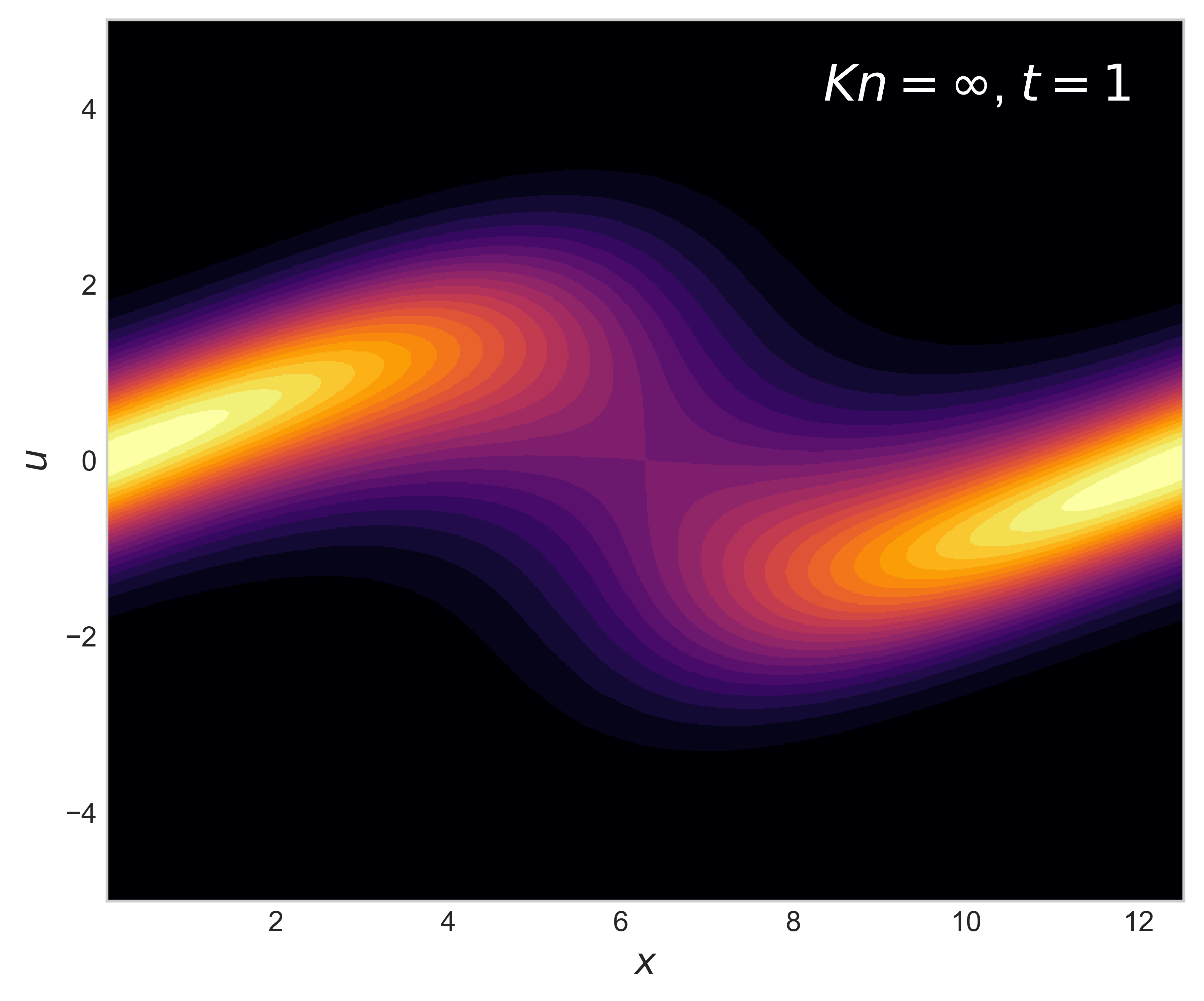}
    \end{subfigure}
    \hfill
    \begin{subfigure}[b]{0.325\textwidth}
    \centering
    \includegraphics[width=1.0\linewidth]{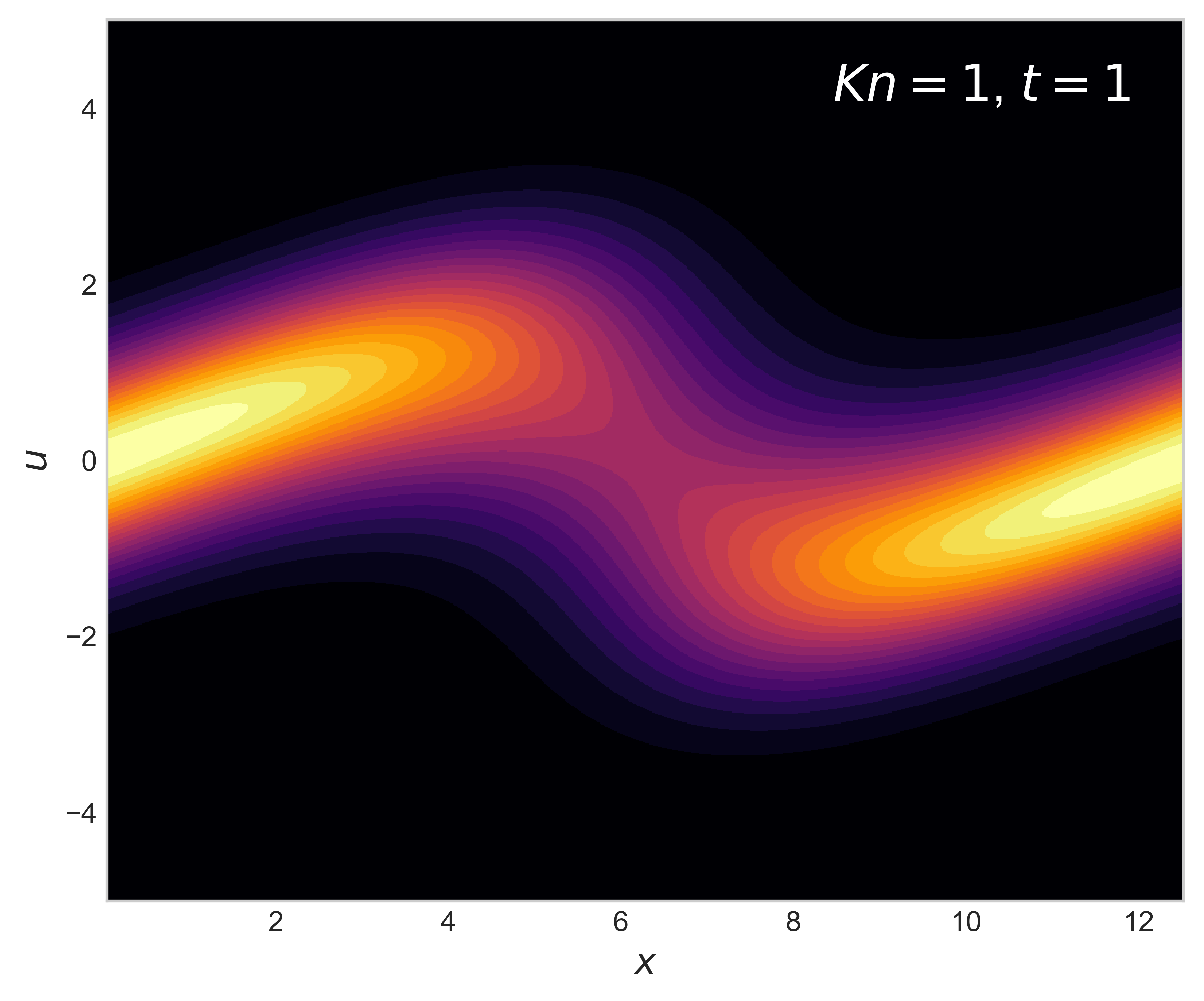}
    \end{subfigure}
    \hfill
    \begin{subfigure}[b]{0.325\textwidth}
    \centering
    \includegraphics[width=1.0\linewidth]{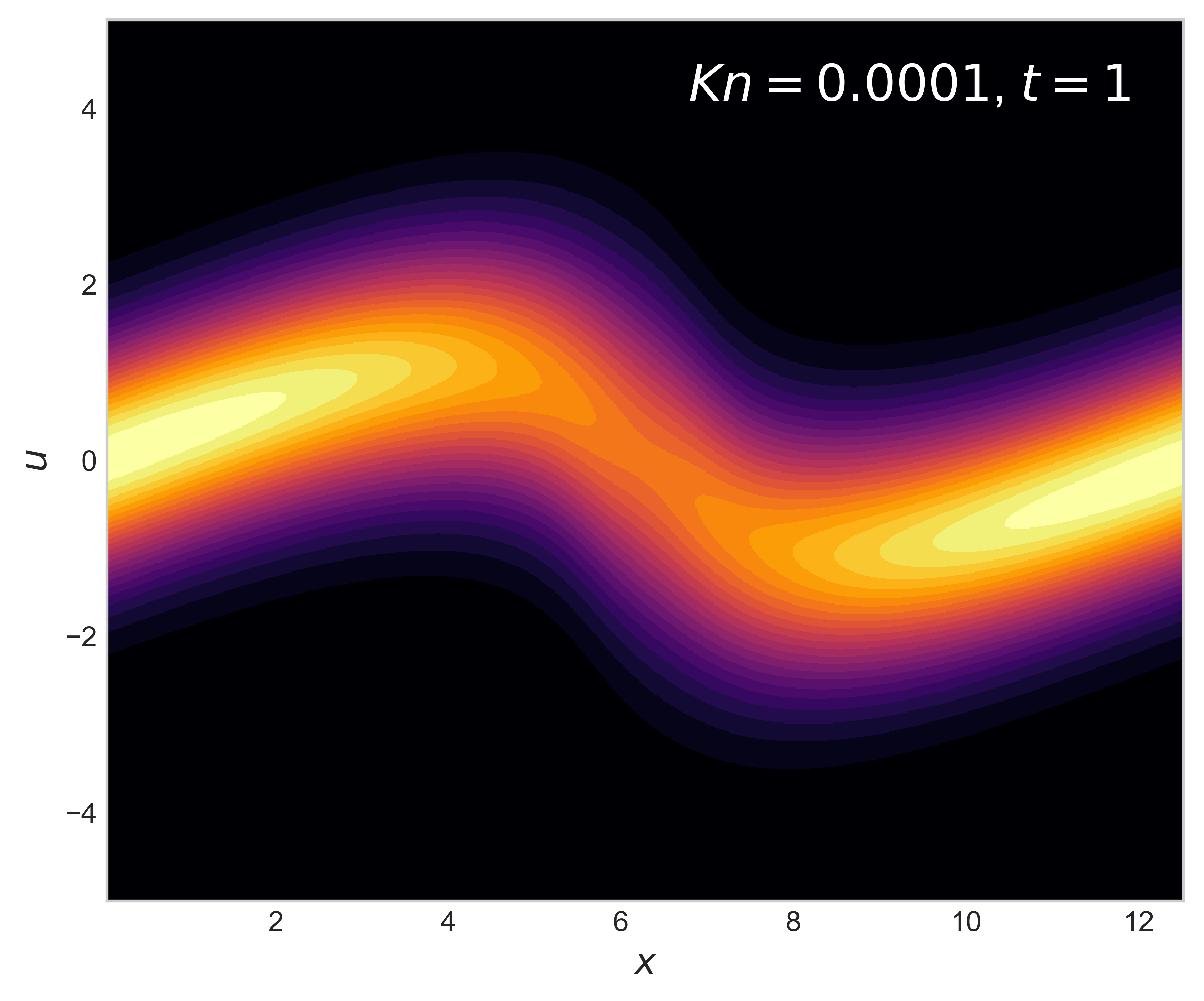}
    \end{subfigure}
    \vfill
    \centering
    \begin{subfigure}[b]{0.325\textwidth}
    \centering
    \includegraphics[width=1.0\linewidth]{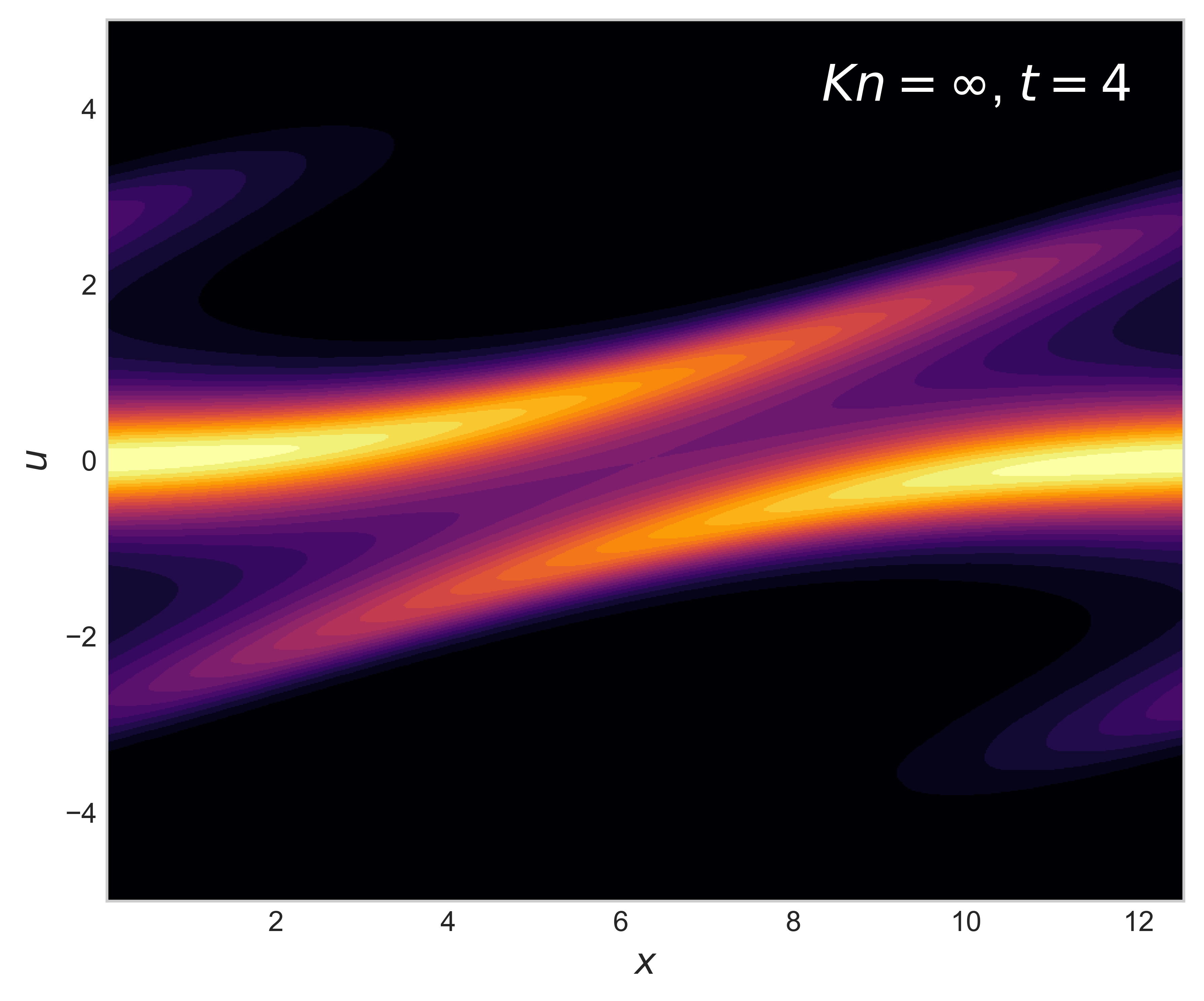}
    \end{subfigure}
    \hfill
    \begin{subfigure}[b]{0.325\textwidth}
    \centering
    \includegraphics[width=1.0\linewidth]{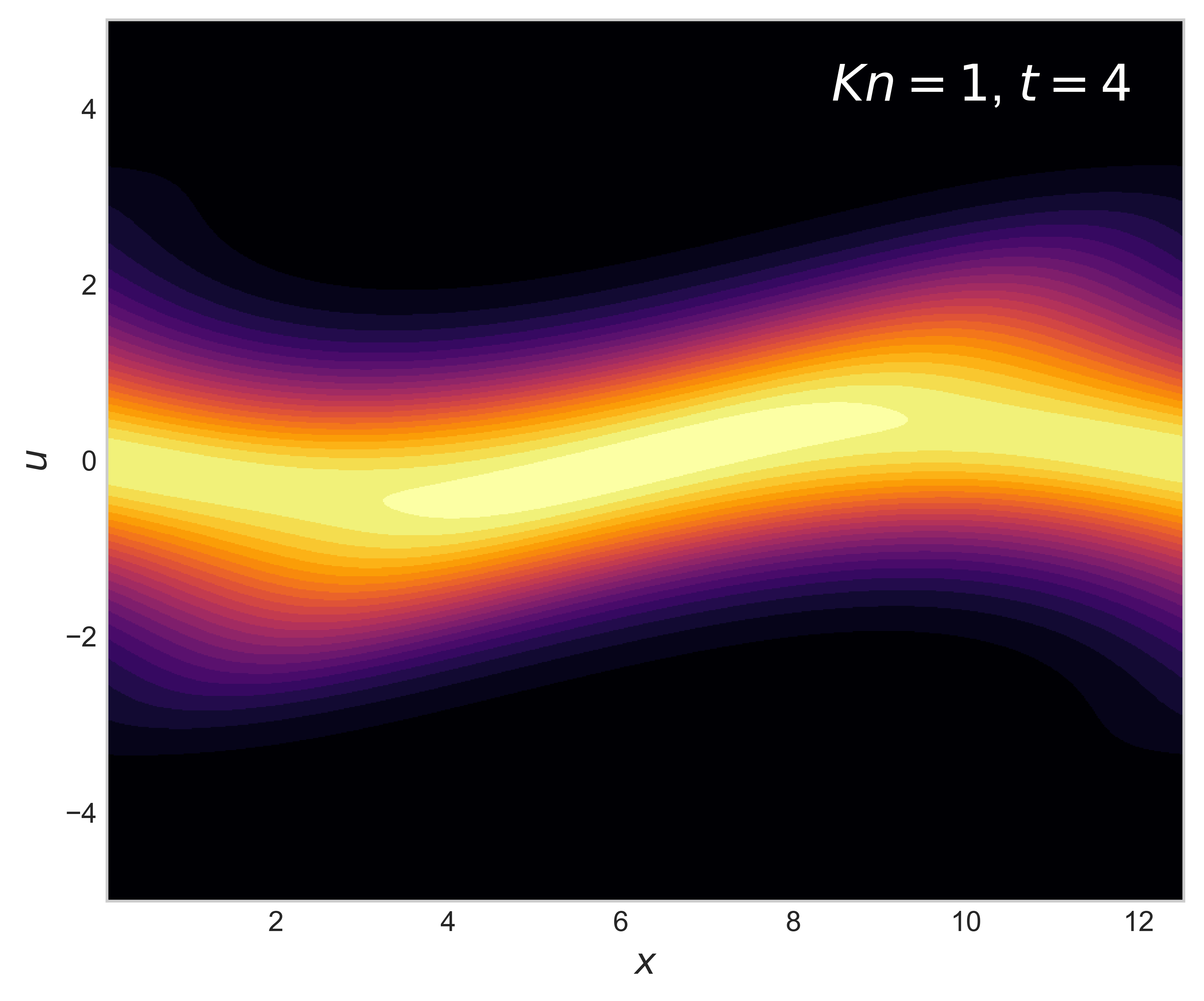}
    \end{subfigure}
    \hfill
    \begin{subfigure}[b]{0.325\textwidth}
    \centering
    \includegraphics[width=1.0\linewidth]{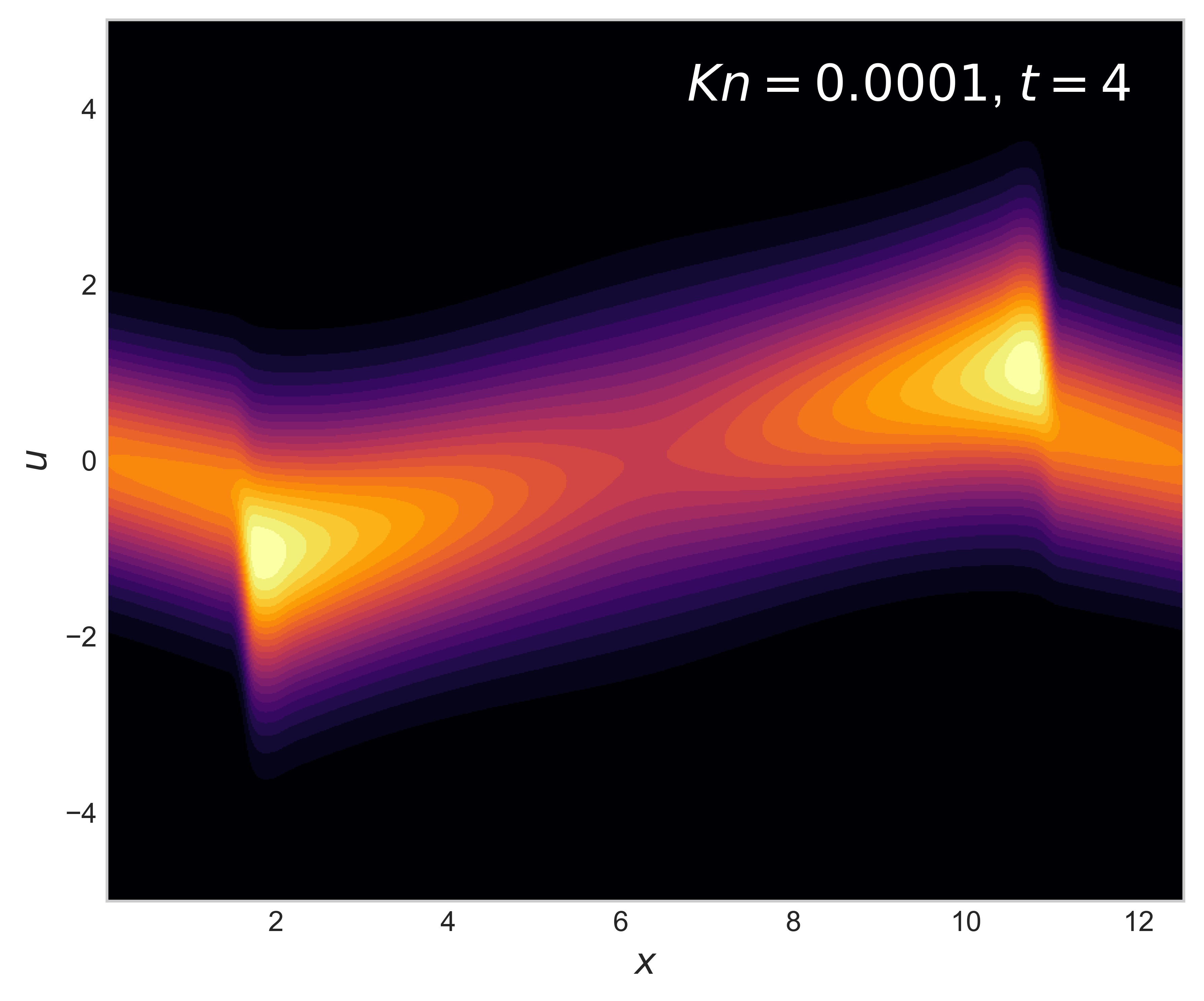}
    \end{subfigure}
    \vfill
    \centering
    \begin{subfigure}[b]{0.325\textwidth}
    \centering
    \includegraphics[width=1.0\linewidth]{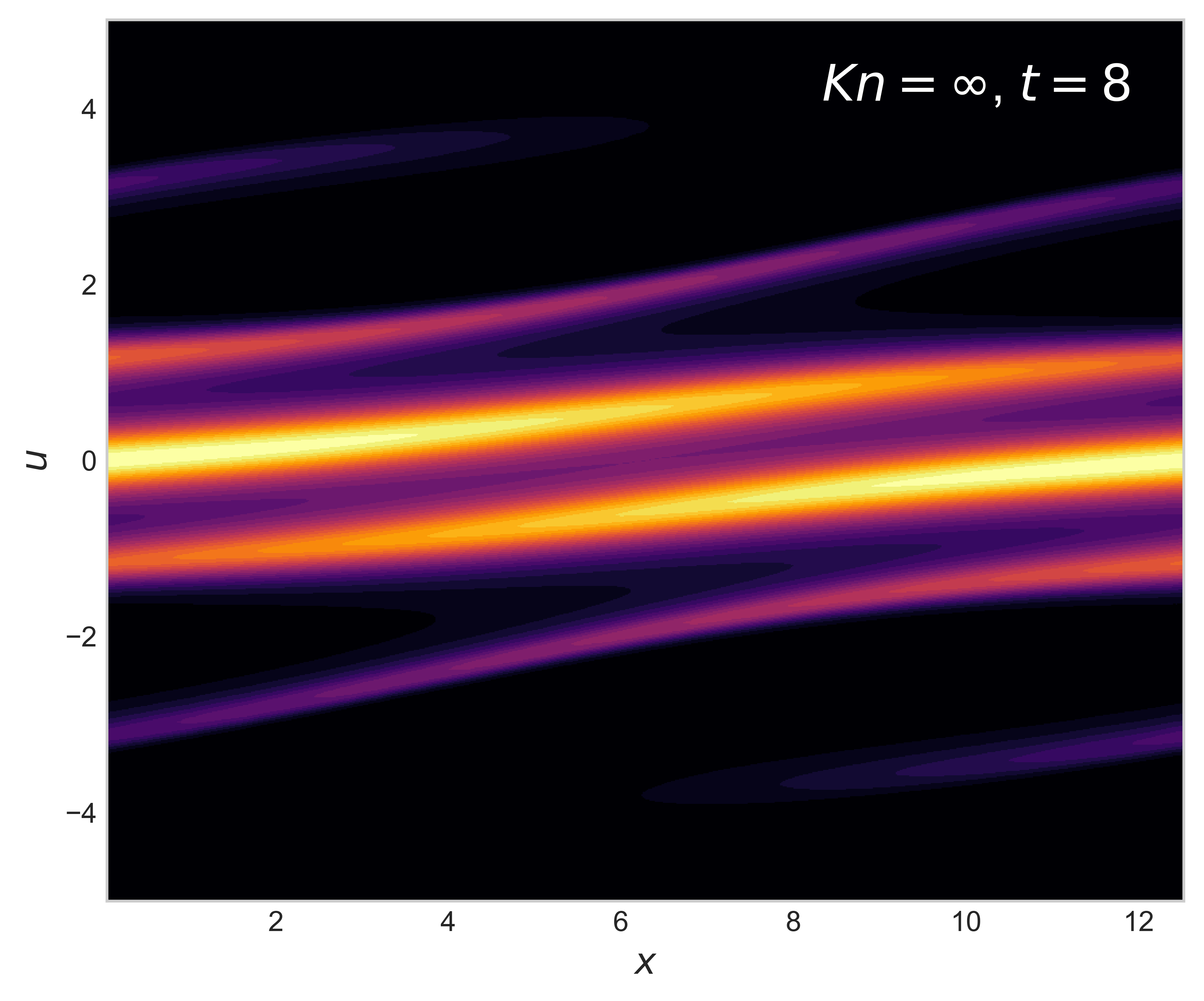}
    \end{subfigure}
    \hfill
    \begin{subfigure}[b]{0.325\textwidth}
    \centering
    \includegraphics[width=1.0\linewidth]{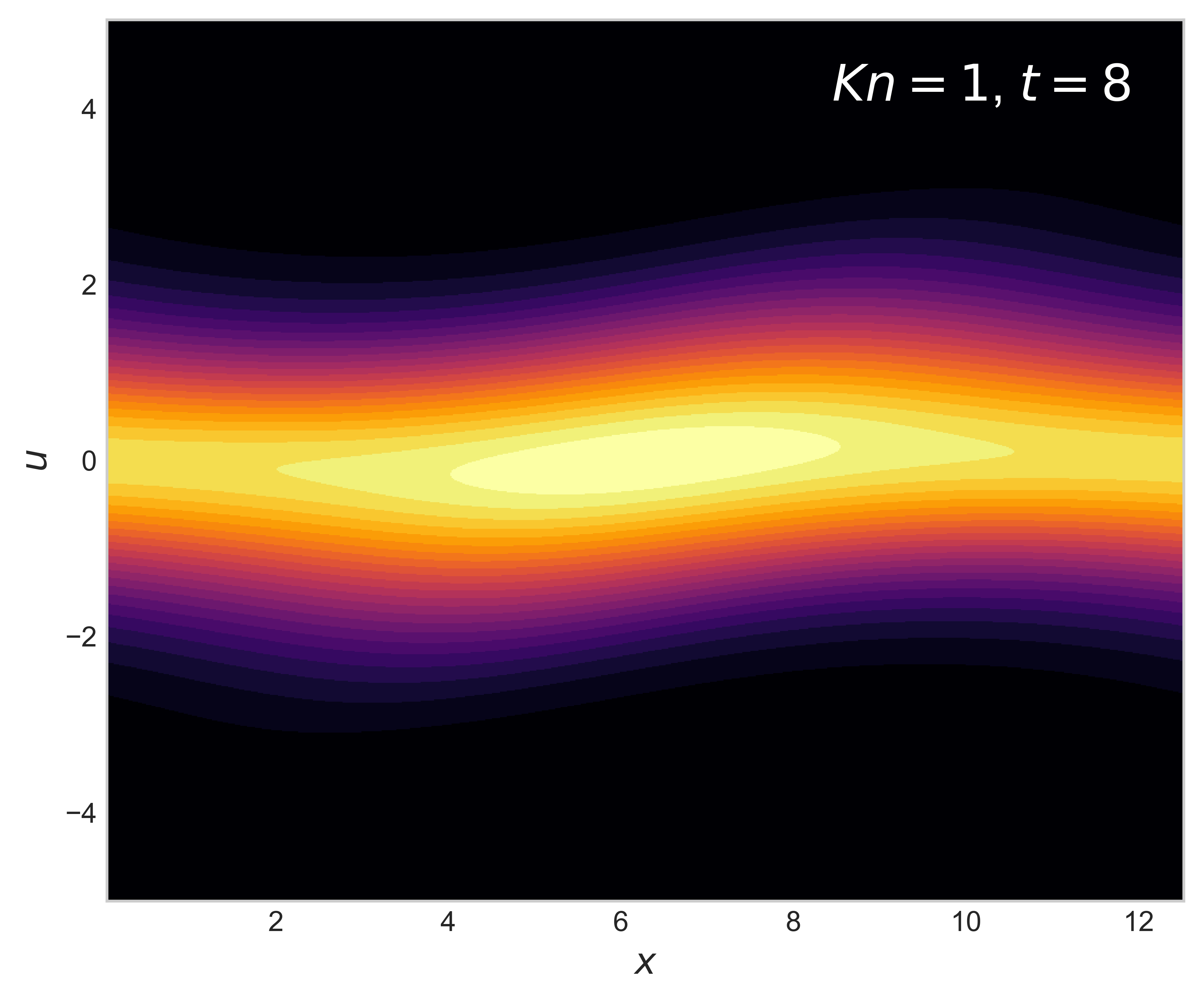}
    \end{subfigure}
    \hfill
    \begin{subfigure}[b]{0.325\textwidth}
    \centering
    \includegraphics[width=1.0\linewidth]{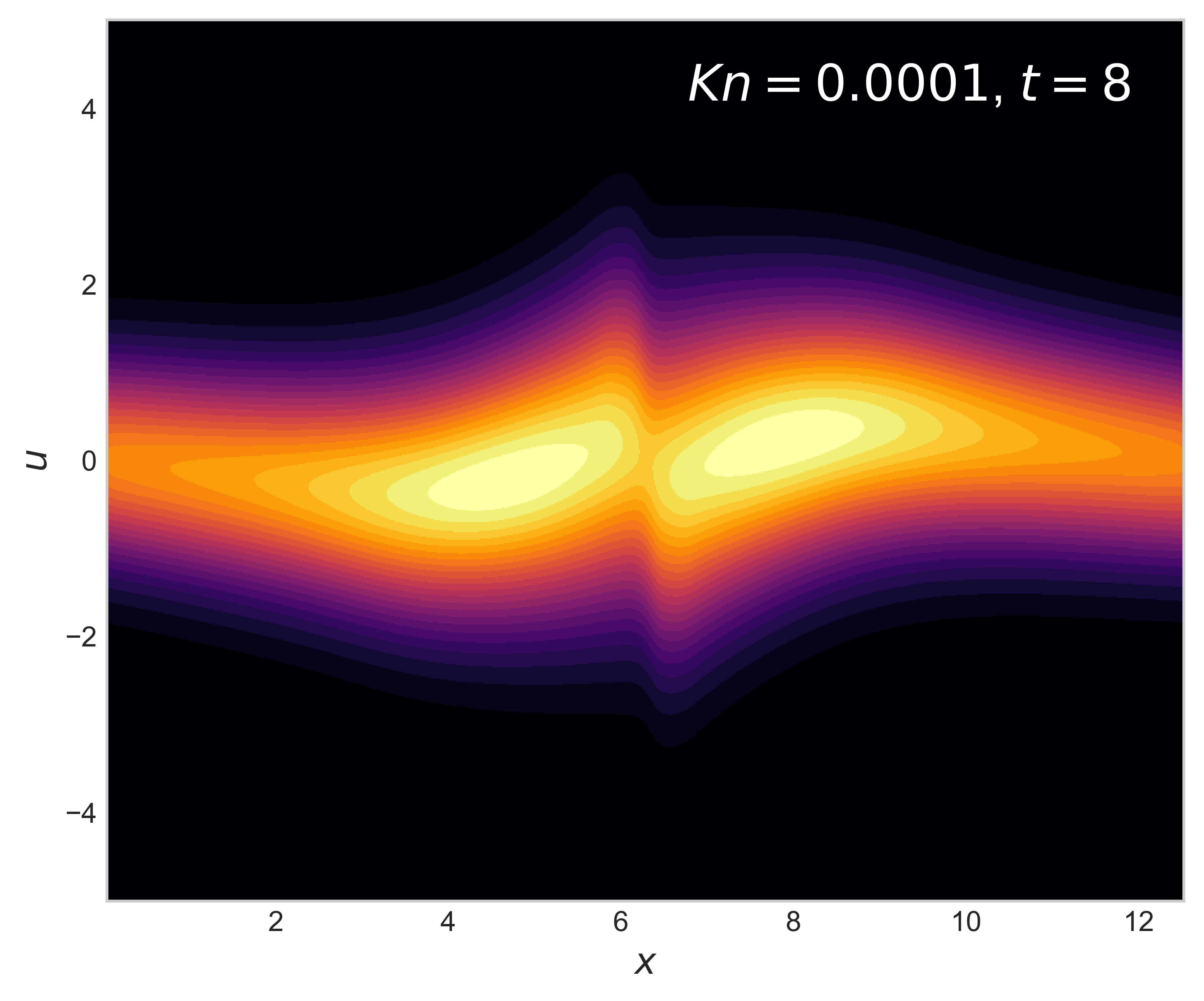}
    \end{subfigure}
    \vfill
    \centering
    \begin{subfigure}[b]{0.325\textwidth}
    \centering
    \includegraphics[width=1.0\linewidth]{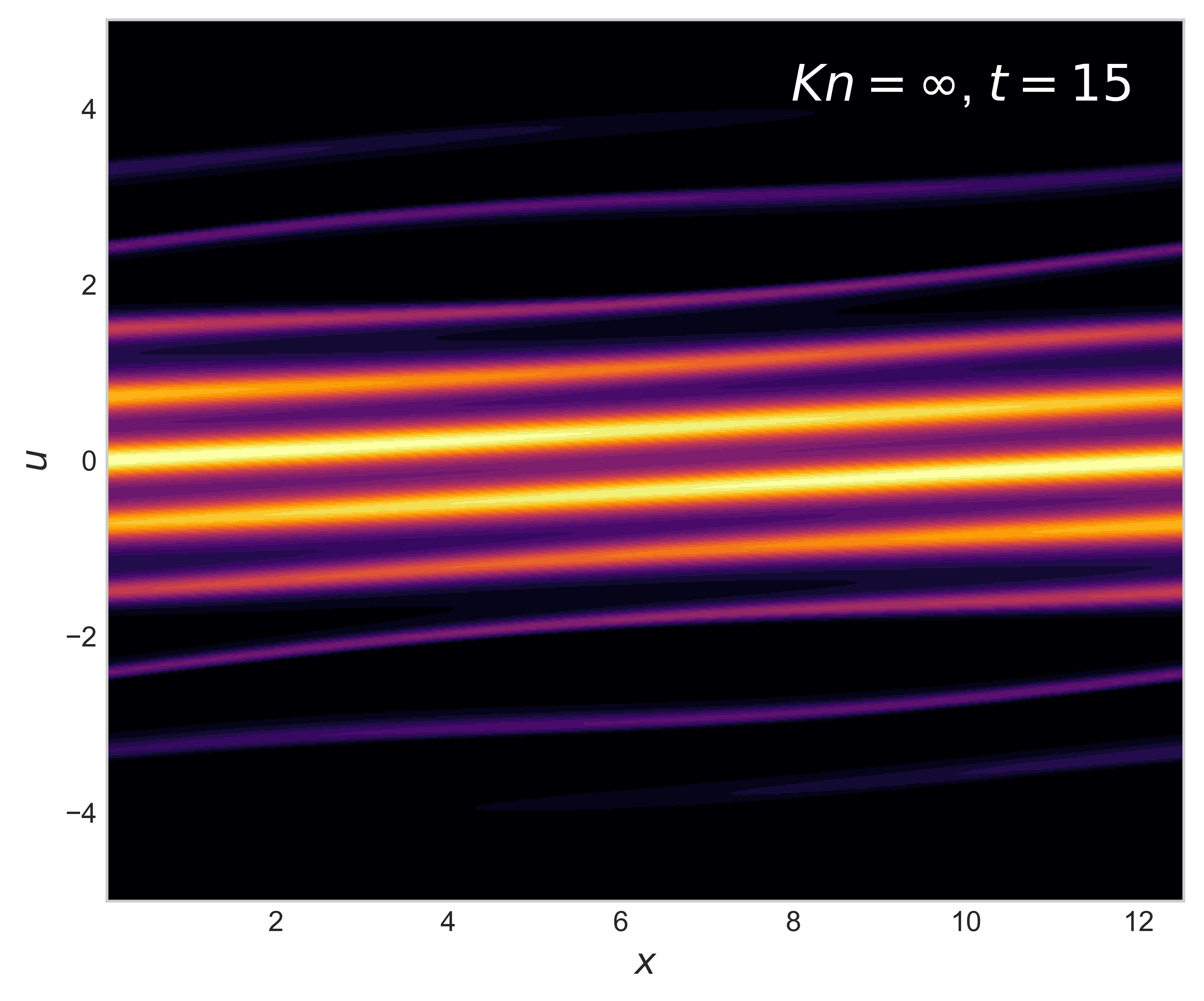}
    \end{subfigure}
    \hfill
    \begin{subfigure}[b]{0.325\textwidth}
    \centering
    \includegraphics[width=1.0\linewidth]{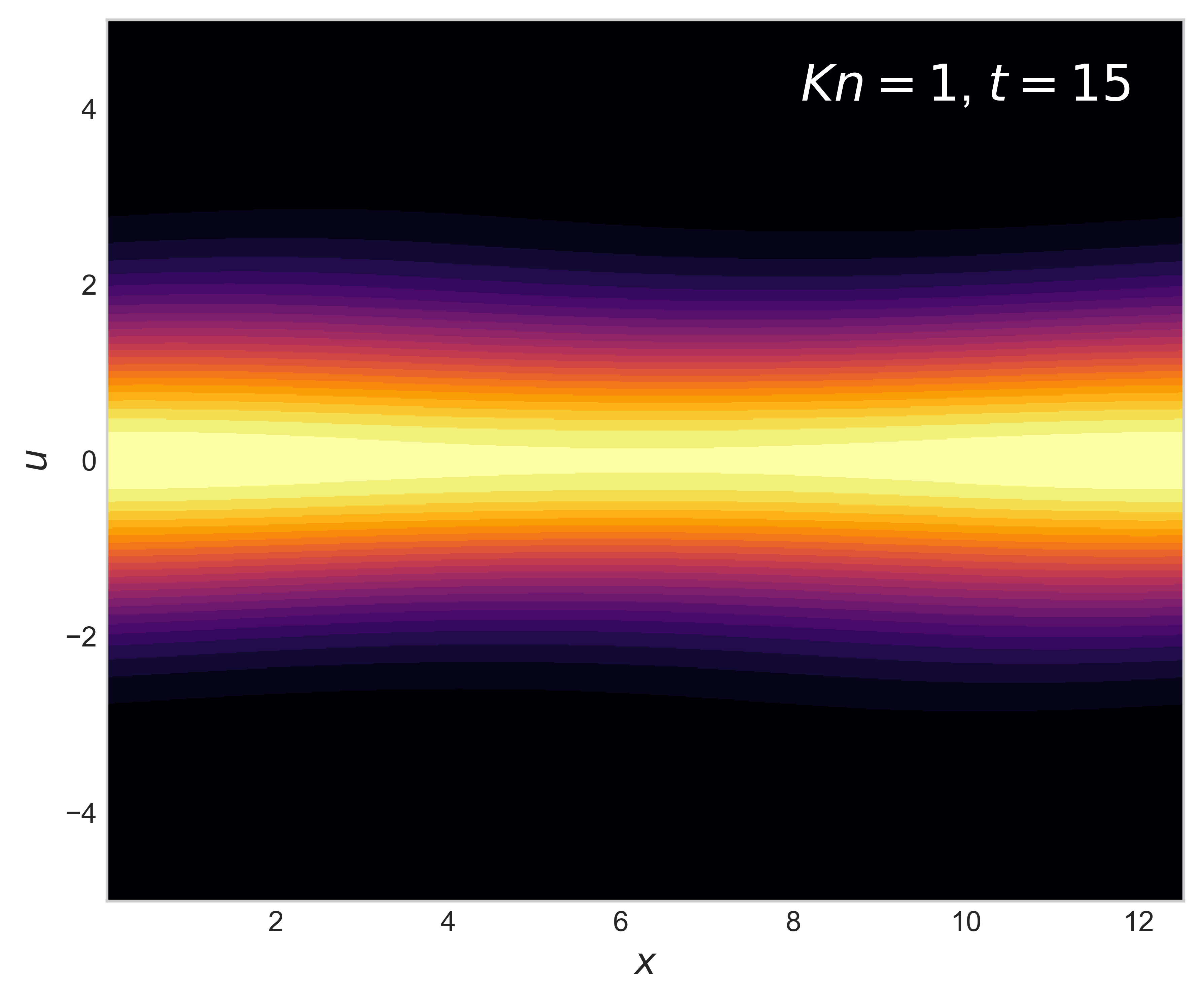}
    \end{subfigure}
    \hfill
    \begin{subfigure}[b]{0.325\textwidth}
    \centering
    \includegraphics[width=1.0\linewidth]{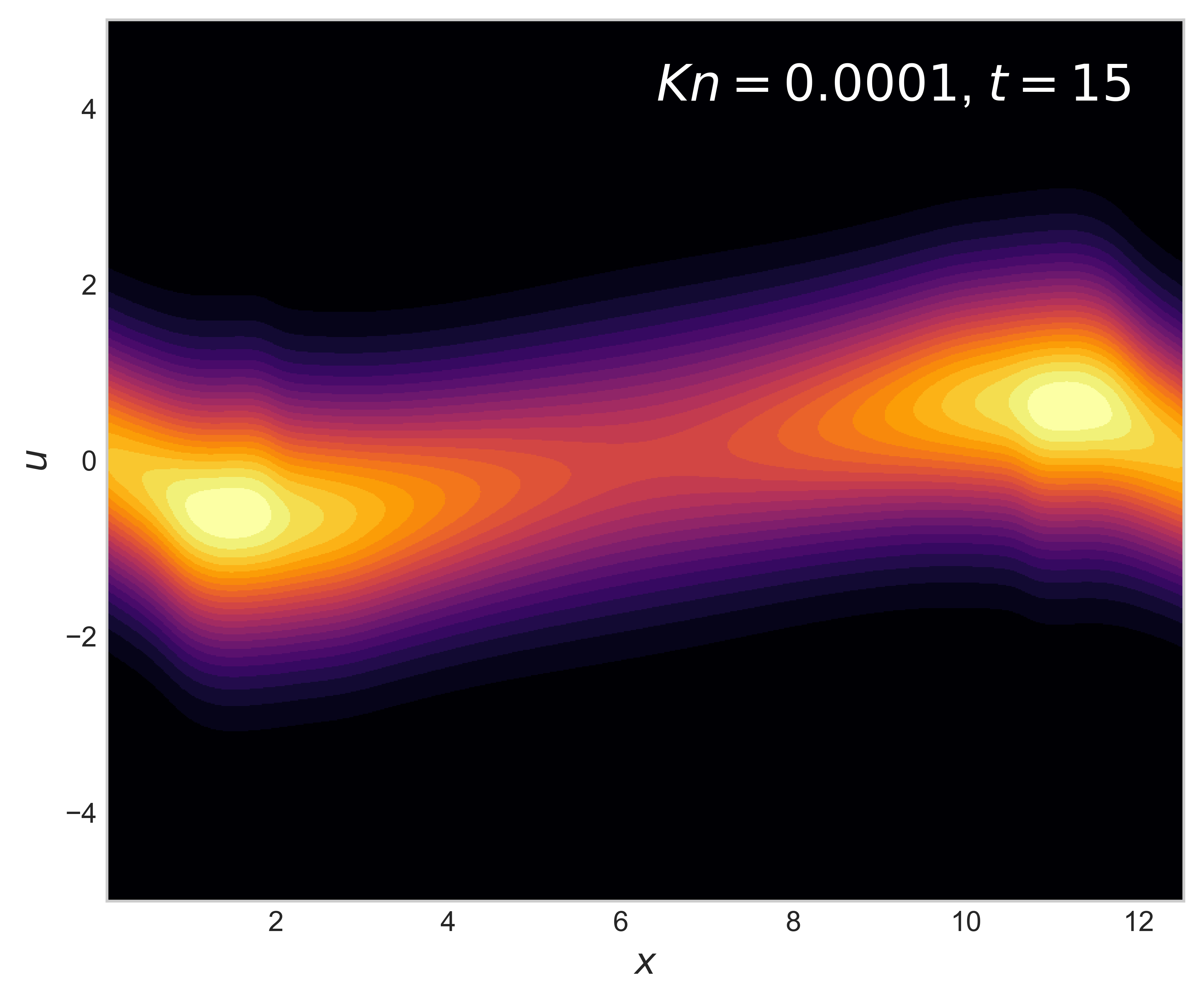}
    \end{subfigure}
    \caption{Nonlinear Landau damping with different Knudsen numbers by UGKS-RPE, from left to right are $Kn=\infty, 1, 10^{-4}$. $\lambda = 1, \Delta x = 0.1, \text{CFL}=0.9$. The Figure shows the phase diagram at different times.}
    \label{fig:nld_phase_diagram}
\end{figure}

\begin{figure}
    \centering
    \begin{subfigure}[b]{0.325\textwidth}
    \centering
    \includegraphics[width=1.0\linewidth]{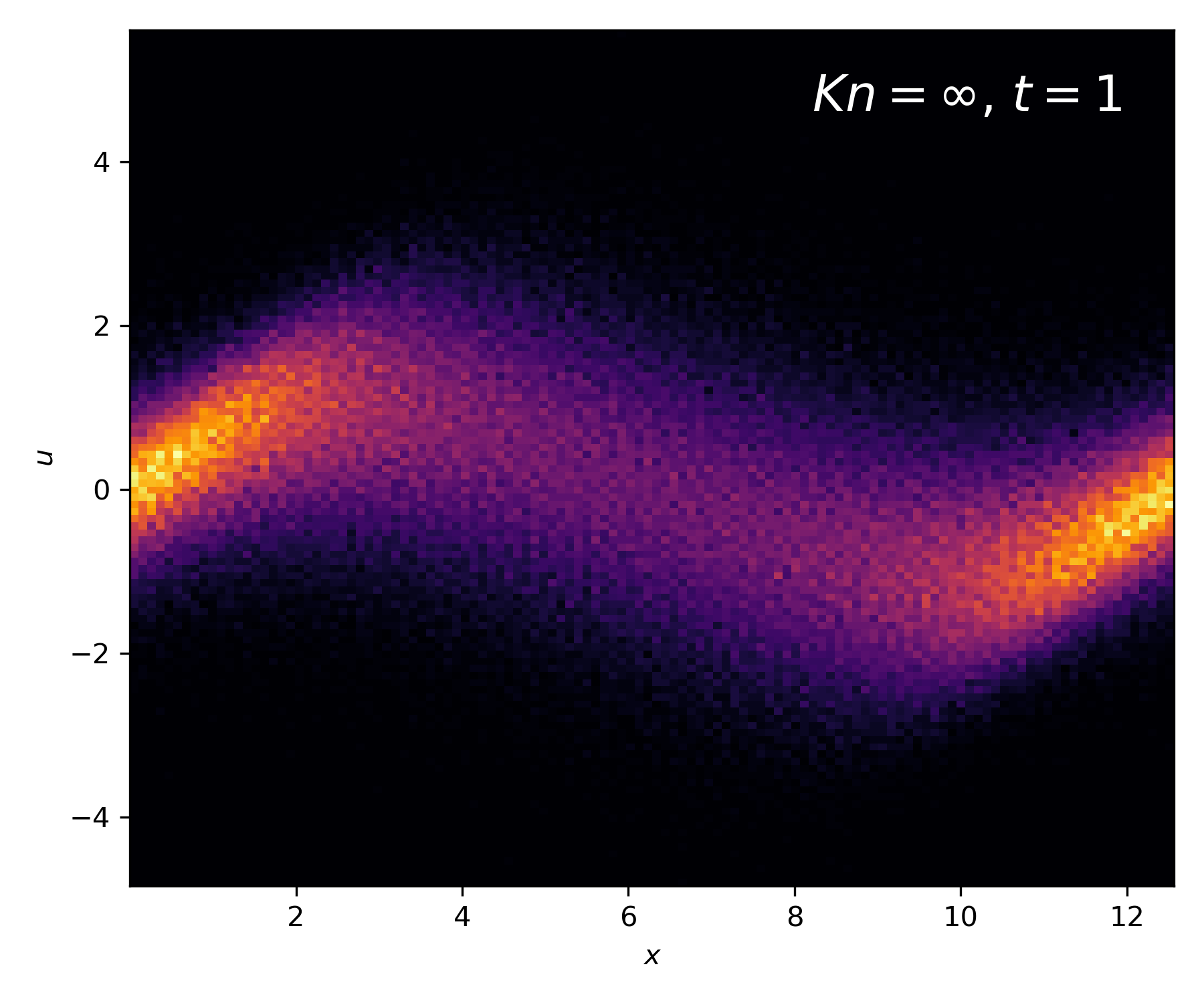}
    \end{subfigure}
    \hfill
    \begin{subfigure}[b]{0.325\textwidth}
    \centering
    \includegraphics[width=1.0\linewidth]{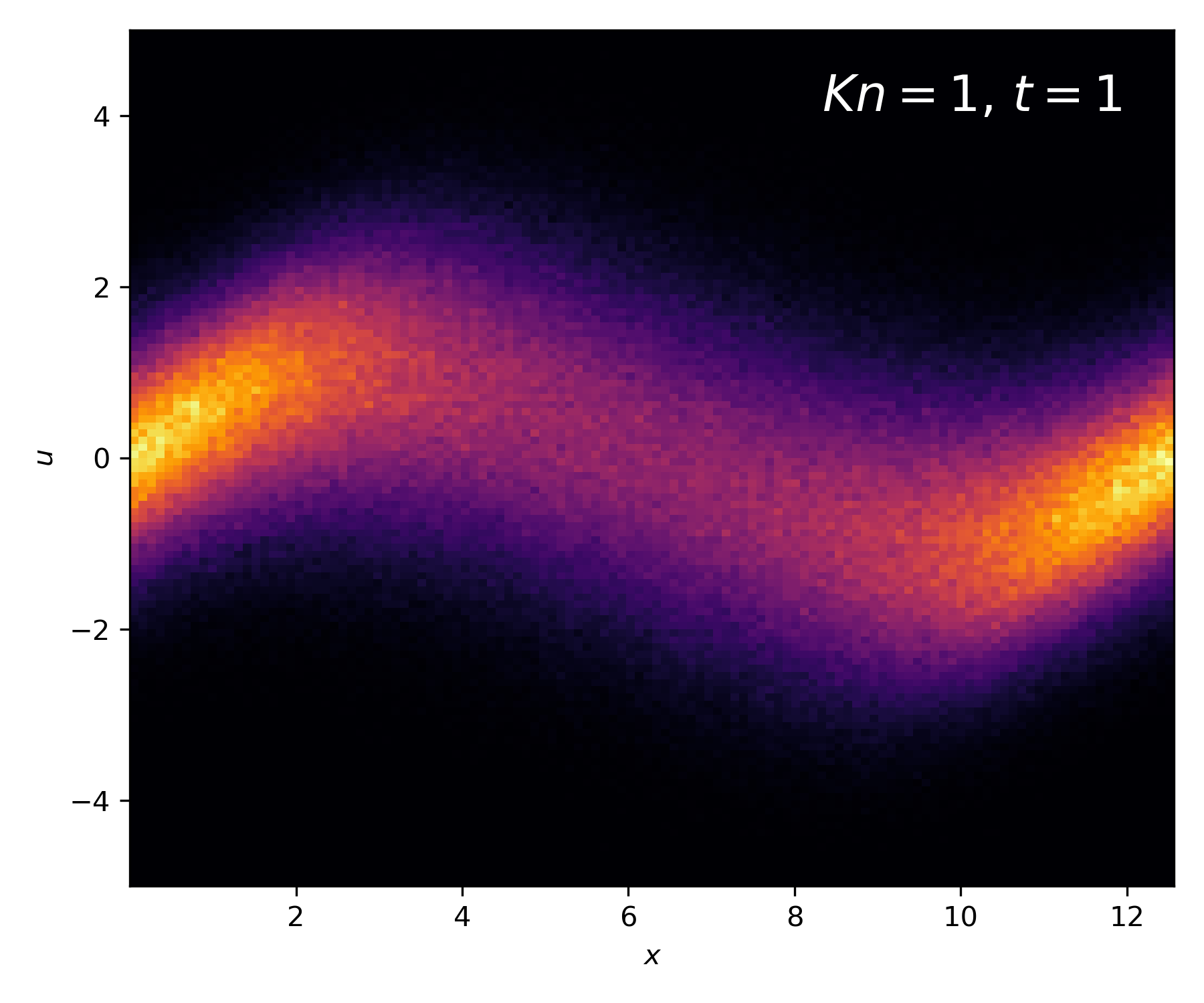}
    \end{subfigure}
    \hfill
    \begin{subfigure}[b]{0.325\textwidth}
    \centering
    \includegraphics[width=1.0\linewidth]{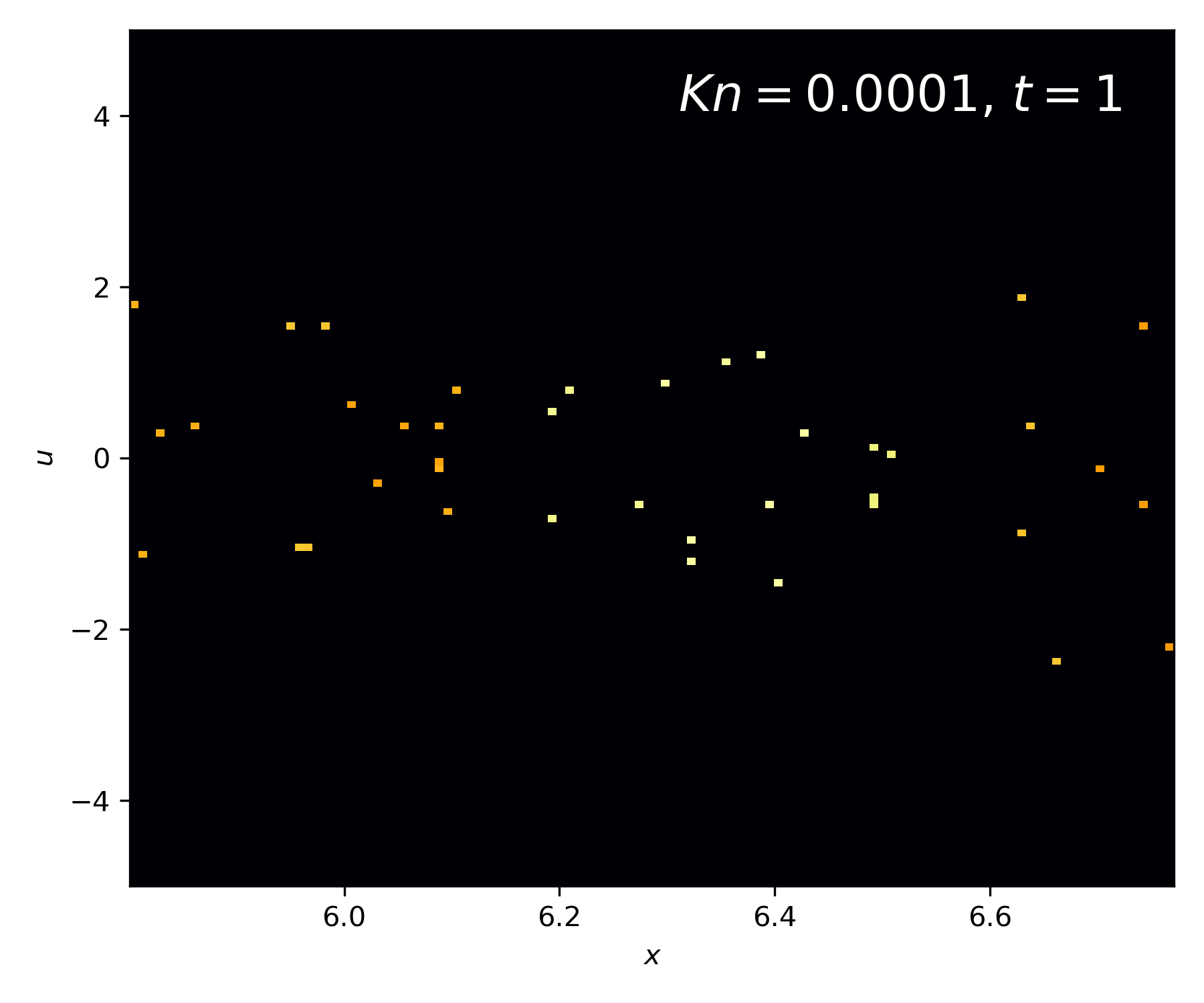}
    \end{subfigure}
    \vfill
    \centering
    \begin{subfigure}[b]{0.325\textwidth}
    \centering
    \includegraphics[width=1.0\linewidth]{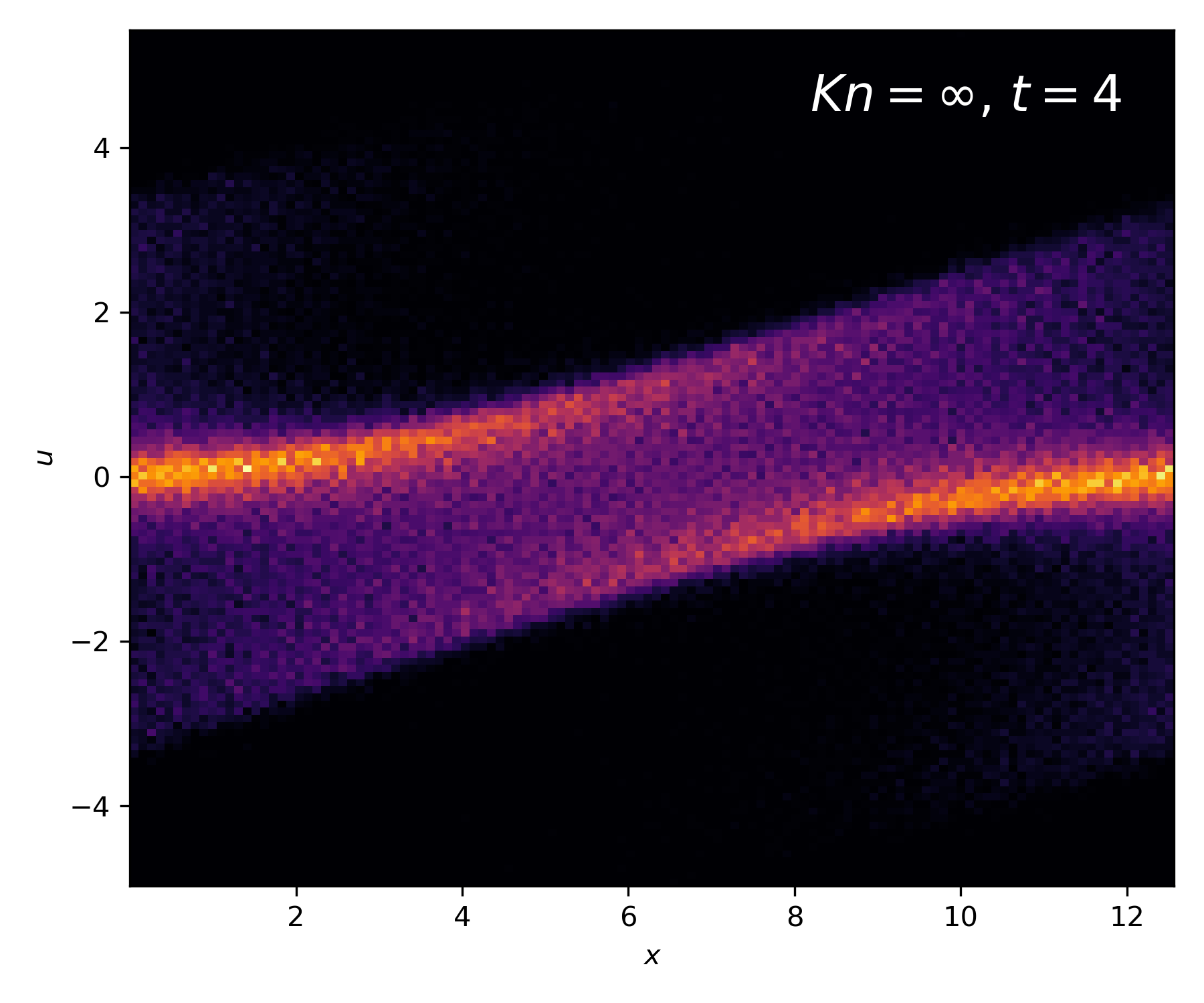}
    \end{subfigure}
    \hfill
    \begin{subfigure}[b]{0.325\textwidth}
    \centering
    \includegraphics[width=1.0\linewidth]{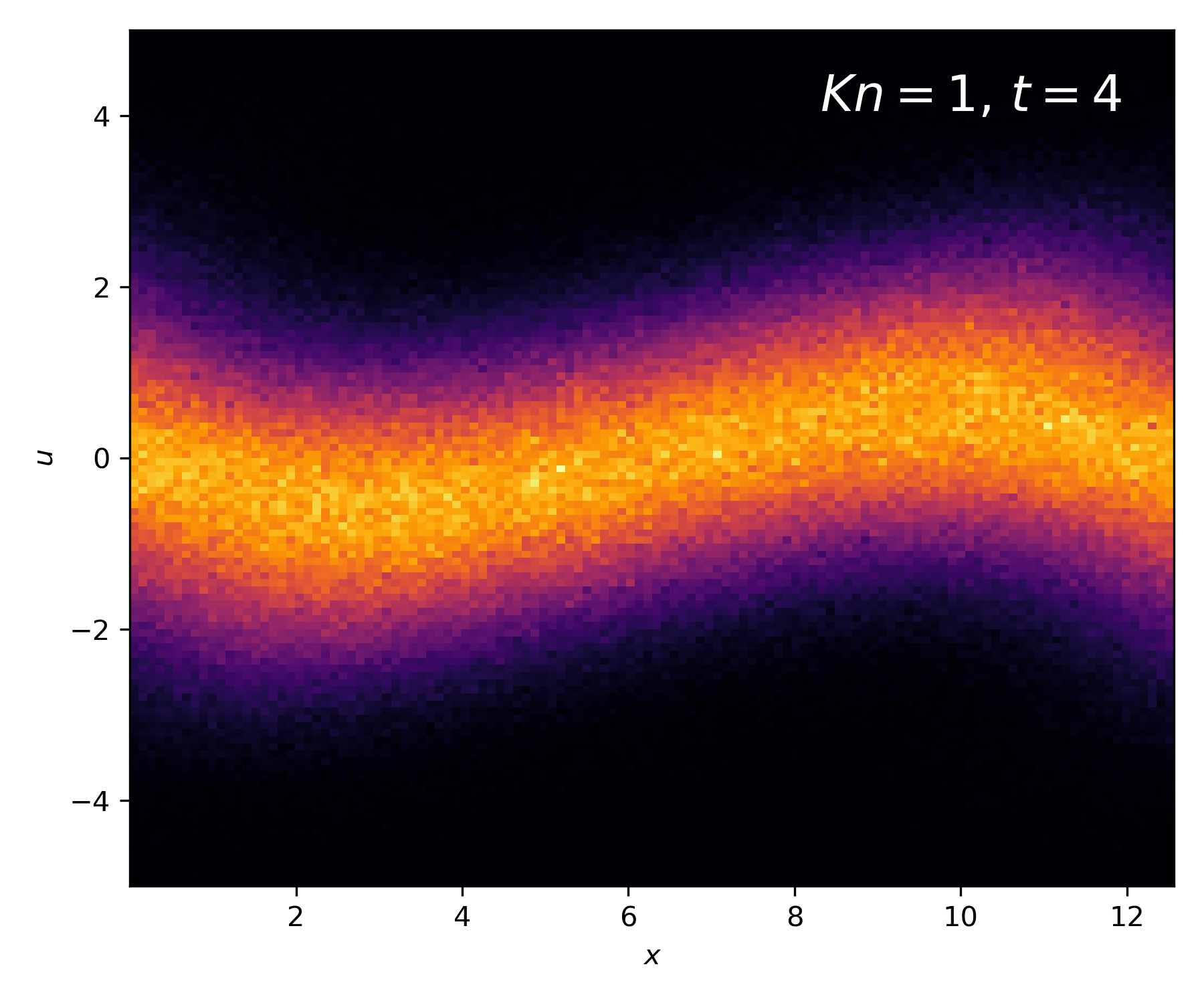}
    \end{subfigure}
    \hfill
    \begin{subfigure}[b]{0.325\textwidth}
    \centering
    \includegraphics[width=1.0\linewidth]{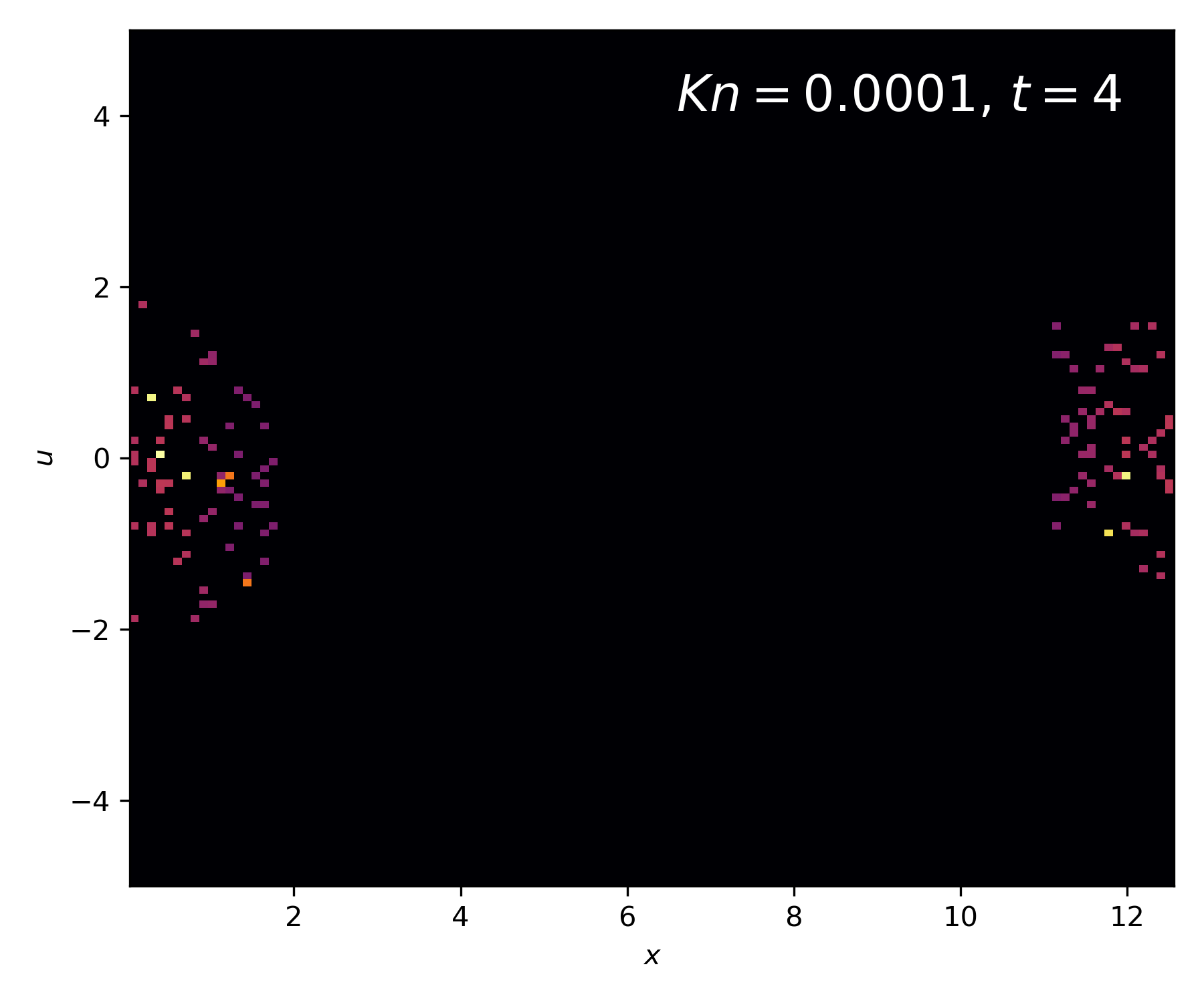}
    \end{subfigure}
    \vfill
    \centering
    \begin{subfigure}[b]{0.325\textwidth}
    \centering
    \includegraphics[width=1.0\linewidth]{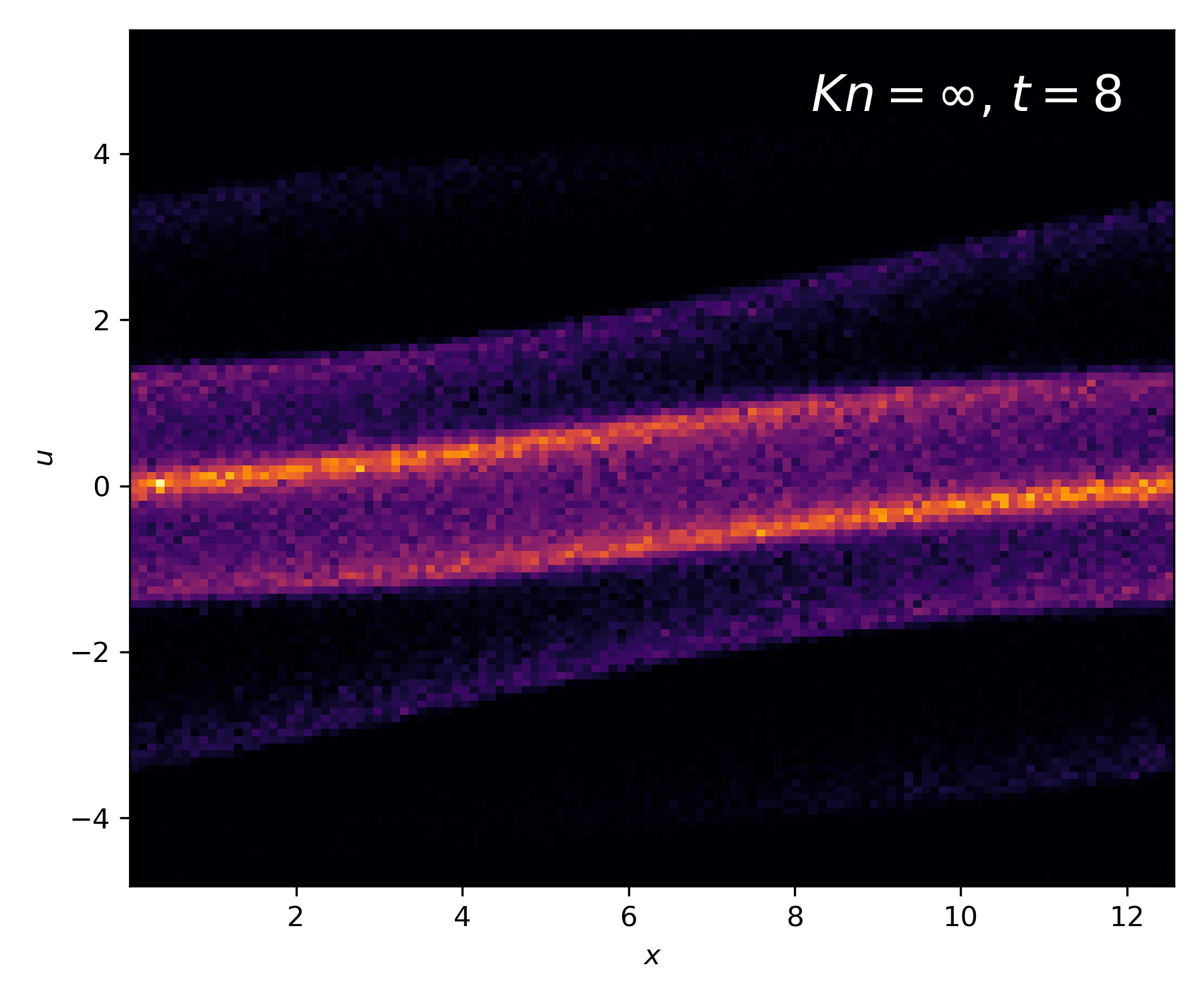}
    \end{subfigure}
    \hfill
    \begin{subfigure}[b]{0.325\textwidth}
    \centering
    \includegraphics[width=1.0\linewidth]{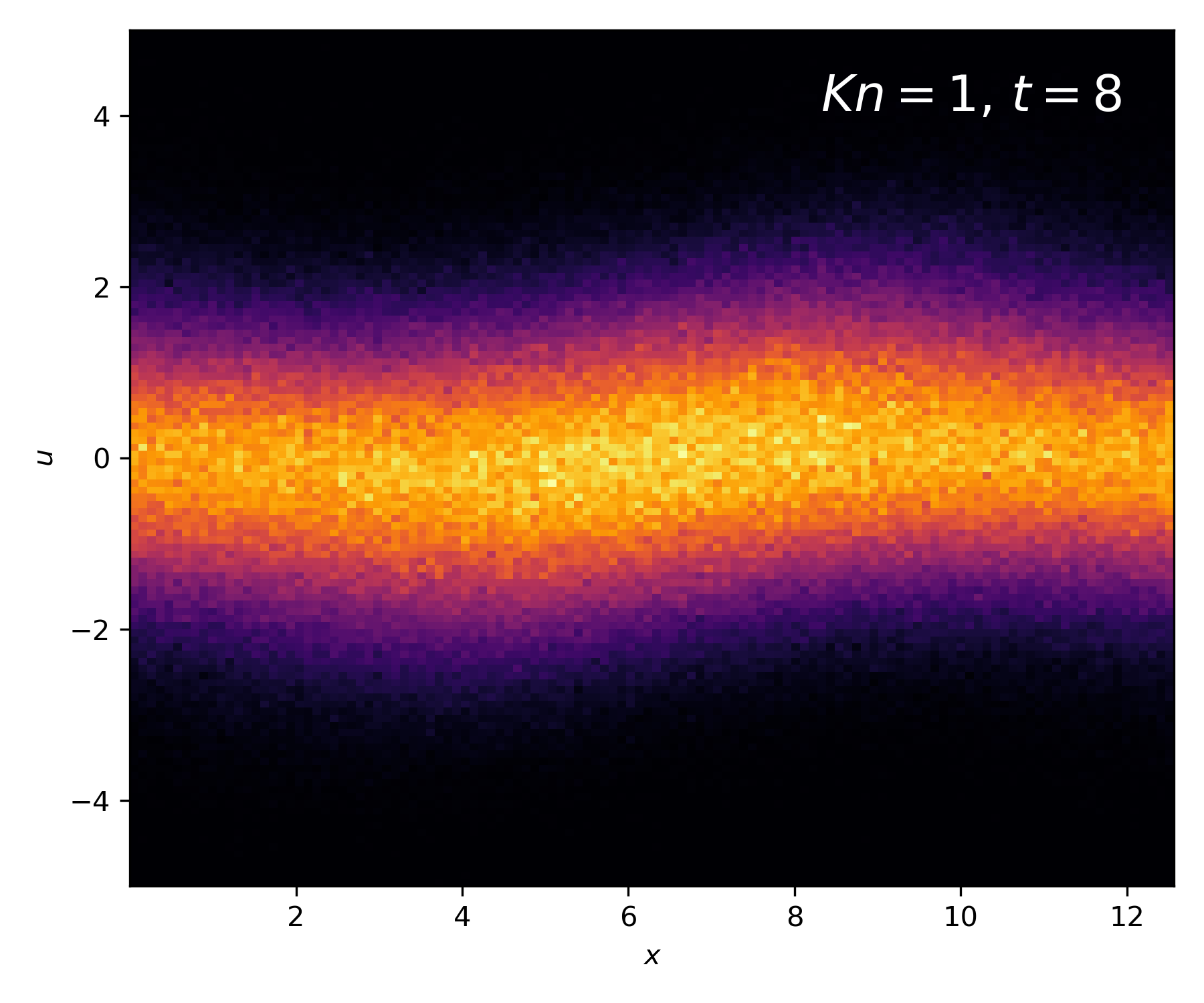}
    \end{subfigure}
    \hfill
    \begin{subfigure}[b]{0.325\textwidth}
    \centering
    \includegraphics[width=1.0\linewidth]{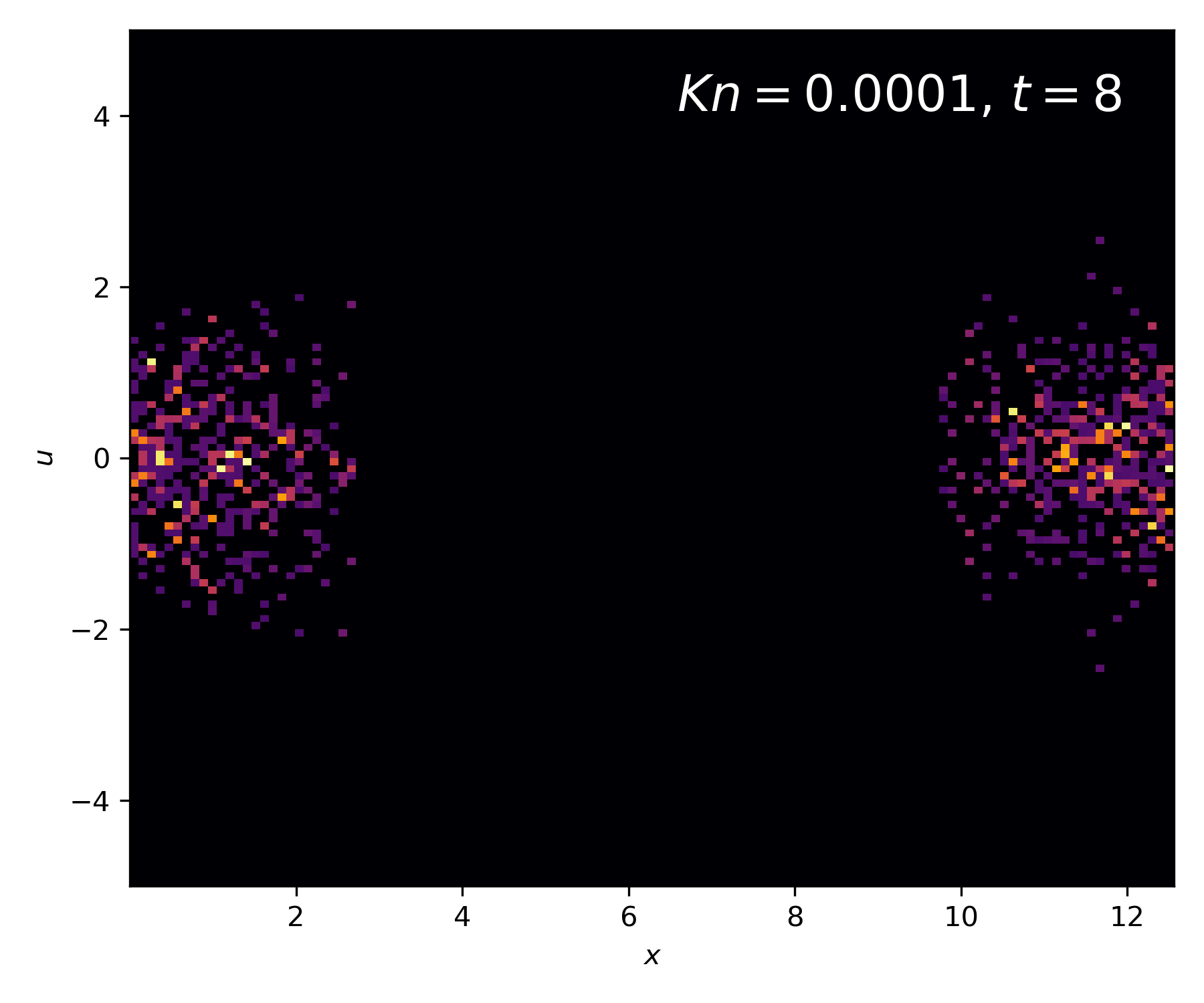}
    \end{subfigure}
    \vfill
    \centering
    \begin{subfigure}[b]{0.325\textwidth}
    \centering
    \includegraphics[width=1.0\linewidth]{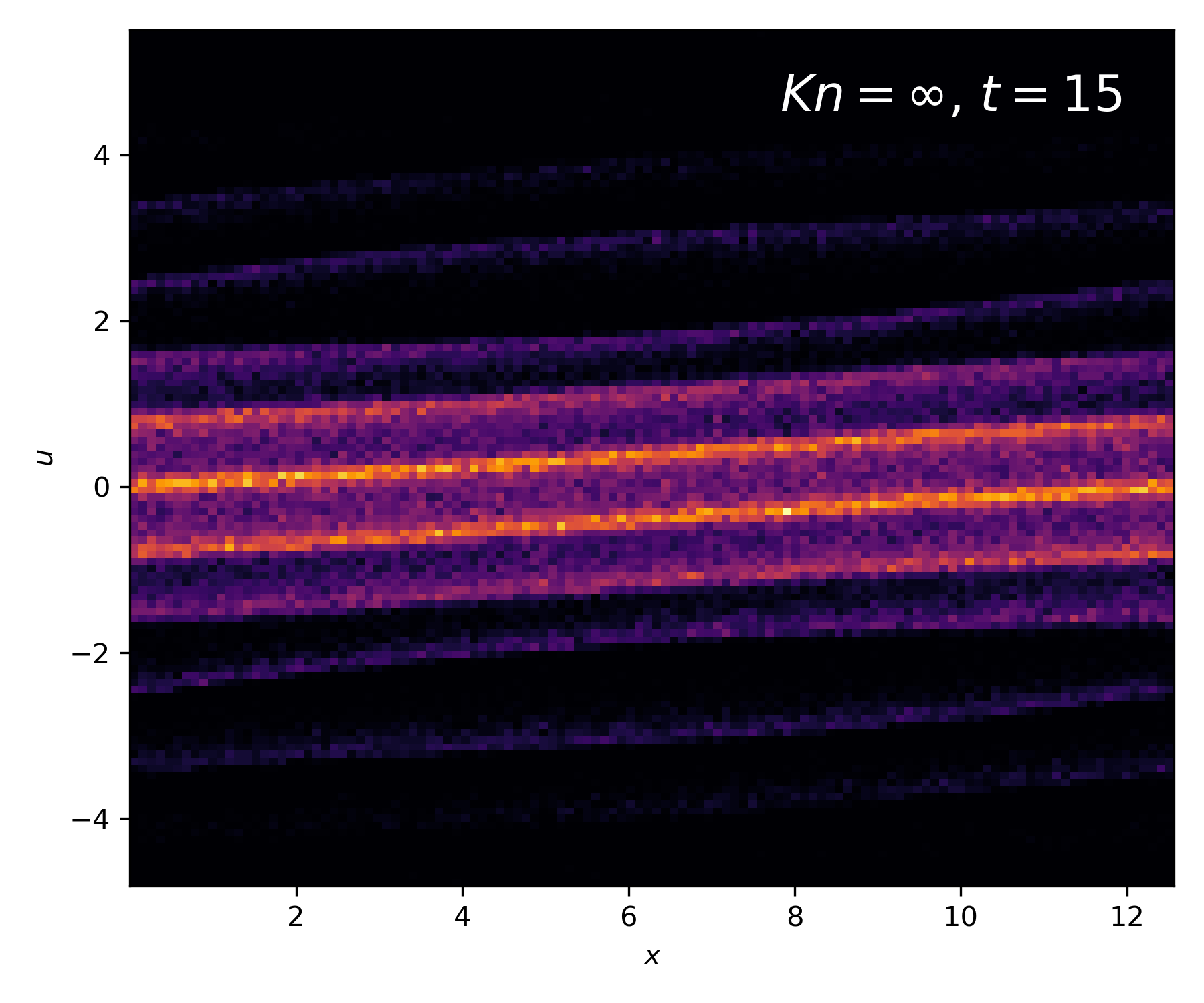}
    \end{subfigure}
    \hfill
    \begin{subfigure}[b]{0.325\textwidth}
    \centering
    \includegraphics[width=1.0\linewidth]{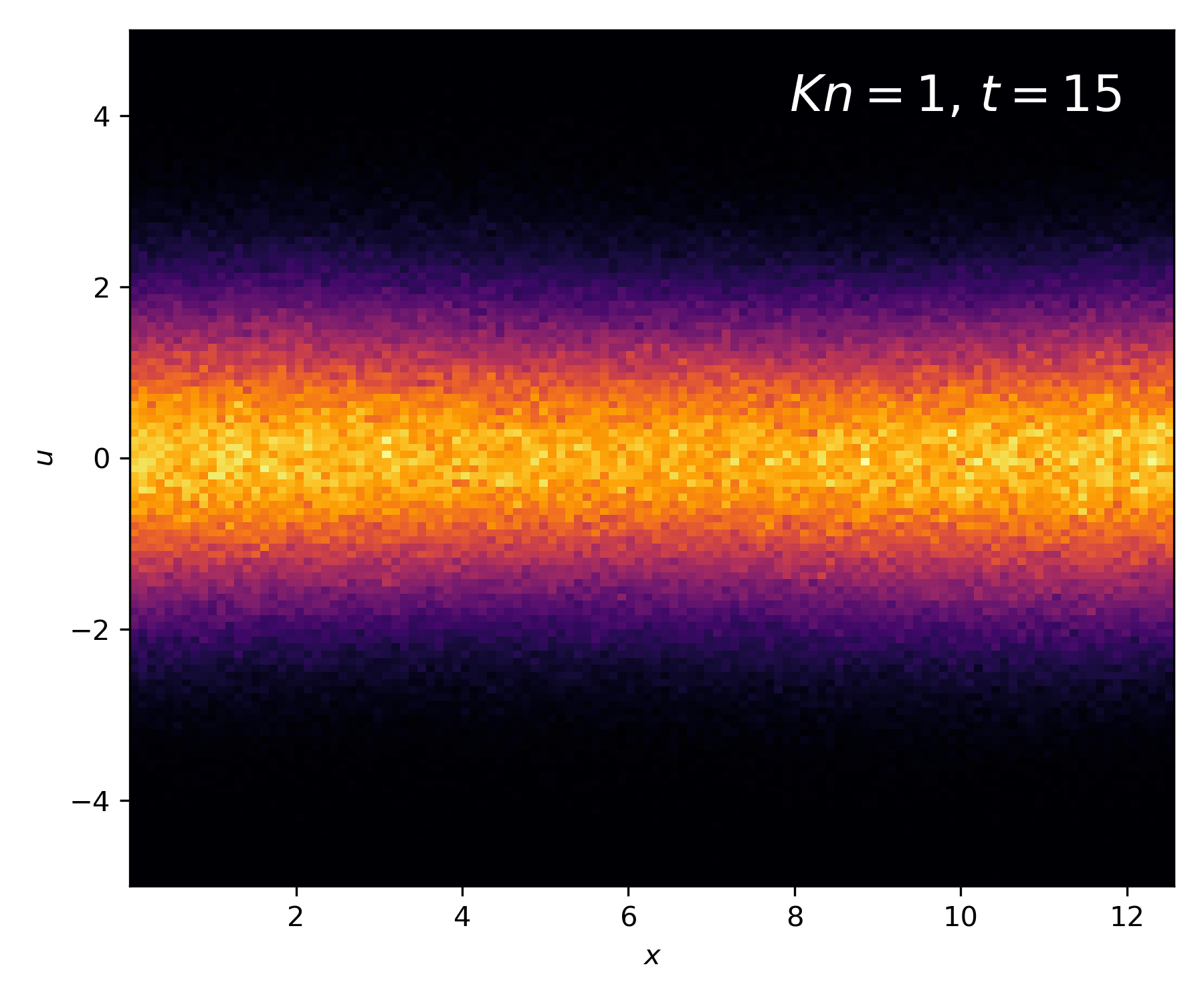}
    \end{subfigure}
    \hfill
    \begin{subfigure}[b]{0.325\textwidth}
    \centering
    \includegraphics[width=1.0\linewidth]{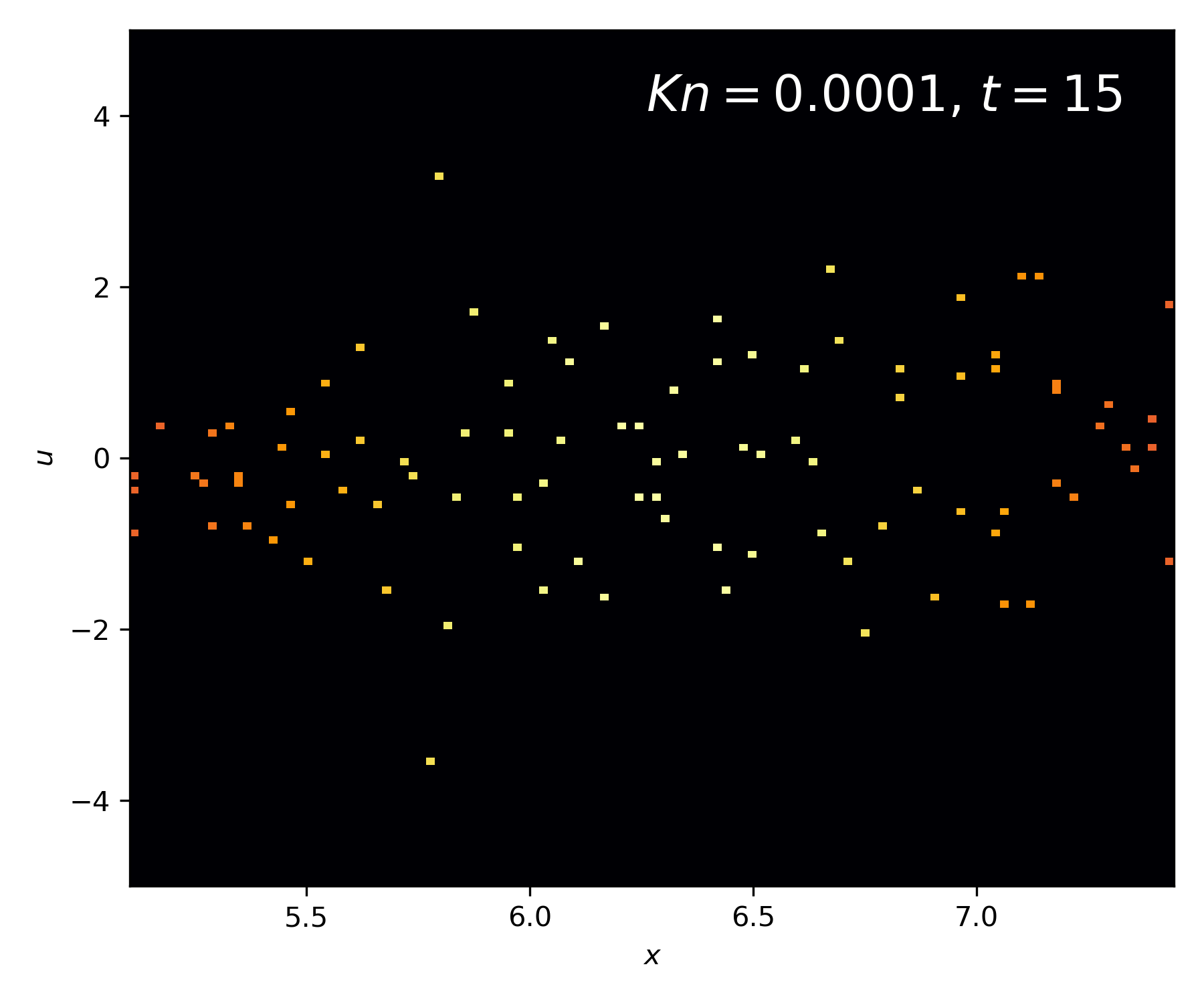}
    \end{subfigure}
    \caption{Nonlinear Landau damping with different Knudsen numbers by UGKWP-RPE, from left to right are $Kn=\infty, 1, 10^{-4}$. $\lambda = 1, \Delta x = 0.1, \text{CFL}=0.9$. The Figure shows the phase diagram at different times.}
    \label{fig:nld_phase_diagram-wp}
\end{figure}

Figure \ref{fig:nld-distribution} displays the velocity distribution at $x=L/2$ at various times as computed by the UGKS. At $t=1$, the distribution functions are approximately Maxwellian. For $Kn=0.0001$, the distribution exhibits the highest peaks, followed by $Kn=1$, and then $Kn=\infty$. At later times, the distribution in the collisionless regime ($Kn=\infty$) exhibits stronger wave-particle interactions, deviating from equilibrium. However, the distributions for $Kn=1$ and $Kn=0.0001$ appear to remain closer to equilibrium states compared to the collisionless case.

\begin{figure}
    \centering
    \begin{subfigure}[b]{0.48\textwidth}
    \centering
    \includegraphics[width=1.0\linewidth]{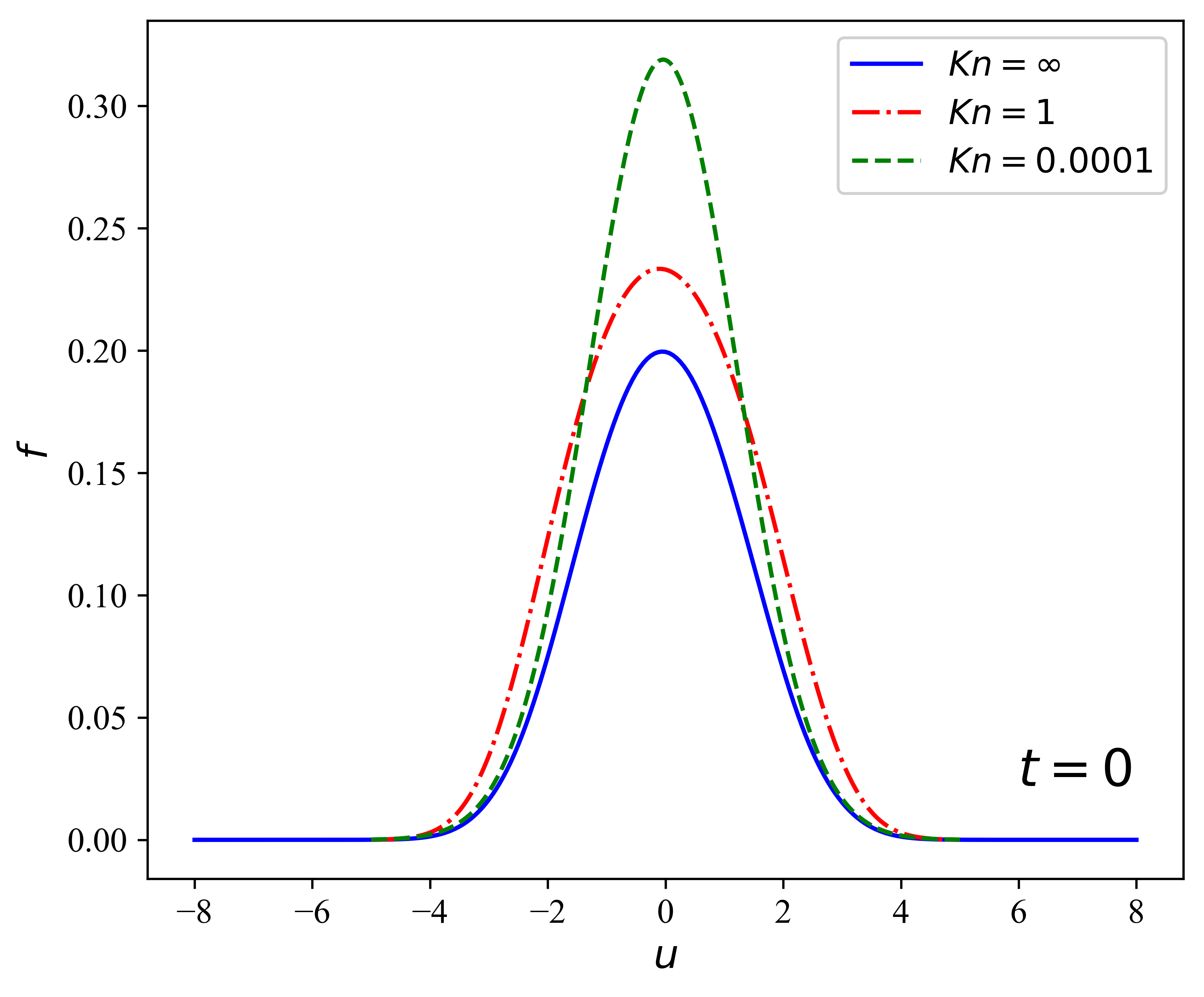}
    \end{subfigure}
    \hfill
    \begin{subfigure}[b]{0.48\textwidth}
    \centering
    \includegraphics[width=1.0\linewidth]{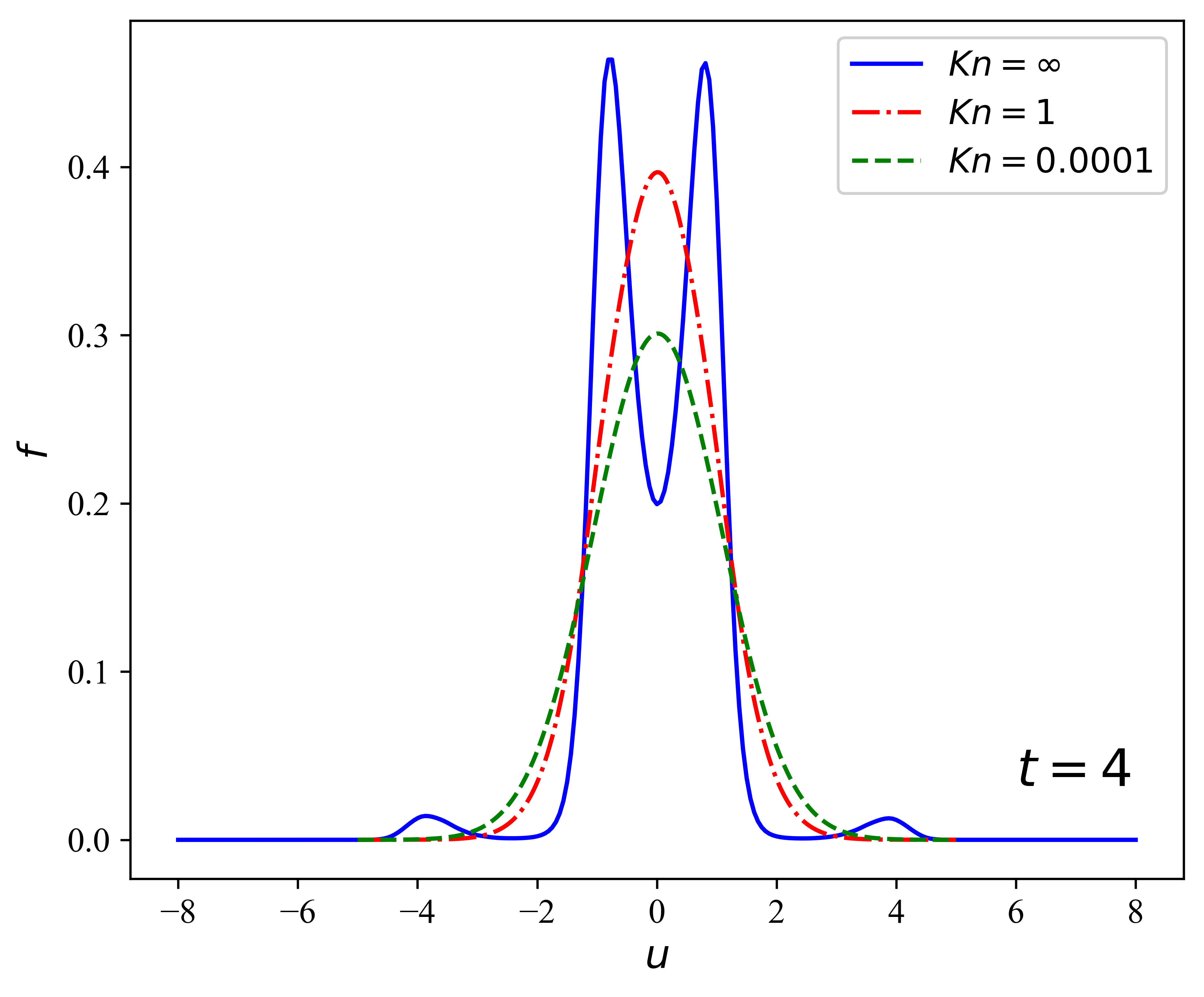}
    \end{subfigure}
    \vfill
    \begin{subfigure}[b]{0.48\textwidth}
    \centering
    \includegraphics[width=1.0\linewidth]{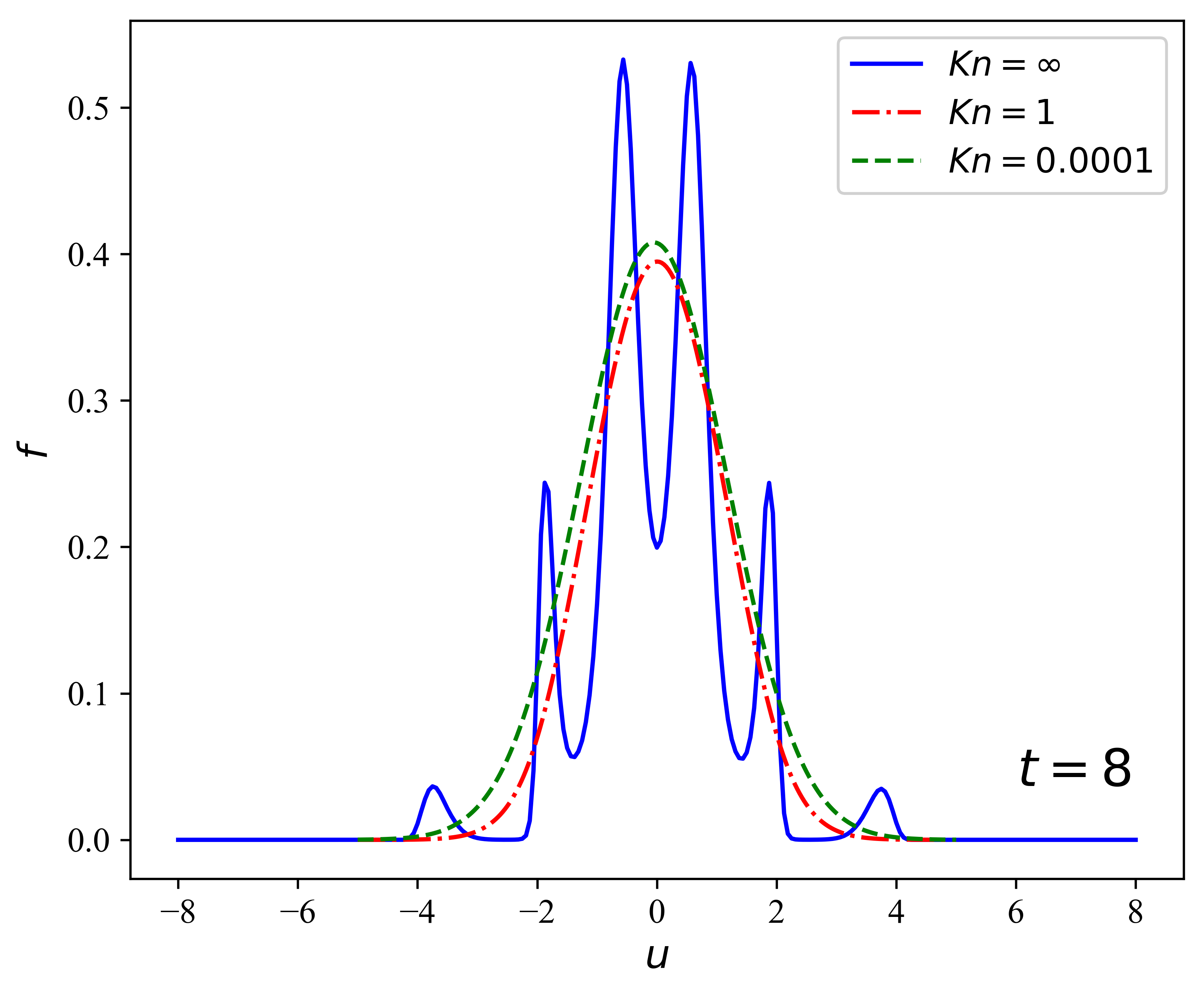}
    \end{subfigure}
    \hfill
    \begin{subfigure}[b]{0.48\textwidth}
    \centering
    \includegraphics[width=1.0\linewidth]{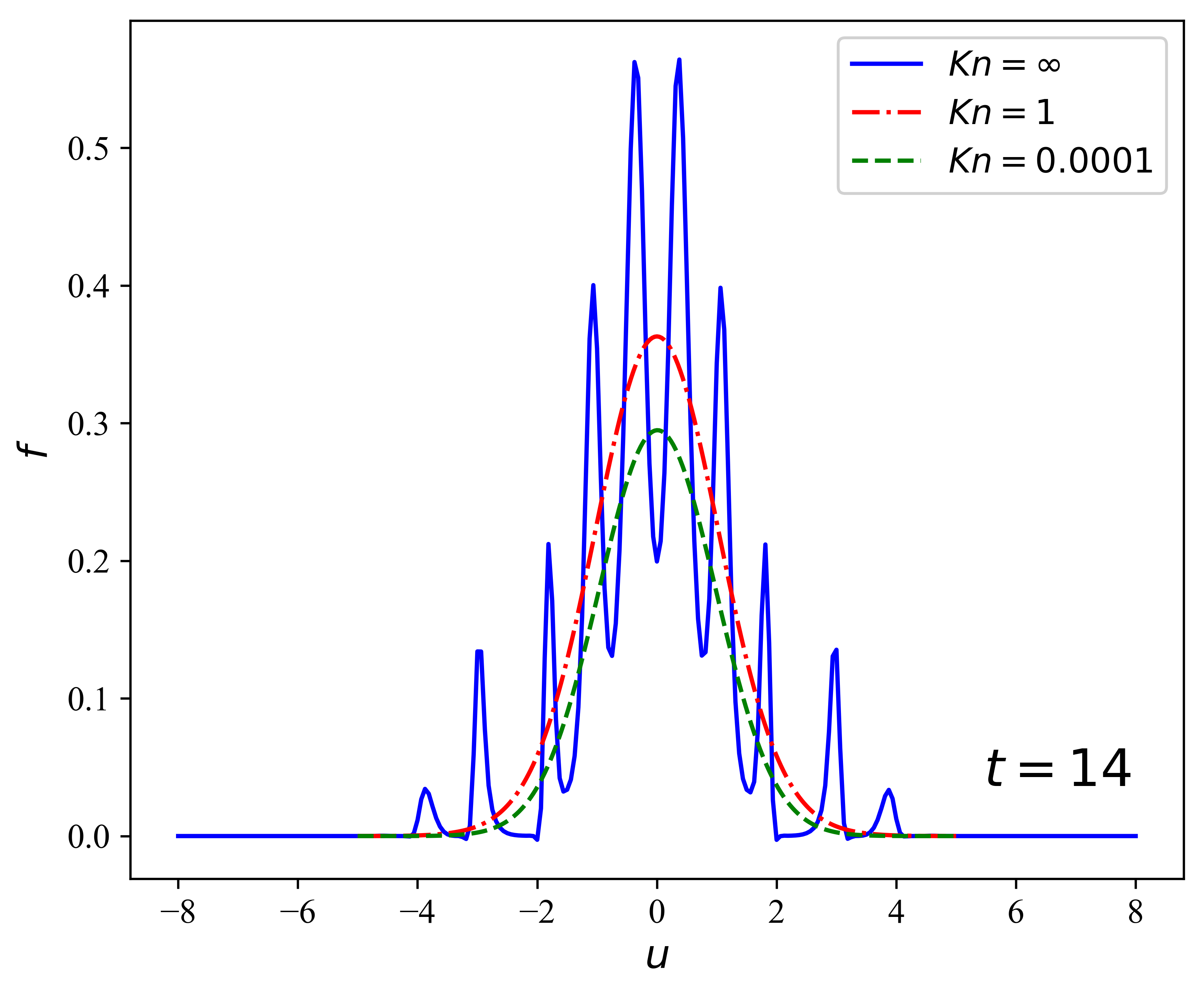}
    \end{subfigure}
    \caption{Nonlinear Landau damping with different Knudsen numbers by UGKS-RPE: $\lambda = 1, \Delta x = 0.1, \text{CFL}=0.9$. The figure shows distribution at velocity space $u$ for different time at $x=L/2$, from top-left to bottom-right are t = 1, 4, 8, 15.}
    \label{fig:nld-distribution}
\end{figure}

\subsection{Bump on tail instability}
\label{sec:bti}

This section investigates the bump-on-tail instability to assess the numerical scheme's performance. We maintain a constant Knudsen number of $Kn = \infty$ and systematically reduce the Debye length, $\lambda$, from $1$ to $10^{-6}$. This variation spans the electrostatic and quasineutral regimes, enabling us to evaluate the scheme's ability to accurately resolve the relevant physics in the quasineutral limit without being constrained by the Debye length.

The initial distribution function for the bump-on-tail instability is defined as:
$$
f_0(x,u) = f_p(u)(1+\alpha \cos(kx)),
$$
where the perturbation parameter is $\alpha = 0.04(0.01+0.99\lambda)$, and the wavenumber is $k=0.3$. The function $f_p(u)$ represents a Maxwellian distribution with a bump on its tail, expressed as:
$$
f_p(u) = n_pe^{-u^2/2} + n_b e^{-(u-U)^2/2RT}
$$
Here, $n_p=\frac{9}{10\sqrt{2\pi}}$ and $n_b=\frac{2}{10\sqrt{2\pi}}$ represent the number densities of the plasma and bump components, respectively. The macroscopic velocity of the bump is $U=4.5$, and its thermal energy is characterized by $RT=0.25$. The spatial domain extends over $x\in[0, L]$, where $L=2\pi/k$ with 256 discretized mesh grids. For UGKS, the velocity space is defined as $u\in[-6, 9]$ for $\lambda \geq 10^{-3}$ and expanded to $u\in[-12,12]$ for $\lambda < 10^{-3}$ with 256 velocity points. For UGKWP, 1000 simulation particles are used in each cell. For $\lambda\leq10^{-3}$, we compute the electrical energy as $E_p = \frac{1}{2} \int\left|\nabla\phi\right|^2 d x$.

Figure \ref{fig:bti-kninf-energy} presents the temporal evolution of the electrical energy and the corresponding timestep for the resolved case with $\lambda=1$, $\Delta x = 0.08$, and $\text{CFL}=0.9$. The left panel shows that the evolution of electrical energy obtained with the UGKS-RPE is in good agreement with that of the UGKWP-RPE method. The electrical energy initially decreases, followed by an increase attributed to the onset of the instability. The right panel depicts the timestep evolution, revealing that UGKS employs a smaller timestep compared to UGKWP-RPE. This difference arises from the larger velocity space used by UGKS-RPE, which, at the same CFL number, yields a smaller time step. Figure \ref{fig:bti-phase} displays the phase-space diagrams at time $t=40$. The phase-space distribution obtained with UGKS-RPE (left panel) is in good agreement with the UGKWP-RPE results (right panel). The instability is clearly shown in the phase space at this time.

\begin{figure}
\begin{subfigure}[b]{0.48\textwidth}
\centering
    \includegraphics[width=1.0\textwidth]{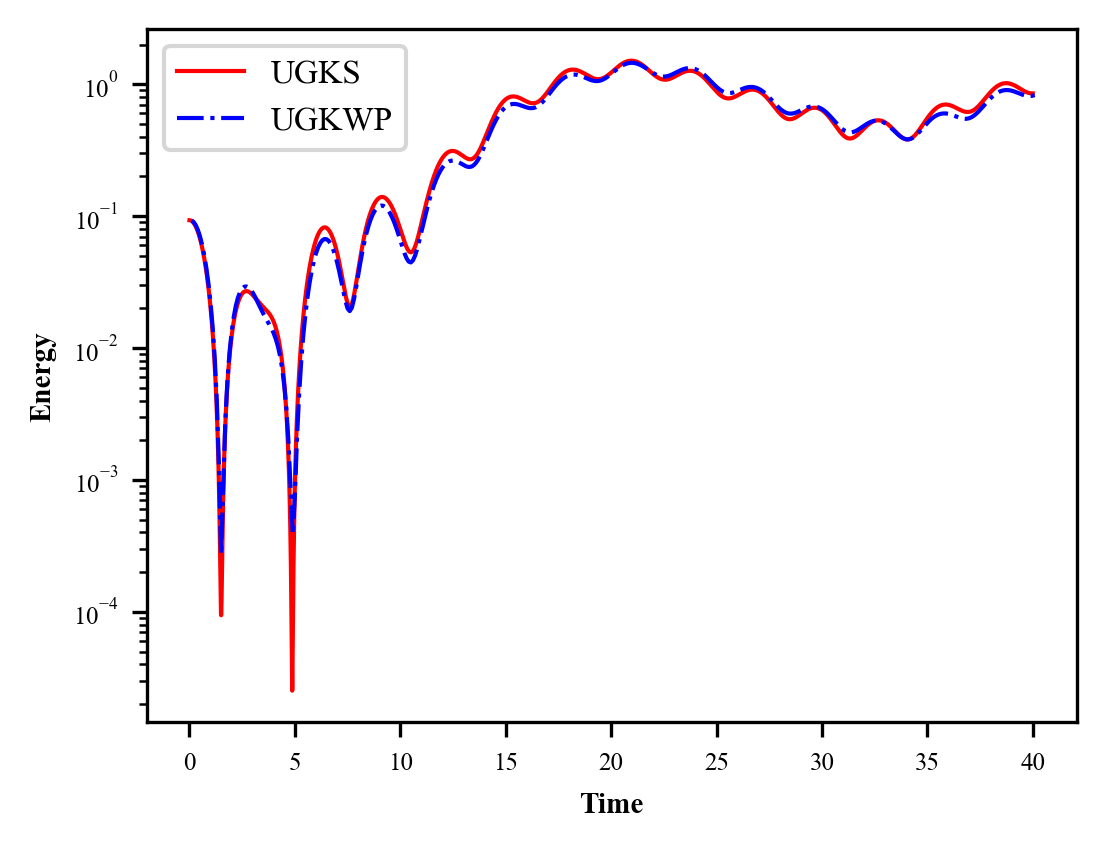}
\caption{}
\label{fig:lldk0.3}
\end{subfigure}
\begin{subfigure}[b]{0.48\textwidth}
\centering
    \includegraphics[width=1.0\textwidth]{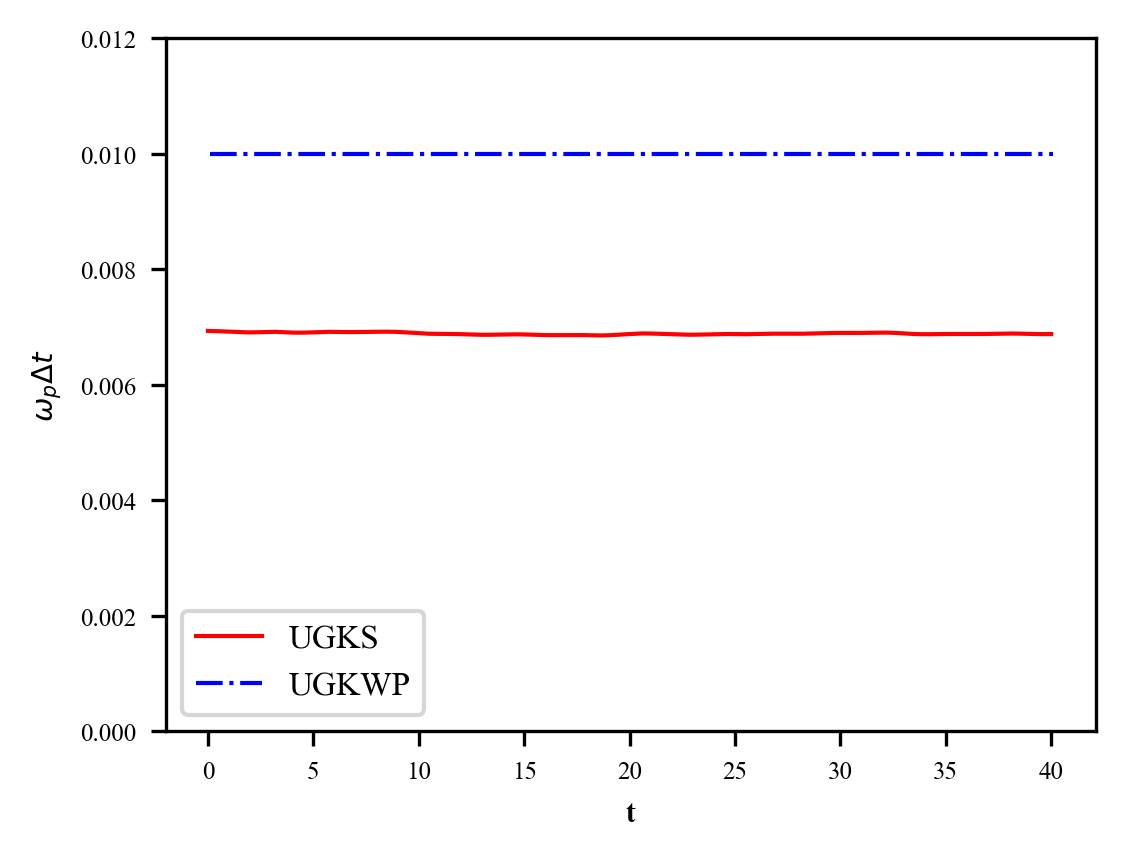}
\caption{}
\label{fig:lldk1}
\end{subfigure}
\caption{Bump on tail instability by UGKS-RPE and UGKWP-PRE, resolved case: $\lambda = 1, \Delta x = 0.08, \text{CFL} = 0.9$ at t=40. Left: time evolution of electrostatic energy. Right: time evolution of timestep. }
\label{fig:bti-kninf-energy}
\end{figure}

\begin{figure}
\begin{subfigure}[b]{0.48\textwidth}
\centering
    \includegraphics[width=1.0\textwidth]{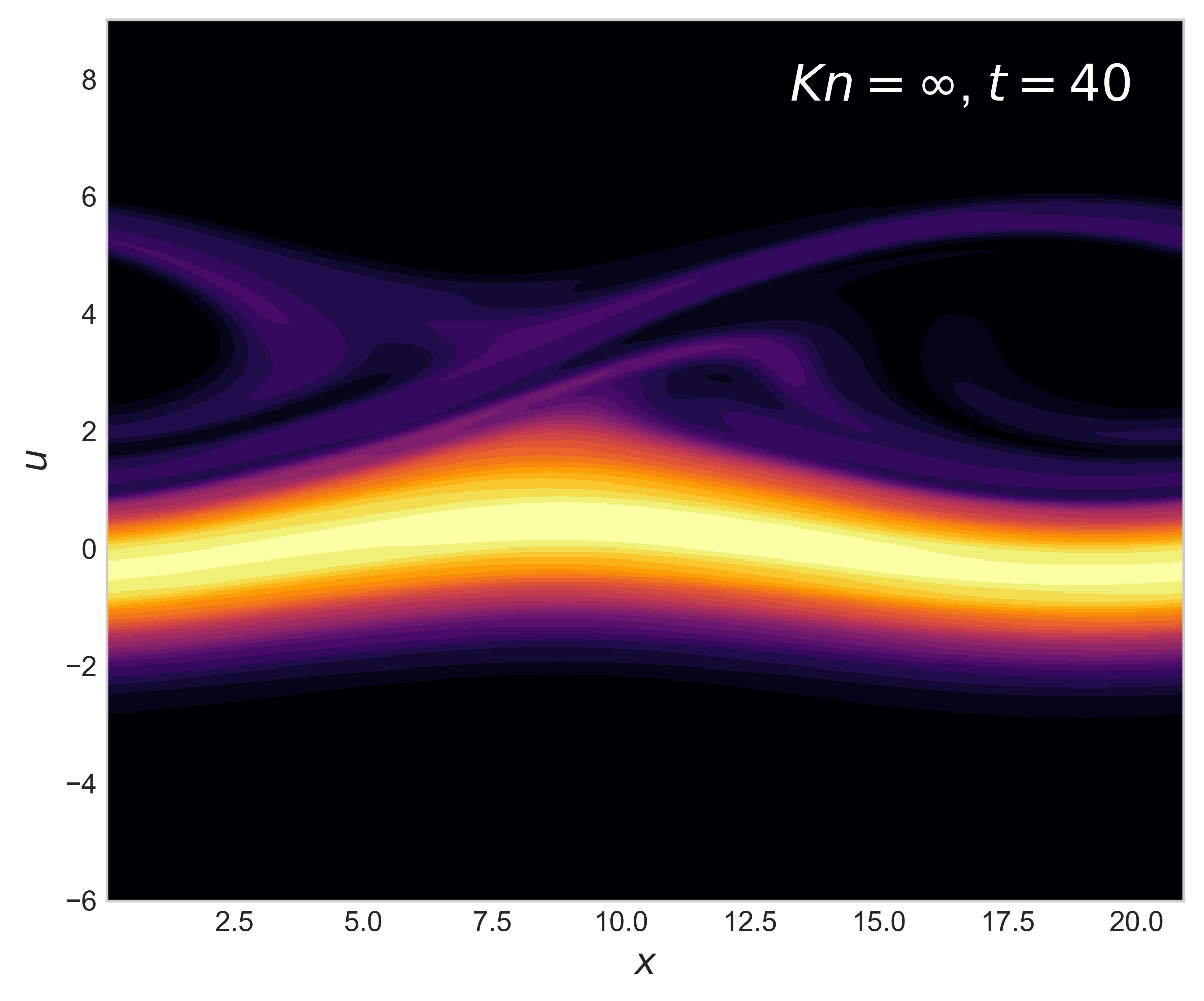}
\caption{}
\label{fig:lldk0.3}
\end{subfigure}
\begin{subfigure}[b]{0.48\textwidth}
\centering
    \includegraphics[width=1.0\textwidth]{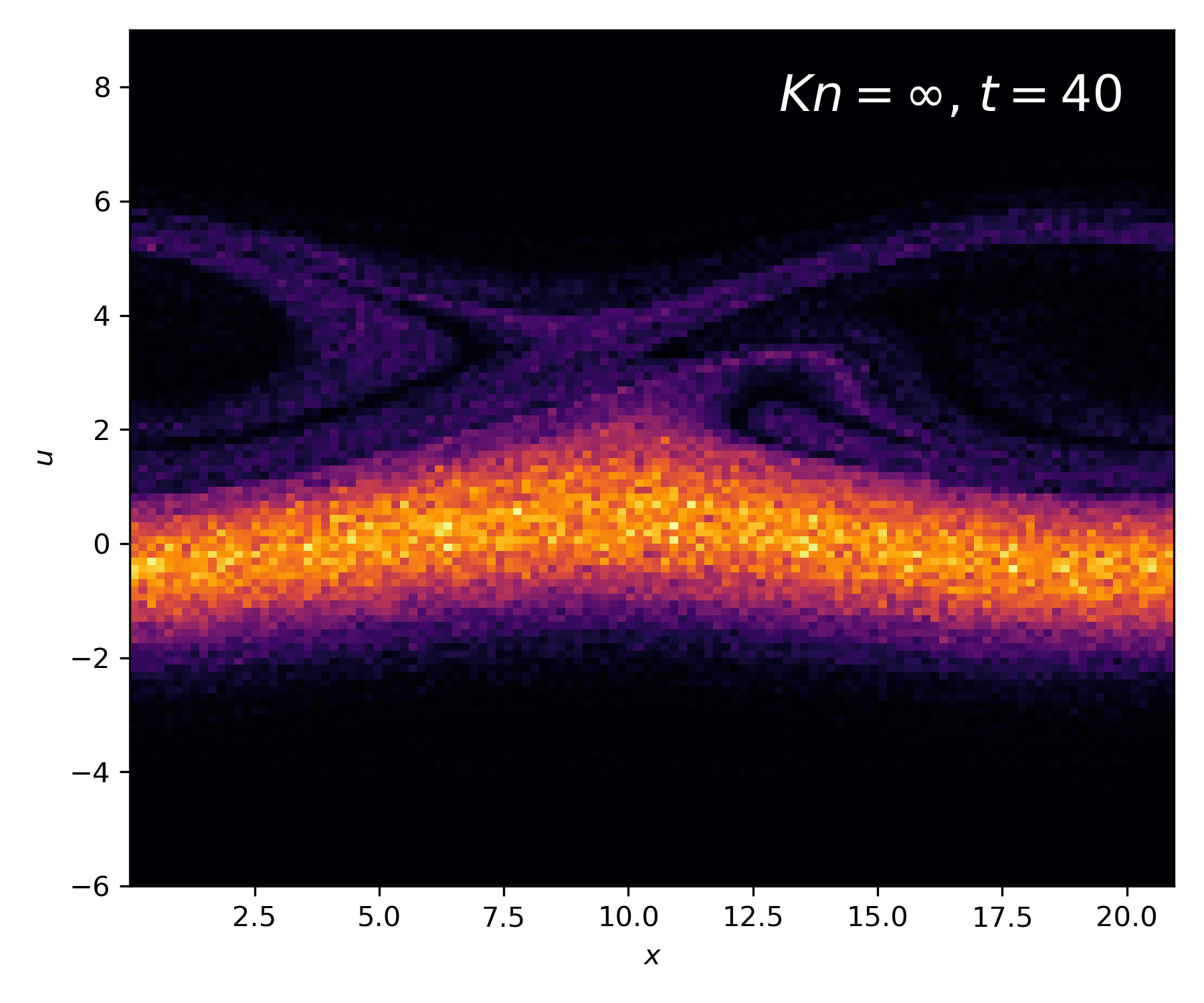}
\caption{}
\label{fig:lldk1}
\end{subfigure}
\caption{Bump on tail instability, resolved case: $\lambda = 1, \Delta x = 0.08, \text{CFL} = 0.9$ at t=40. Left: phase plot of the distribution function predicted by UGKS-RPE. Right: phase space plot of the distribution function predicted by UGKWP-RPE. }
\label{fig:bti-phase}
\end{figure}

Figure \ref{fig:btiqnugks} presents the simulation results for the bump-on-tail instability in the quasineutral regime by UGKS-RPE, with Debye lengths ranging from $\lambda=10^{-3}$ down to $\lambda=0$. The left panel shows the evolution of electrostatic energy. Notably, the electrostatic energy exhibits a consistent trend as the Debye length decreases. It initially decreases, then increases, and finally decays toward a stable state, indicating the disappearance of the instability. The right panel shows the timestep evolution, demonstrating that the timestep is not limited by the Debye length. The smaller timesteps observed for $\lambda=10^{-6}$ and $\lambda=0$ compared to $\lambda=10^{-3}$ arise from the use of a larger velocity space, $[-12,12]$, for the former cases, versus $[-6,9]$ for the latter.
Figure \ref{fig:bti-phaseqn} illustrates the phase-space evolution of UGKS-RPE at $\lambda=10^{-3}$ at times $t=0, 2, 3,$ and $15$. At $t=2$, instability is evident, but it gradually decreases by $t=5$. By $t=15$, the phase diagram closely resembles the initial state at $t=0$, indicating that the instability has been effectively smoothed out.

\begin{figure}
\begin{subfigure}[b]{0.48\textwidth}
\centering
    \includegraphics[width=1.0\textwidth]{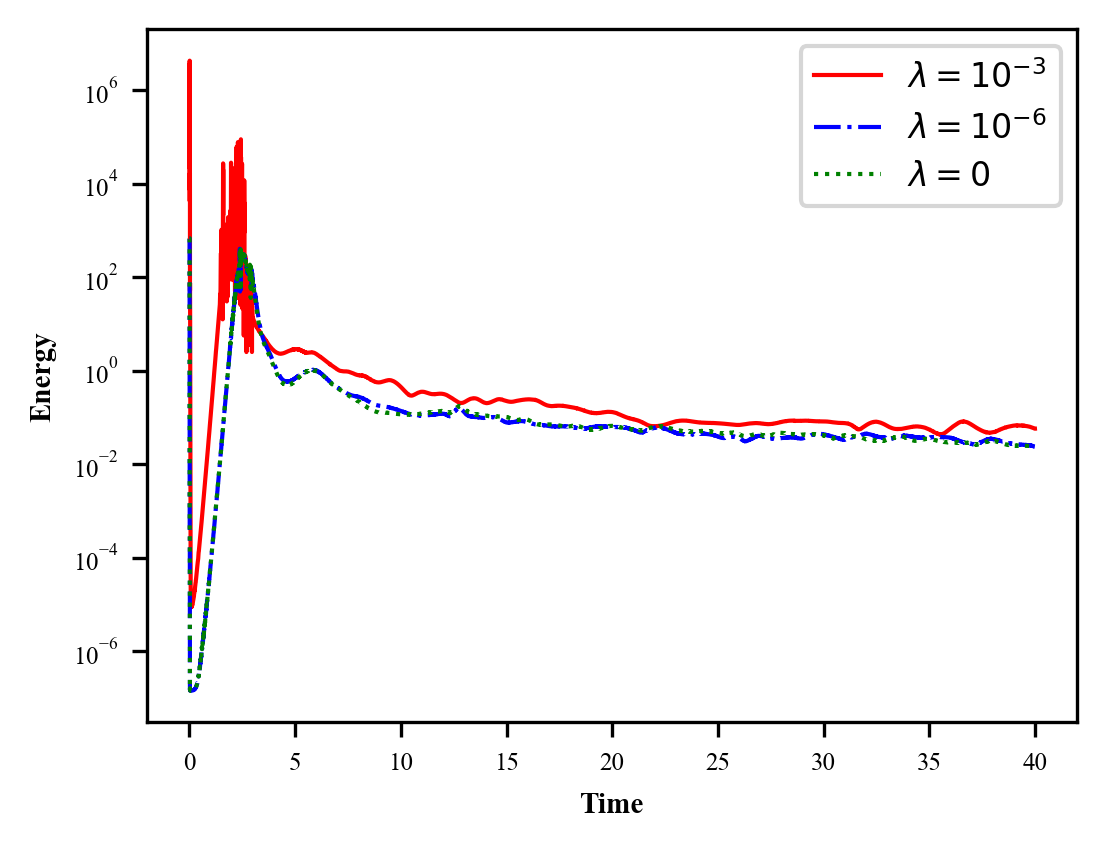}
\caption{}
\label{fig:lldk0.3}
\end{subfigure}
\begin{subfigure}[b]{0.48\textwidth}
\centering
    \includegraphics[width=1.0\textwidth]{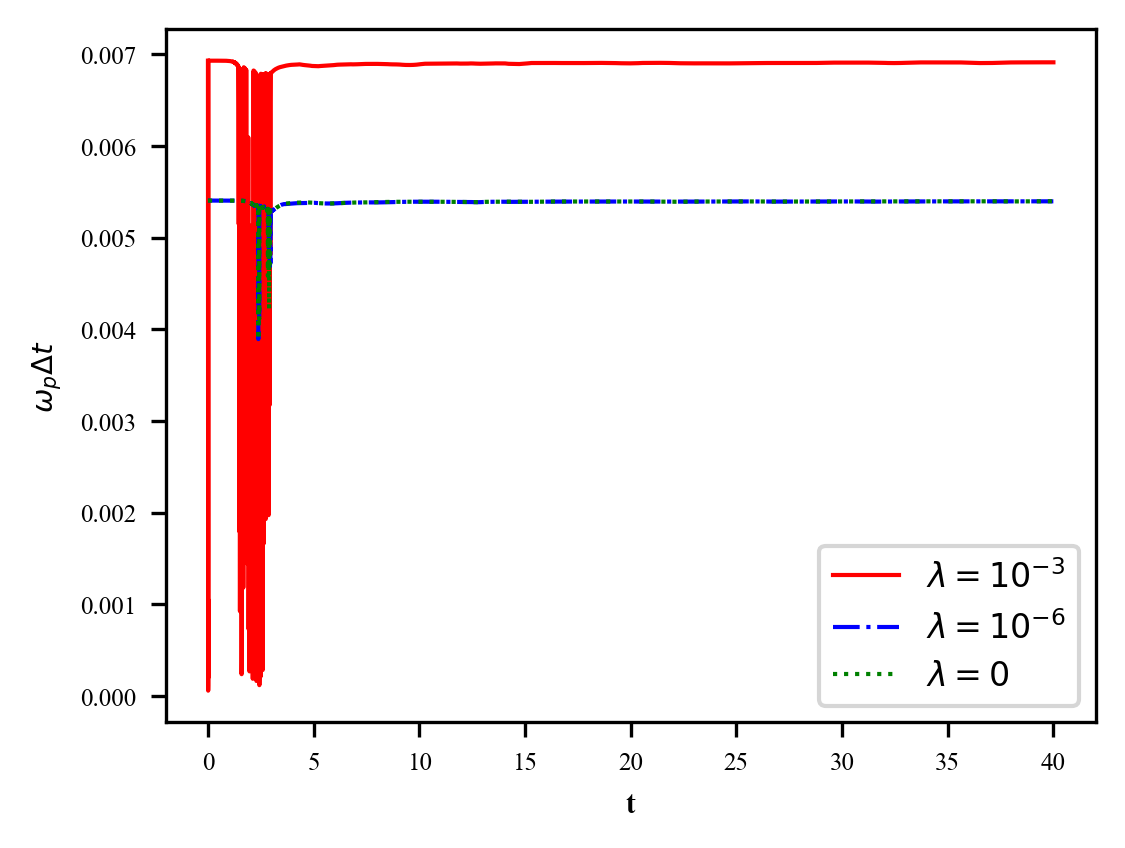}
\caption{}
\label{fig:lldk1}
\end{subfigure}
\caption{Bump on tail instability in the quasineutral limit by UGKS-RPE: $\lambda = 10^{-3}, 10^{-6}, 0, \Delta x = 0.08, \text{CFL} = 0.9$. Left: time evolution of electrostatic energy. Right: time evolution of timestep. The timestep is much larger than the Debye length. When $\lambda=10^{-6}$ and $0$, the timestep is smaller compared the $\lambda=10^{-3}$ because a larger velocity space is used.}
\label{fig:btiqnugks}
\end{figure}

\begin{figure}
\begin{subfigure}[b]{0.48\textwidth}
\centering
    \includegraphics[width=1.0\textwidth]{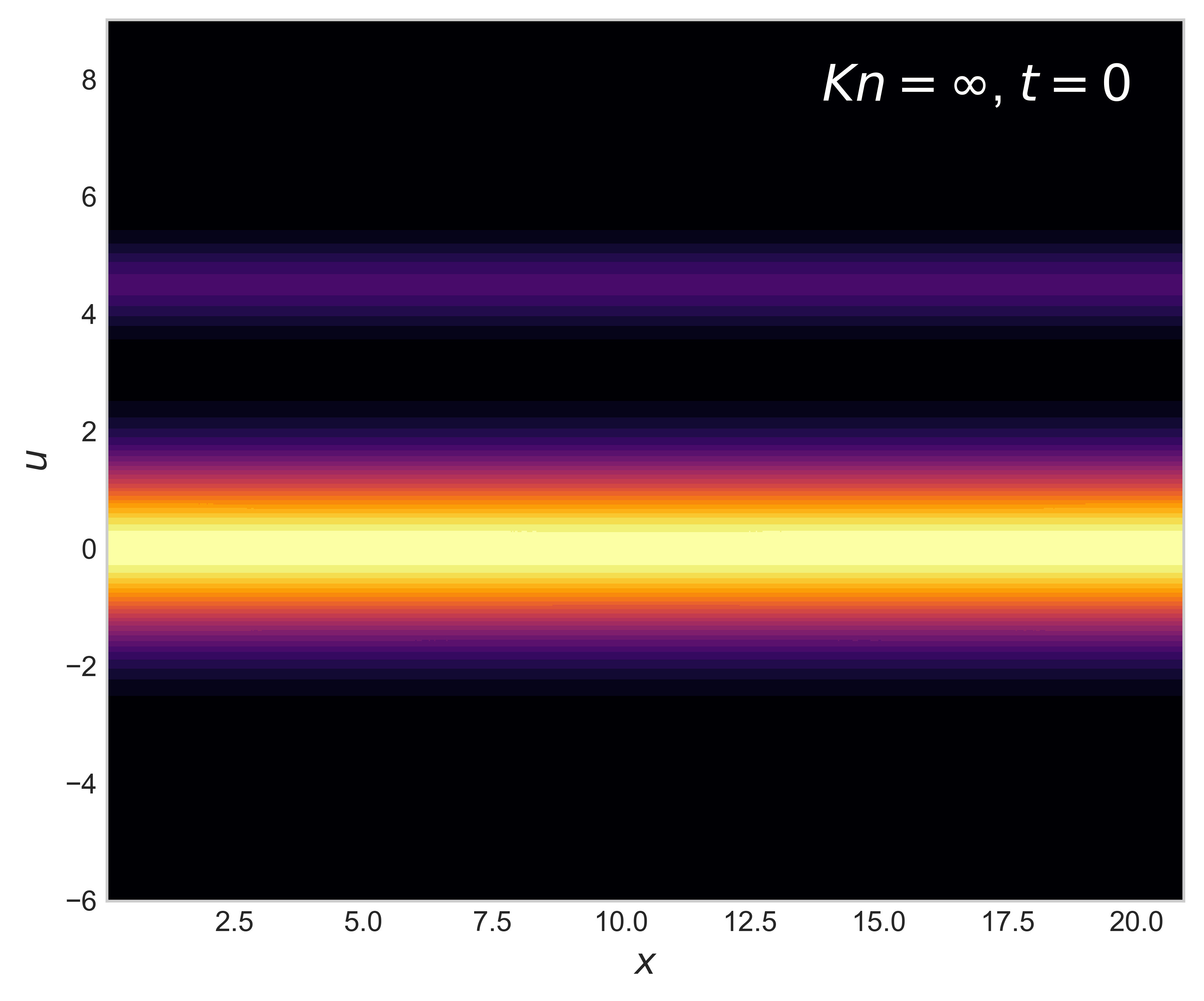}
    \caption{}
\end{subfigure}
\begin{subfigure}[b]{0.48\textwidth}
\centering
    \includegraphics[width=1.0\textwidth]{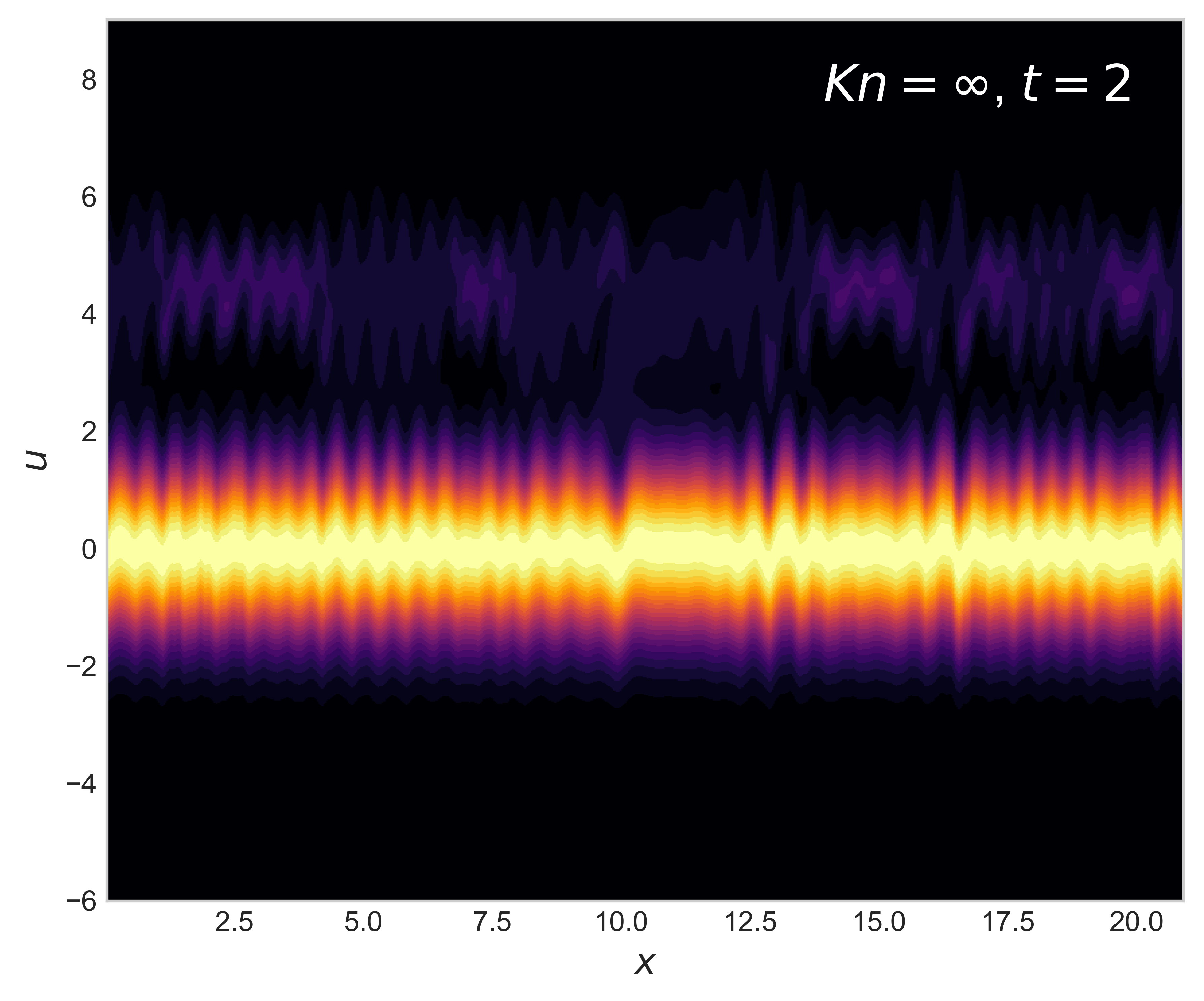}
\caption{}
\label{fig:lldk1}
\end{subfigure}
\vfill
\begin{subfigure}[b]{0.48\textwidth}
\centering
    \includegraphics[width=1.0\textwidth]{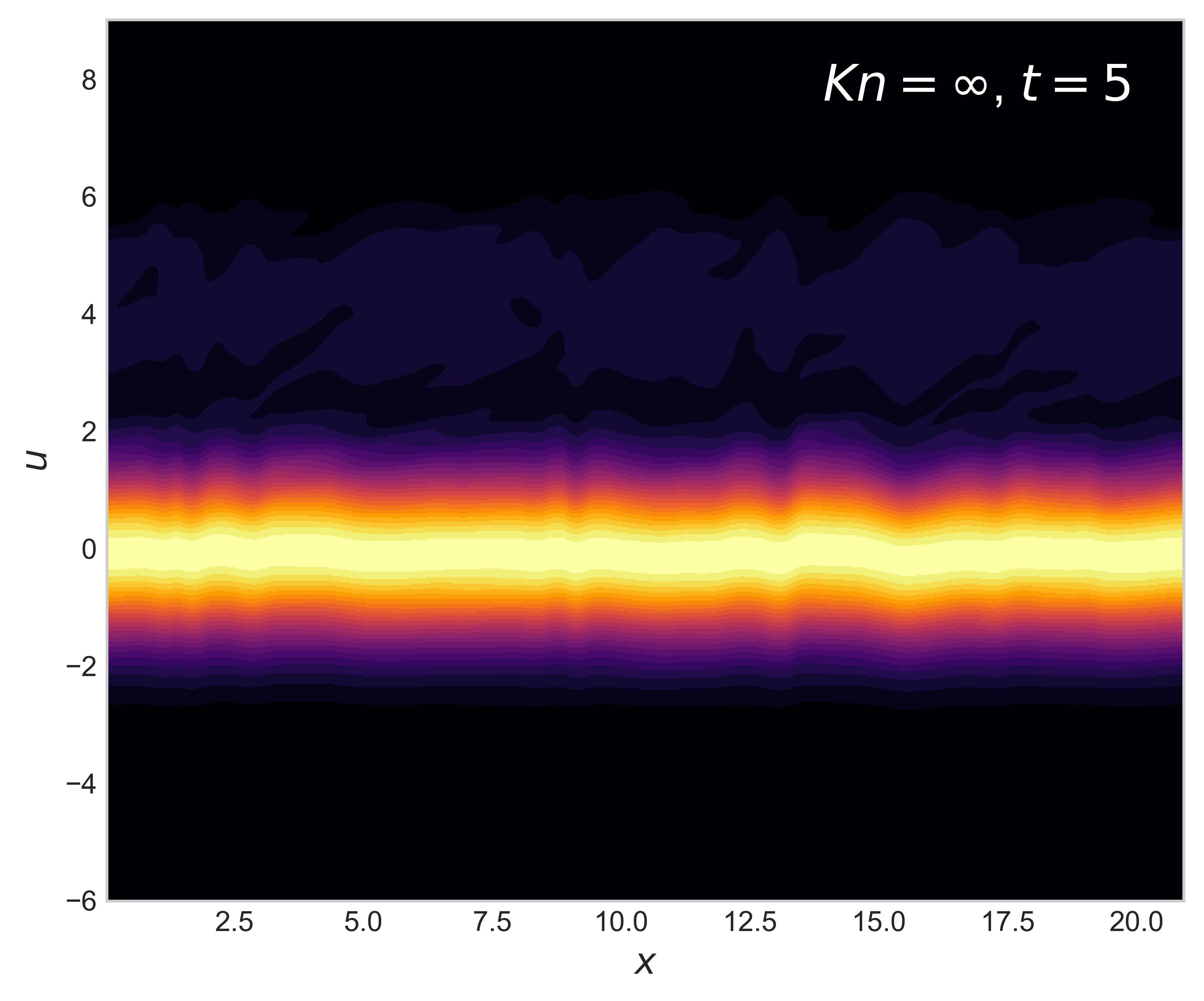}
    \caption{}
\end{subfigure}
\begin{subfigure}[b]{0.48\textwidth}
\centering
    \includegraphics[width=1.0\textwidth]{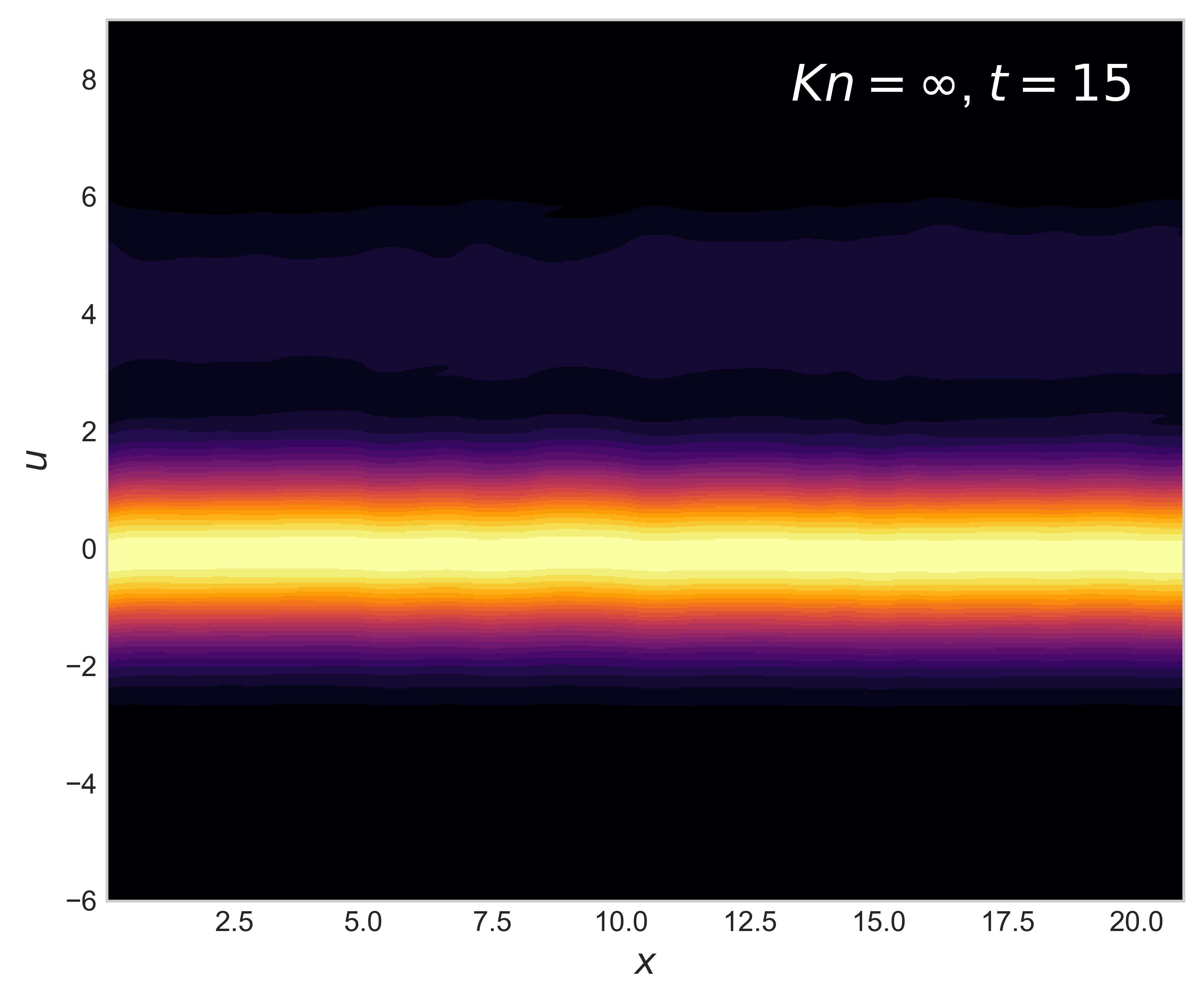}
    \caption{}
\end{subfigure}
\caption{Bump on tail instability, quasi-neutral case by UGKS-RPE: $\lambda = 10^{-3}, \Delta x = 0.08, \text{CFL} = 0.9$ phase diagram at (a) t=0 (b) t=2 (c) t=5 (d) t=15.}
\label{fig:bti-phaseqn}
\end{figure}

Figure \ref{fig:btikninfwp} presents the results of the bump-on-tail instability in the under-resolved regime, obtained using the UGKWP method. The Debye length is set to $\lambda=0.1$, and the mesh size is $\Delta x = 0.16 \geq \lambda$. The left panel illustrates the evolution of electrical energy for both UGKWP-PE and UGKWP-RPE. UGKWP-PE diverges in this under-resolved regime, whereas UGKWP-RPE remains stable. The electrical energy of UGKWP-RPE is on the order of $10^{-1}$, which aligns with the results observed in the UGKS-RPE simulations in Figure \ref{fig:btiqnugks}. In contrast to the deterministic nature of UGKS, the stochasticity of UGKWP introduces additional noise, resulting in a noisier electrical energy signal. The right panel shows that the UGKWP-RPE timestep exceeds $\omega_{pe}^{-1}$, demonstrating its ability to remain stable in the under-resolved regime. Although the timestep of UGKWP-PE is set to $\Delta t=\omega_{pe}^{-1}$, it still fails to maintain stability.

\begin{figure}
\begin{subfigure}[b]{0.48\textwidth}
\centering
    \includegraphics[width=1.0\textwidth]{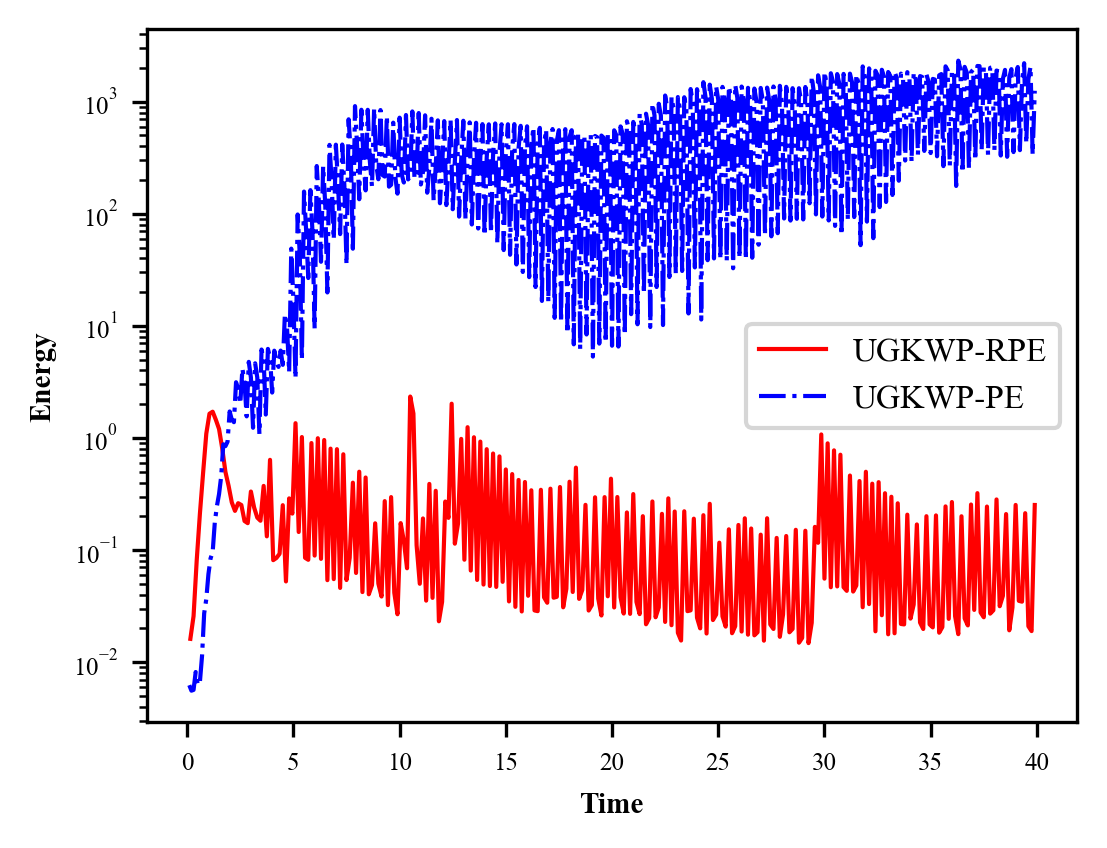}
\caption{}
\end{subfigure}
\begin{subfigure}[b]{0.48\textwidth}
\centering
    \includegraphics[width=1.0\textwidth]{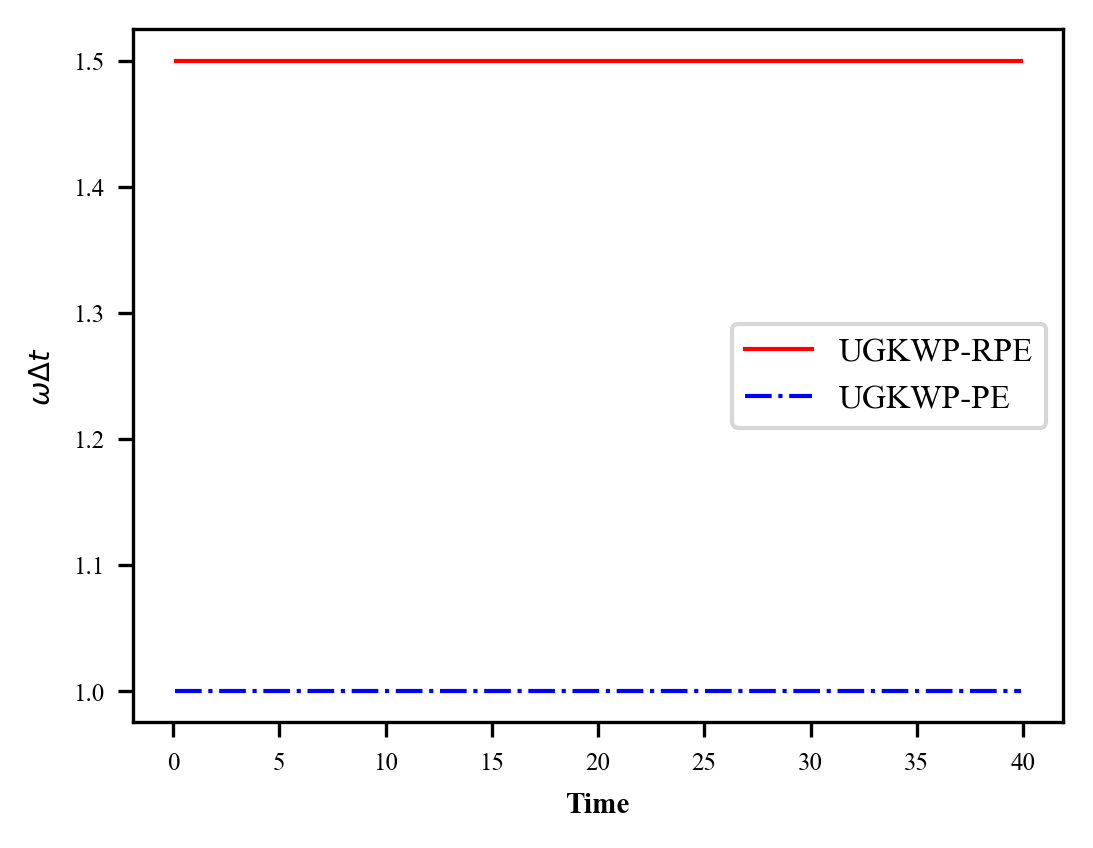}
\caption{}
\end{subfigure}
\caption{Bump on tail instability by UGKWP-RPE, under-resolved case: $\lambda = 0.1, \Delta x = 0.16 > \lambda$. Left: time evolution of electrostatic energy. Right: time evolution of timestep. }
\label{fig:btikninfwp}
\end{figure}

\subsection{Linear Landau damping}
\label{sec:lld}

This section examines the linear Landau damping case. To assess the numerical scheme's ability to resolve physics across the hydrodynamical and quasineutral limits simultaneously, we systematically vary the Knudsen number and the Debye length. The initial electron distribution function is the same as that employed in the nonlinear Landau damping study detailed in section \ref{sec:nld}, with the specific parameters for this investigation being $k = 1.0$ and perturbation amplitude $\alpha=0.0001$. The spatial domain is discretized into 128 cells.

Figure \ref{fig:lldkninfugksqn} presents the results obtained using the UGKS in the quasineutral and collisionless regime. Simulations were conducted with Debye lengths ranging from $\lambda = 10^{-3}$ to $\lambda = 10^{-4}$, and also with $\lambda = 0$. The electrical energy is calculated as $E_p = \frac{1}{2} \int\left|\nabla\phi\right|^2 d x$. The left panel of Figure \ref{fig:lldkninfugksqn} demonstrates that UGKS accurately captures the quasineutral behavior in the quasineutral regime. Specifically, the observed damping rate closely matches the theoretical prediction of -1.73 reported by Crouseilles et al. \cite{crouseilles2016multiscale}. The right panel of Figure \ref{fig:lldkninfugksqn} illustrates the temporal evolution of the timestep, revealing that the timestep remains independent of the Debye length, even at the smallest Debye lengths investigated. Figure \ref{fig:lldkninfugksqn_phase} depicts the phase space diagram at times $t = 1$ and $t = 10$. The phase space distribution remains consistently close to the equilibrium distribution. These observations indicate that, in the quasineutral regime, the plasma exhibits fluid-like behavior.
\begin{figure}
    \centering
    \begin{subfigure}[b]{0.48\textwidth}
    \centering
    \includegraphics[width=0.95\linewidth]{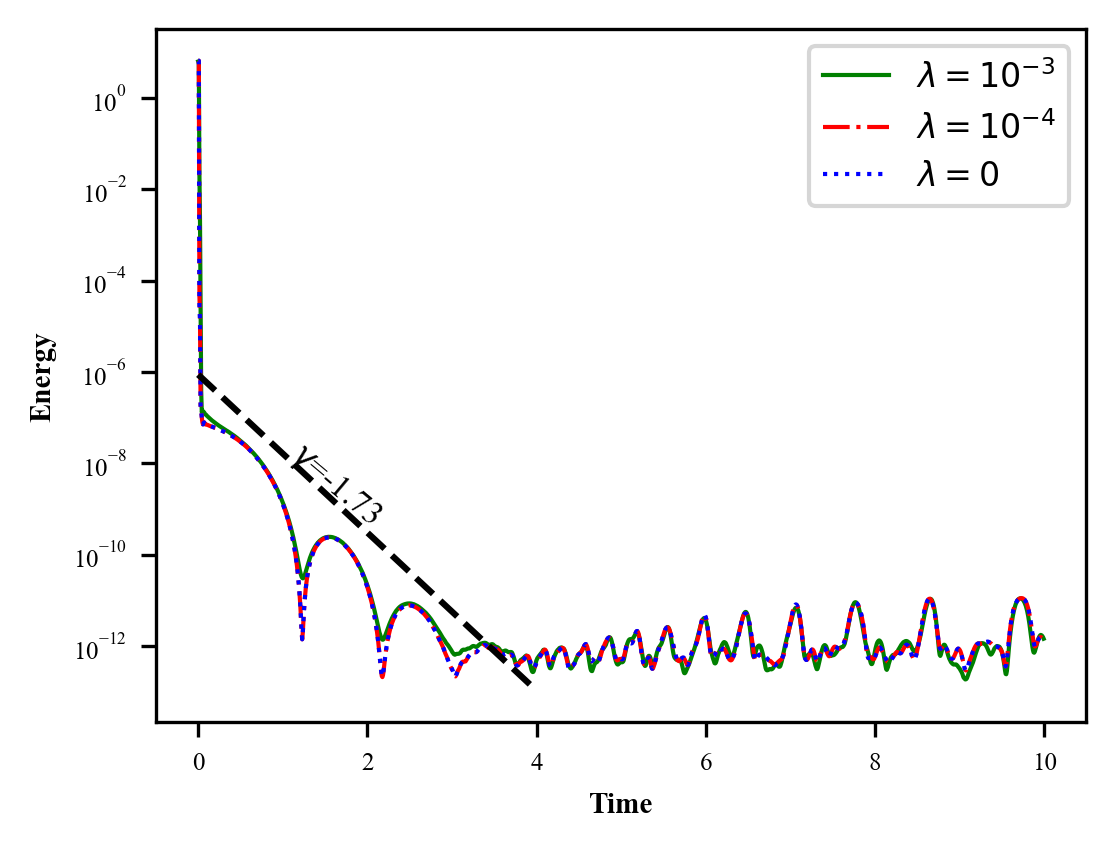}
    \end{subfigure}
    \hfill
    \begin{subfigure}[b]{0.48\textwidth}
    \centering
    \includegraphics[width=0.95\linewidth]{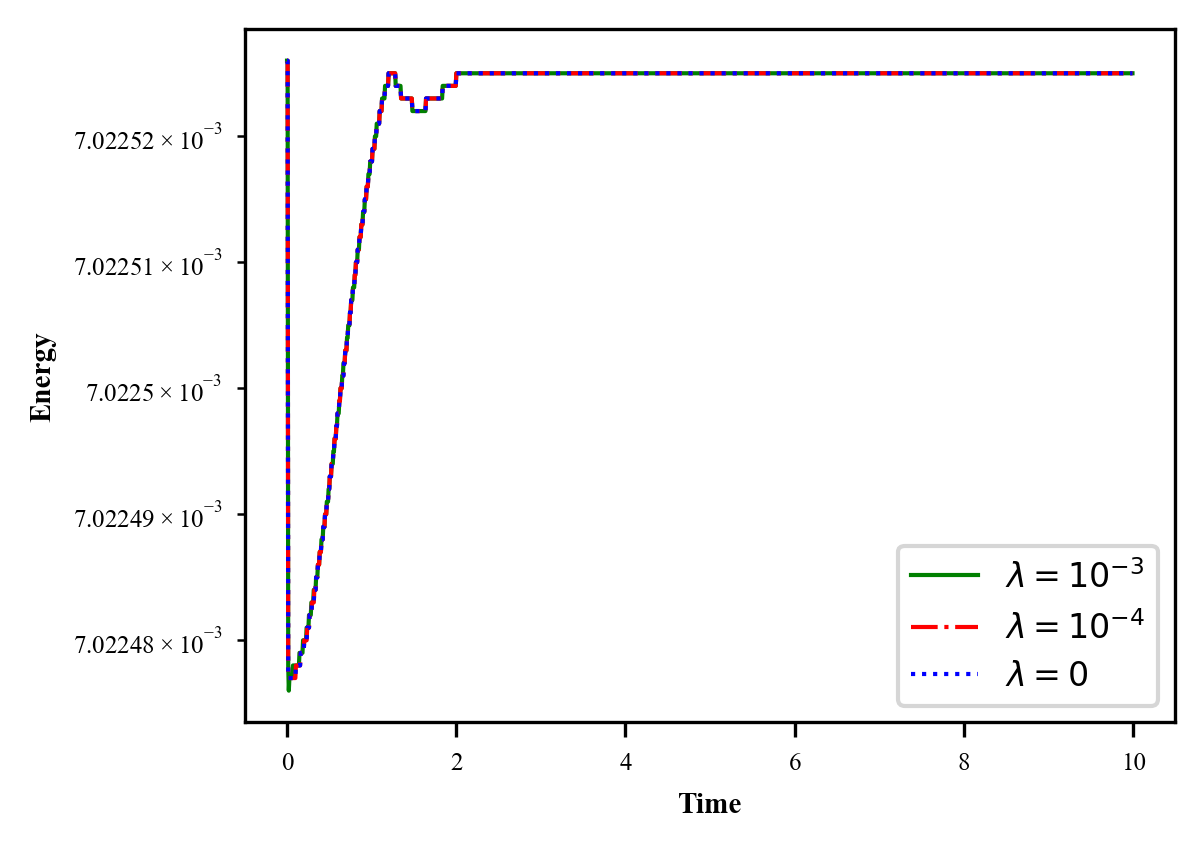}
    \end{subfigure}
    \caption{Linear Landau damping, the quasineutral limit case by UGKS-RPE at $\lambda=10^{-3}, \lambda=10^{-4}, \lambda = 0, Kn=\infty$, $\Delta x=0.05$ and CFL=0.9. Left: time evolution of the electric field. Right: time evolution of timestep.}
    \label{fig:lldkninfugksqn}
\end{figure}

\begin{figure}
    \centering
    \begin{subfigure}[b]{0.48\textwidth}
    \centering
    \includegraphics[width=0.95\linewidth]{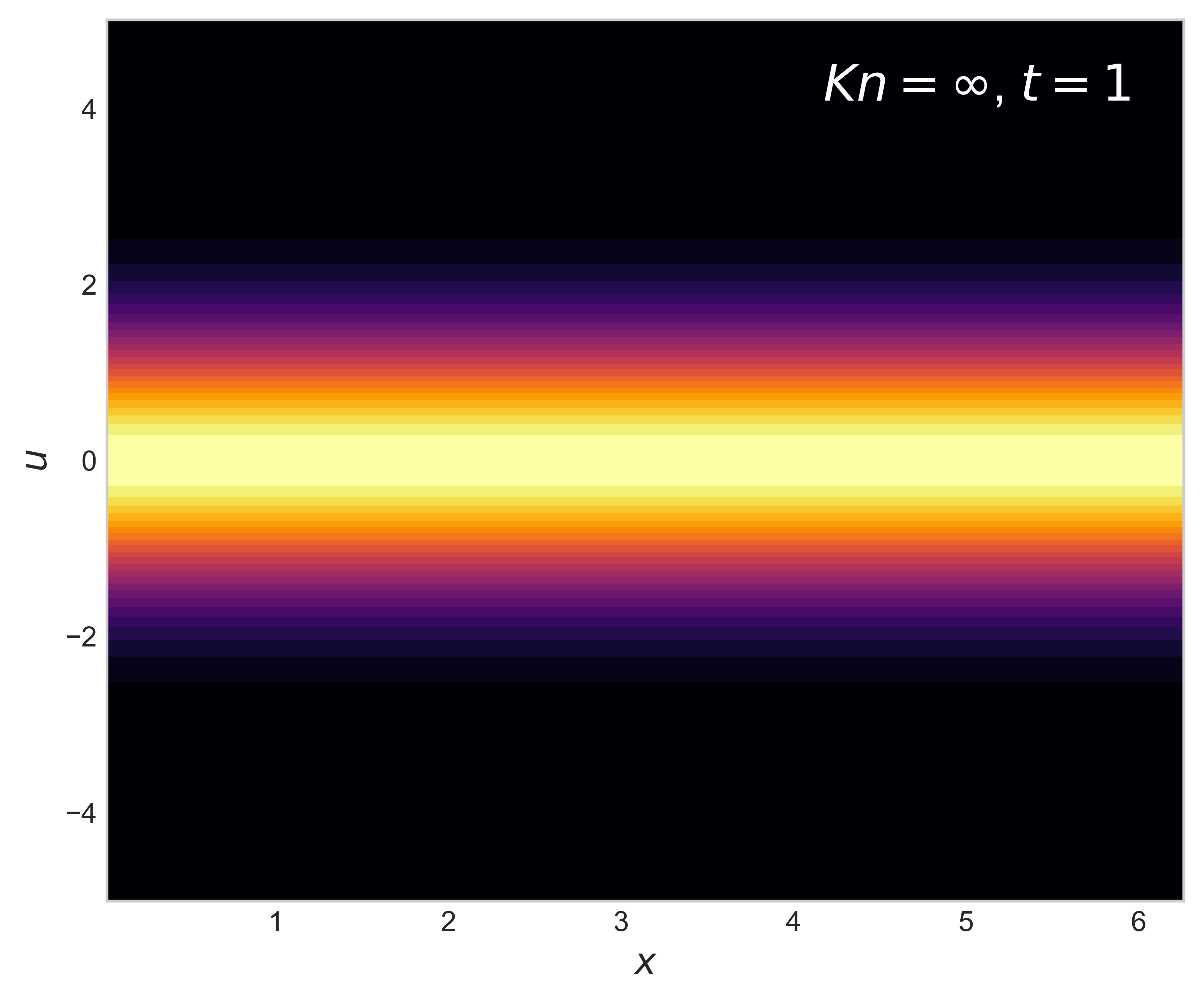}
    \end{subfigure}
    \hfill
    \begin{subfigure}[b]{0.48\textwidth}
    \centering
    \includegraphics[width=0.95\linewidth]{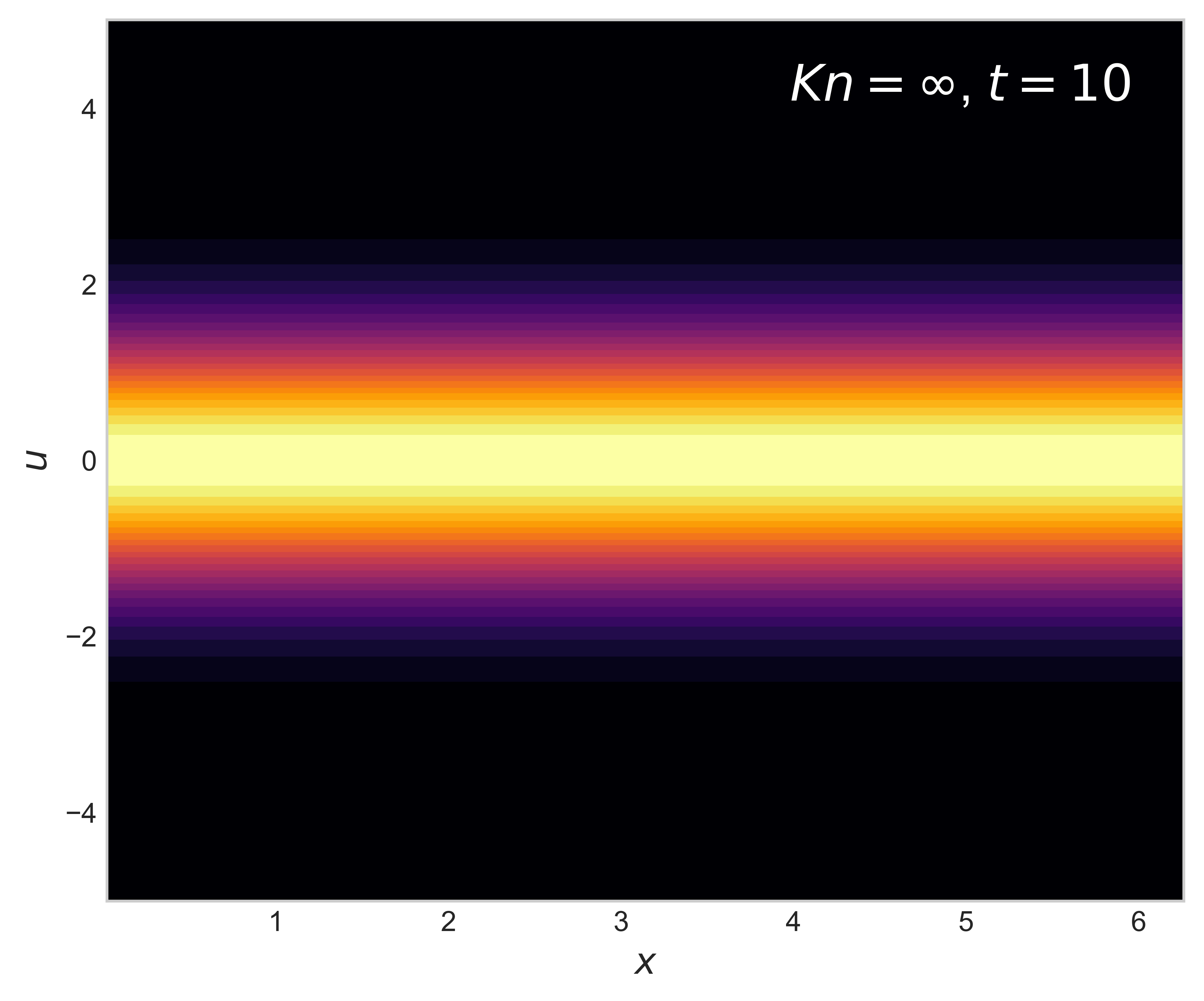}
    \end{subfigure}
    \caption{Linear Landau damping by UGKS-RPE, the quasi-neutral case $\lambda=10^{-4}, Kn=\infty$, $\Delta x=0.05$ and CFL=0.9. The Figure shows the phase diagram at t=1 and t=10.}
    \label{fig:lldkninfugksqn_phase}
\end{figure}

Figure \ref{fig:lld-us} displays the results obtained using the UGKWP-RPE method for parameters $Kn=\infty$, $\lambda = 0.001$, and $\Delta x = 0.05$. These conditions result in grid spacing that is significantly larger than the Debye length. The time step was dictated by a CFL condition of 0.5, yielding $\Delta t \approx 0.025$. This corresponds to approximately 25 times the plasma oscillation period, $\omega_{pe}^{-1} = 0.001$. The left panel illustrates the temporal evolution of the electric field energy. This energy profile exhibits an initial rapid decay, followed by a prolonged period of stability. Such behavior confirms the algorithm's ability to suppress high-frequency oscillations even when the corresponding spatial and temporal scales are not explicitly resolved. This observation is in agreement with prior findings in the literature \cite{degond2010asymptotic}. The right panel presents the electron number density distribution at $T = 40$. Throughout the domain, the electron density remains close to $n = 1$, indicating that the system maintains quasi-neutrality, consistent with the ion number density also being unity.

\begin{figure}
\begin{subfigure}[b]{0.48\textwidth}
\centering
    \includegraphics[width=1.0\textwidth]{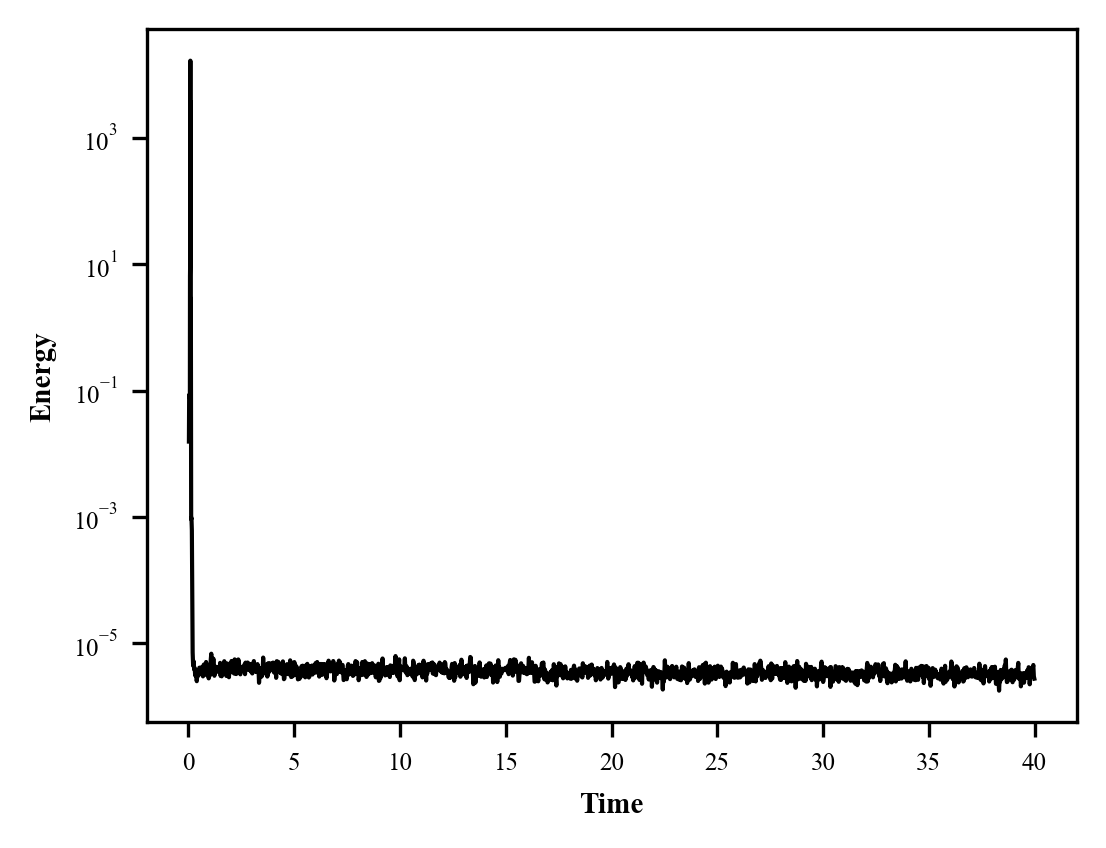}
\label{fig:lld-us-e}
\end{subfigure}
\begin{subfigure}[b]{0.48\textwidth}
\centering
    \includegraphics[width=1.0\textwidth]{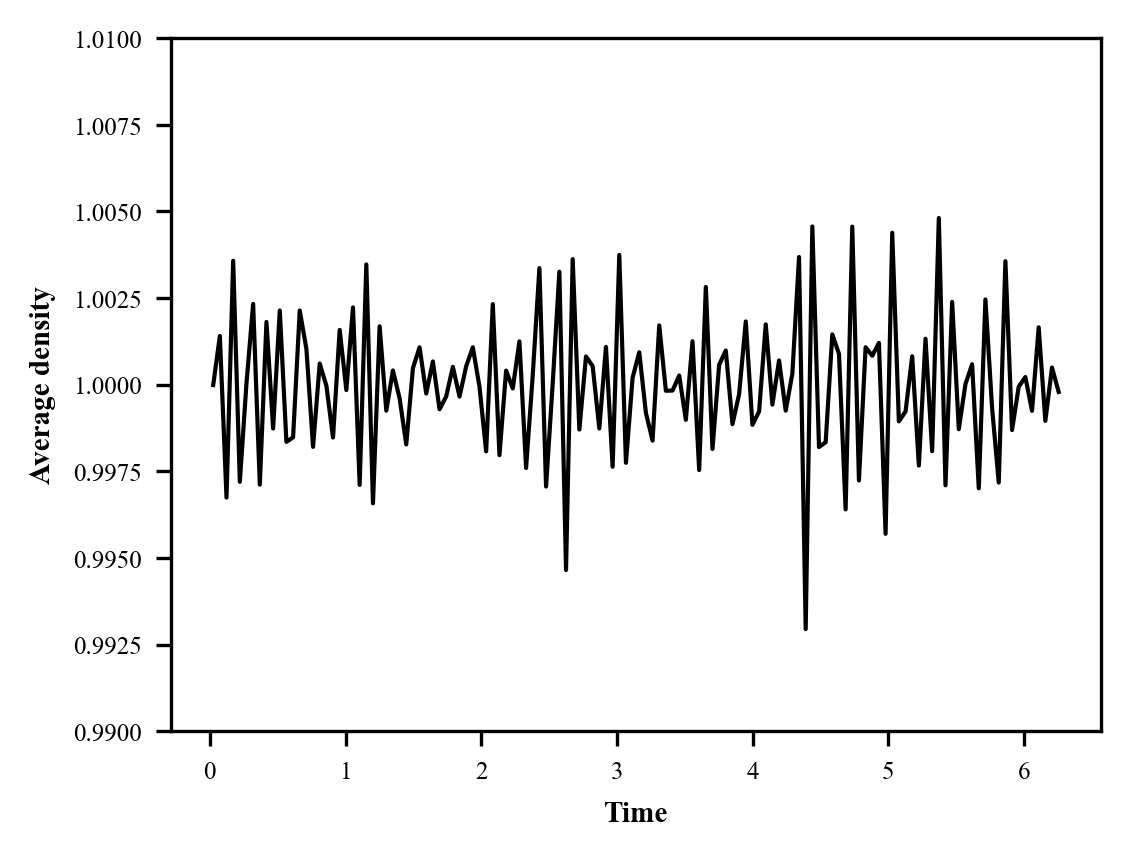}
\label{fig:lld-us-n}
\end{subfigure}
\caption{Linear Landau damping case by UGKWP-RPE of under-resolved cases $k=1.0, \lambda=0.001, N=128, \Delta x=0.05\gg\lambda, \omega_{pe}\Delta t \approx 25 > 1$. Left: Evolution of electrical energy, right: Averaged number density.}
\label{fig:lld-us}
\end{figure}

Figure \ref{fig:ugksqnfluid} presents the results obtained using the UGKS-RPE in the quasineutral regime ($\lambda = 10^{-6}$) across a range of rarefaction regimes, characterized by Knudsen numbers of $Kn = \infty, 10, 1, 10^{-4}$. These simulations employed a spatial discretization of $\Delta x = 0.05$ and a CFL number of 0.9. The left panel of Figure \ref{fig:ugksqnfluid} illustrates the temporal evolution of the electrical energy. The asymptotic solution of the quasineutral regime is clearly resolved. As the Knudsen number decreases, indicating a transition to a fluid-like state, the damping rate decreases and ultimately vanishes in the small-Knudsen-number limit ($Kn = 10^{-4}$). This behavior is consistent with the expected physical picture: the kinetic wave-particle interactions weaken and eventually disappear as the plasma transitions from a particle-dominated to a fluid-dominated state. The right panel of Figure \ref{fig:ugksqnfluid} depicts the temporal evolution of the timestep. The timestep exhibits minor oscillations around a value of $4 \times 10^{-3}$, but critically, it remains independent of both the Debye length and the Knudsen number.

\begin{figure}
    \centering
    \begin{subfigure}[b]{0.48\textwidth}
    \centering
    \includegraphics[width=0.95\linewidth]{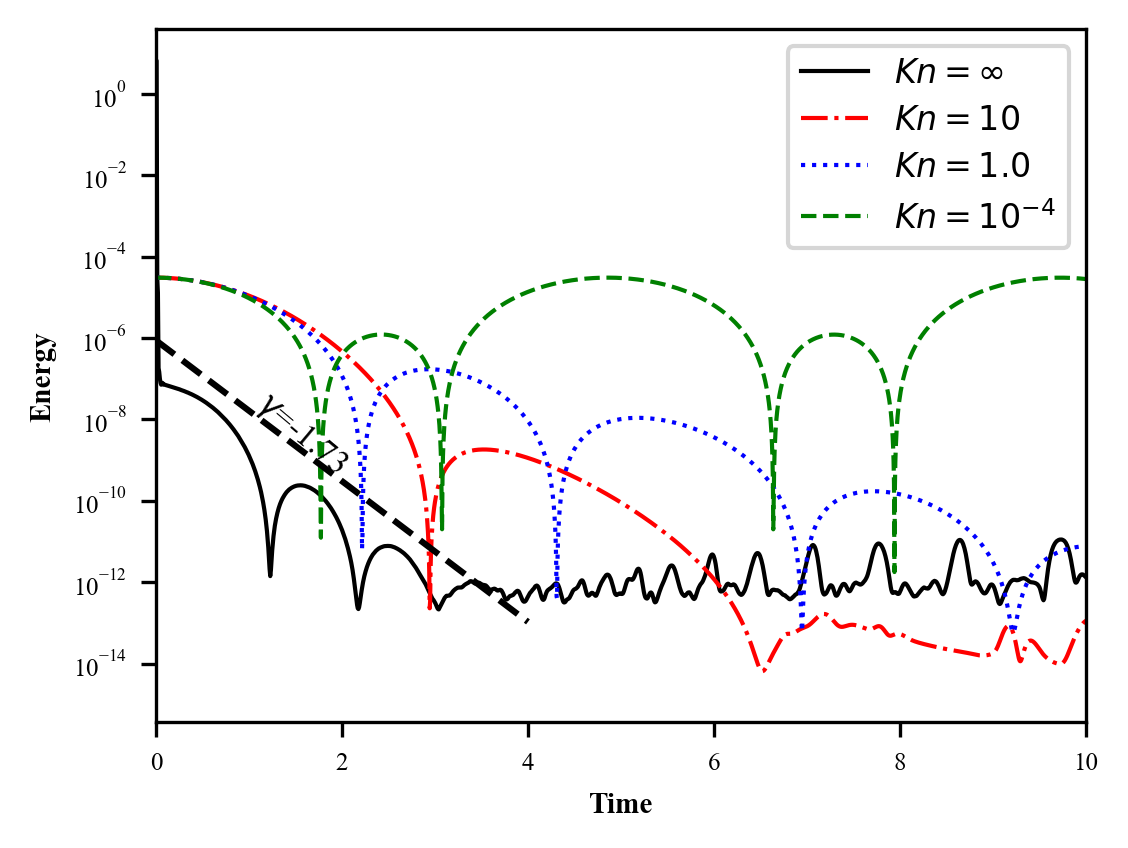}
    \end{subfigure}
    \hfill
    \begin{subfigure}[b]{0.48\textwidth}
    \centering
    \includegraphics[width=0.95\linewidth]{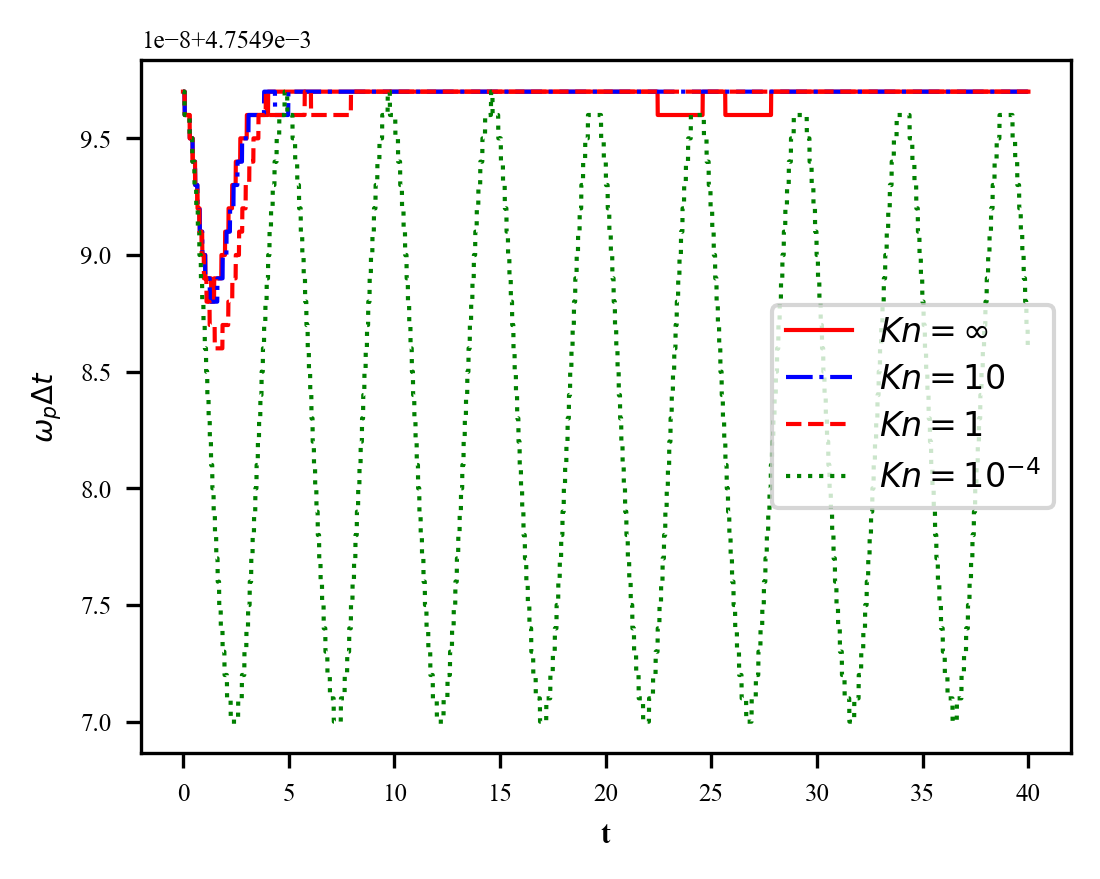}
    \end{subfigure}
    \caption{Linear Landau damping by UGKS-RPE, the quasineutral and hydrodynamic limit case $\lambda=10^{-6}, Kn=\infty, 10, 1$  and $10^{-4}$, $\Delta x=0.05$ and CFL=0.9. Left: time evolution of the electric field. Right: time evolution of timestep.}
    \label{fig:ugksqnfluid}
\end{figure}

Figure \ref{fig:lldkncompare} presents the results obtained using the UGKWP-RPE method in the quasineutral regime ($\lambda = 0.0001$) for varying Knudsen numbers $Kn=0.1, 0.01, 0.001, 0.0001$. The left panel illustrates the temporal evolution of the electrical energy. The simulations remain stable across all Knudsen numbers investigated. The electrical energy increases with increasing Knudsen number. This trend is attributed to the enhanced sampling noise arising from the increased number of particles at higher Knudsen numbers, which consequently elevates the overall electrical energy. The right panel of Figure \ref{fig:lldkncompare} depicts the temporal evolution of the timestep. The timestep is demonstrably independent of both the Debye length and the Knudsen number.
Furthermore, the timestep increases as the Knudsen number decreases. This behavior is attributed to the increasing proportion of the system behaving as a fluid. As the fluid component dominates, the influence of fast electrons on the timestep diminishes, allowing for a larger overall timestep at a fixed CFL number.

\begin{figure}
    \centering
    \begin{subfigure}[b]{0.48\textwidth}
        \includegraphics[width=1.0\linewidth]{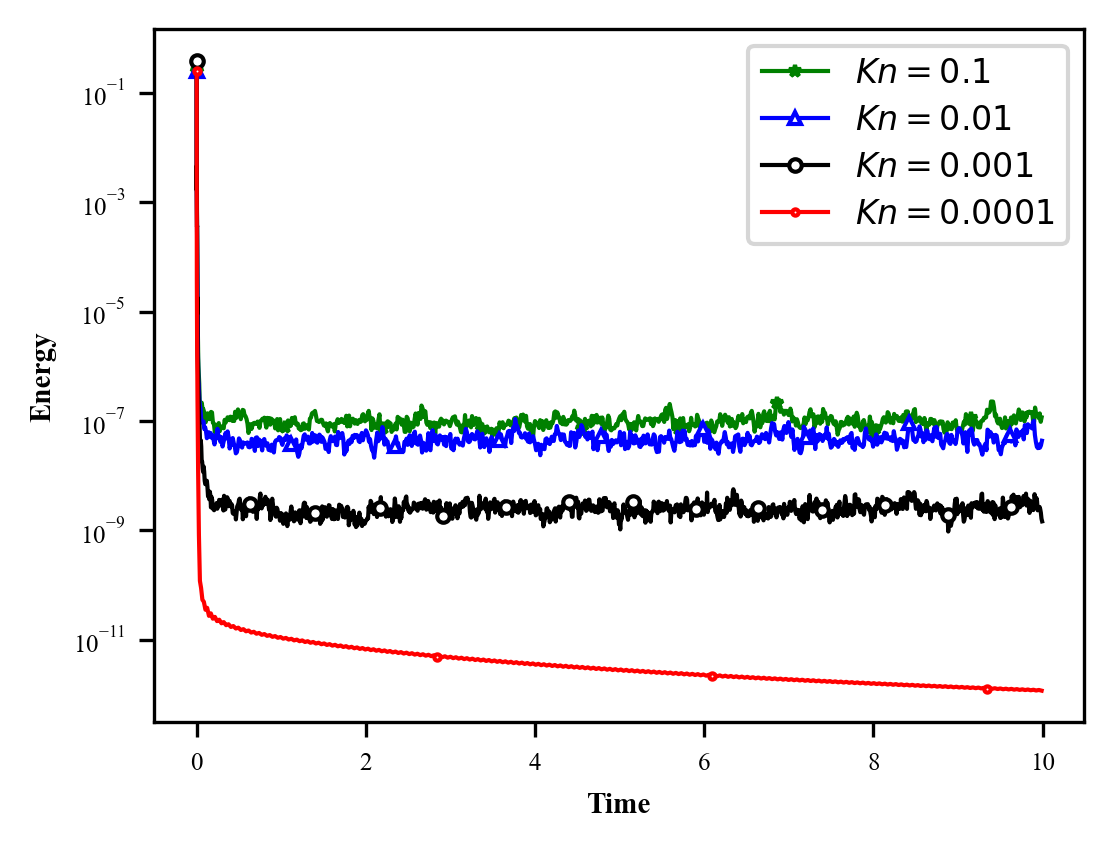}
    \end{subfigure}
    \begin{subfigure}[b]{0.48\textwidth}
        \includegraphics[width=1.0\textwidth]{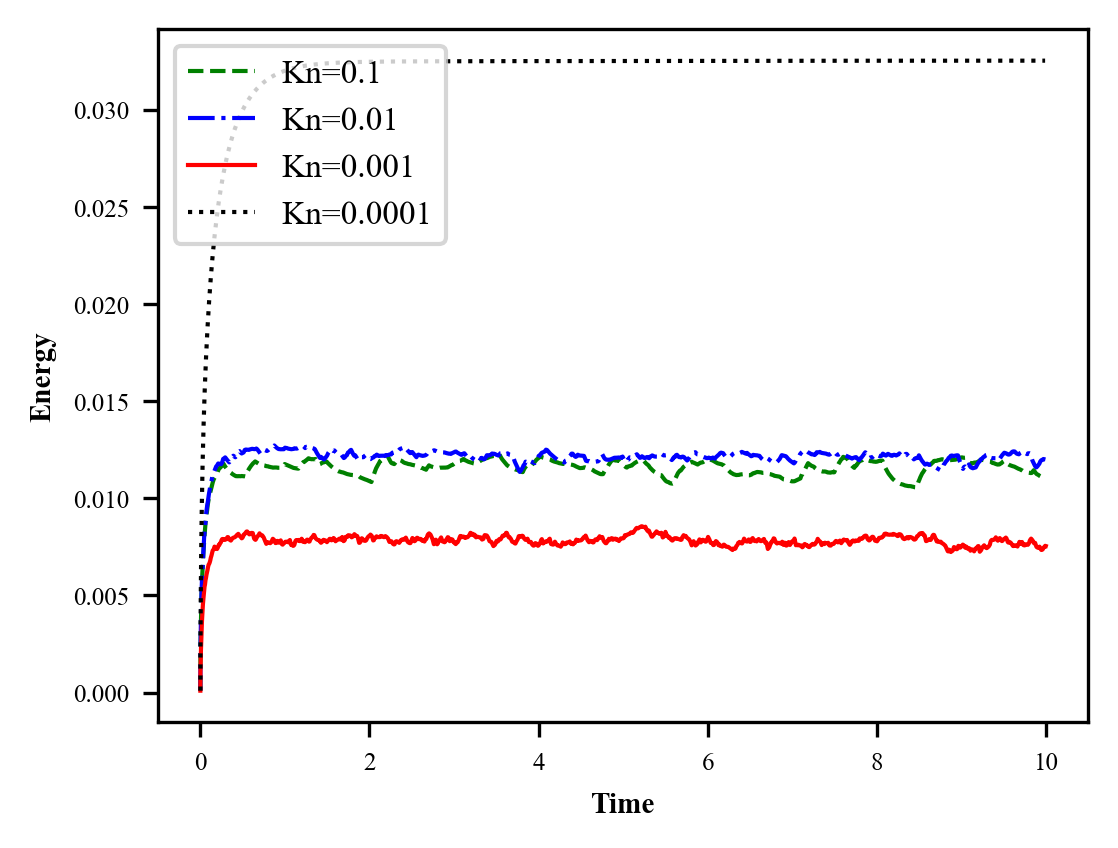}
    \end{subfigure}
    \caption{Linear Landau damping by UGKWP-RPE of under-resolved cases of varying Knudsen number $Kn = 0.1, 0.01, 0.001, 0.0001$ and $\lambda = 0.0001, k=1.0, N=128, \Delta x=0.05\gg\lambda, \omega_{pe}\Delta t > 1, \text{CFL}=0.5$. Stability is maintained in all cases. Left: evolution of electrical energy, right: evolution of timestep.}
    \label{fig:lldkncompare}
\end{figure}

\section{Conclusion}
\label{sec:conclusion}

In this study, we extend the UGKS and the UGKWP methods to model multiscale electrostatic plasmas by integrating them with the Reformulated Poisson Equation. The coupled collision–transport numerical flux inherent in these schemes removes the limitations associated with the mean free path and mean collision time. Meanwhile, the RPE formulation, through its implicit coupling with the macroscopic moment equations, permits the use of larger time steps even in regimes with vanishingly small Debye lengths.
Benchmark tests demonstrate that the proposed UGKS–RPE and UGKWP–RPE methods accurately simulate electrostatic plasma dynamics without constraints imposed by either the mean free path or the Debye length. This work establishes a solid foundation for the future application of the UGKWP–RPE framework to the efficient modeling of complex hypersonic plasma flows.

\section*{Acknowledgements}
The current research is supported by National Key R\&D Program of China (Grant Nos. 2022YFA1004500), National Science Foundation of China (12172316, 92371107), and Hong Kong research grant council (16301222, 16208324).

\bibliographystyle{elsarticle-num}
\bibliography{ref}

\end{document}